\def\doublecol{0}

\if \doublecol1
    \documentclass[journal]{IEEEtran}
\else
    \documentclass[journal, onecolumn]{IEEEtran}
\fi

\usepackage{amsmath, bm, amssymb, algorithm, algpseudocode, mathtools, svg,ifluatex, pgfplots, amsthm, caption, subcaption,bookmark,hyperref,cite,multirow,url,tikz,bbm, orcidlink, enumitem,soul}
\usetikzlibrary{graphs, positioning, quotes, shapes.geometric}
\usepackage[acronym,shortcuts]{glossaries}
\allowdisplaybreaks

\pgfplotsset{compat=1.15}

\newcommand{\br}[1]{\left( #1 \right)}
\newcommand{\brsq}[1]{\left[ #1 \right]}
\newcommand{\brcur}[1]{\left\{ #1 \right\}}

\newcommand{\expc}[2]{\mathbb{E}_{#1}\brsq{#2}}
\newcommand{\expcs}[2]{\mathbb{E}_{#1}[#2]}
\newcommand{\var}[2]{\mathrm{Var}_{#1}[#2]}

\newcommand{\pr}[1]{\mathrm{Pr}\brcur{#1}}

\newcommand{\hS}{\hat{S}}
\newcommand{\hM}{\hat{M}}

\newcommand{\calS}{\mathcal{S}}
\newcommand{\calR}{\mathcal{R}}

\newcommand{\calC}{\mathcal{C}}
\newcommand{\calO}{\mathcal{O}}
\newcommand{\calX}{\mathcal{X}}
\newcommand{\calN}{\mathcal{N}}

\newcommand{\calY}{\mathcal{Y}}
\newcommand{\calZ}{\mathcal{Z}}
\newcommand{\calU}{\mathcal{U}}
\newcommand{\calV}{\mathcal{V}}
\newcommand{\calT}{\mathcal{T}}
\newcommand{\calW}{\mathcal{W}}
\newcommand{\calM}{\mathcal{M}}

\newcommand{\calP}{\mathcal{P}}

\newcommand{\typset}[2]{\calT^{(n)}_{\epsilon_{#1}}({#2})}

\newcommand{\SEQ}{\mathrm{seq}}
\newcommand{\BC}{\mathrm{BC}}
\newcommand{\MAC}{\mathrm{MAC}}

\newcommand{\C}{\mathrm{C}}

\newcommand{\Bern}{\mathrm{Bern}}

\newcommand{\pddf}[2]{\calP_{#1, #2}}
\newcommand{\pdd}{\pddf{D_1}{D_2}}

\DeclareMathOperator*{\argmin}{arg\,min}

\newacronym{mimo}{MIMO}{multiple-input multiple-output}
\newacronym{mu}{MU}{multi-user}
\newacronym{ofdm}{OFDM}{orthogonal frequency-division multiplexing}
\newacronym{5g}{5G}{fifth generation}
\newacronym{6g}{6G}{sixth generation}
\newacronym{b5g}{B5G}{Beyond 5G}

\newacronym{crlb}{CRB}{Cram\'er-Rao bound}
\newacronym{bcrb}{BCRB}{Bayesian Cram\'er-Rao bound}
\newacronym{bfim}{BFIM}{Bayesian Fisher information matrix}
\newacronym{repms}{REPMS}{Riemannian Exact Penalty Method via Smoothing}
\newacronym{srepms}{SREPMS}{Stochastic Riemannian Exact Penalty Method via Smoothing}
\newacronym{sdr}{SDR}{semidefinite relaxation}
\newacronym{sdp}{SDP}{semidefinite programming}
\newacronym{mmwave}{mmWave}{millimeter wave}
\newacronym{isac}{ISAC}{integrated sensing and communications}
\newacronym{jsmc}{JSMC}{joint state and message communications}
\newacronym{aoa}{AoA}{angle of arrival}
\newacronym{aod}{AoD}{angle of departure}
\newacronym{qos}{QoS}{Quality of Service}
\newacronym{th}{THz}{Terahertz}
\newacronym{dof}{DoFs}{degrees of freedom}
\newacronym{iot}{IoT}{Internet of Things}
\newacronym{uav}{UAV}{unmanned aerial vehicle}
\newacronym{tx}{Tx}{transmitter}
\newacronym{rx}{Rx}{receiver}
\newacronym{ula}{ULA}{uniform linear array}
\newacronym{dft}{DFT}{Discrete Fourier Transform}
\newacronym{re}{RE}{resource element}
\newacronym{dl}{DL}{Downlink}
\newacronym{ul}{UL}{Uplink}
\newacronym{psd}{PSD}{positive semidefinite}
\newacronym{mse}{MSE}{mean squared error}
\newacronym{svd}{SVD}{singular value decomposition}
\newacronym{cpu}{CPU}{central processing unit}
\newacronym{mui}{MUI}{multi-user interference}
\newacronym{mi}{MI}{mutual information}
\newacronym{srmo}{SRMO}{stochastic Riemannian manifold optimization}
\newacronym{srgd}{SRGD}{stochastic Riemannian gradient descent}
\newacronym{srcg}{SRCG}{stochastic Riemannian conjugate gradient}
\newacronym{kkt}{KKT}{Karush–Kuhn–Tucker}
\newacronym{as}{a.s.}{almost sure}
\newacronym{papr}{PAPR}{peak to average power ratio}
\newacronym{rd}{R-D}{rate-distortion}
\newacronym{map}{MAP}{maximum a posteriori}
\newacronym{mmse}{MMSE}{minimum mean squared error}
\newacronym{iid}{i.i.d}{independently and identically distributed}
\newacronym{dmc}{DMC}{discrete memoryless channel}
\newacronym{sddmc}{SD-DMC}{state-dependent discrete memoryless channel}
\newacronym{sddmbc}{SD-DMDBC}{state-dependent discrete memoryless degraded broadcast channel}
\newacronym{csi}{CSI}{channel state information}
\newacronym{csit}{CSIT}{channel state information at transmitter}
\newacronym{csir}{CSIR}{channel state information at receiver}
\newacronym{bc}{BC}{broadcast channel}
\newacronym{simo}{SIMO}{single-input-multiple-output}
\newacronym{bcd}{BCD}{block coordinate descent}
\newacronym{ba}{BA}{Blahut-Arimoto}
\newacronym{bs}{BS}{base station}
\newacronym{ue}{UE}{user equipment}

\newacronym{si}{SI}{side information}
\newacronym{sit}{SI-T}{side information at transmitter}
\newacronym{sir}{SI-R}{side information at receiver}
\newacronym{lln}{LLN}{law of large numbers}
\newacronym{cd}{C-D}{capacity-distortion}
\newacronym{sddmmac}{SD-DMMAC}{state-dependent discrete memoryless multiple access channel}

\newtheorem{theorem}{Theorem}
\newtheorem{lemma}{Lemma}

\newtheorem{prop}{Proposition}

\theoremstyle{definition}
\newtheorem{definition}{Definition}

\newtheorem{remark}{Remark}

\glsdisablehyper
\allowdisplaybreaks

\tikzset{%
    block-common/.style={draw, fill=white, minimum height=1.5em, minimum width=4em},
    block/.style={rectangle, block-common},
    smallblock/.style={draw, fill=white, minimum height=1em, minimum width=1em},
    bigblock/.style={block, minimum height=8em},
    txtblock/.style={block, align=center, minimum height=4em},
    txtbigblock/.style={bigblock, align=center},
    input/.style={inner sep=1pt},       
    output/.style={inner sep=1pt},      
    sum/.style = {draw, fill=white, circle, minimum size=1.1em, inner sep=0pt,
      font={\small$+$}},
    prod/.style = {draw, fill=white, circle, minimum size=1.1em, inner sep=0pt,
      font={\normalsize$\times$}},
    pinstyle/.style = {pin edge={to-,thin,black}}
}

%
\usepackage{cite}

\hyphenation{op-tical net-works semi-conduc-tor}

\begin{document}
\title{A Simultaneous Decoding Approach to Joint State and Message Communications}
%
%
%

\author{Xinyang~Li \orcidlink{0000-0001-7262-5948},~\IEEEmembership{Student Member,~IEEE,}
        Yiqi~Chen \orcidlink{0000-0002-4850-2072},~\IEEEmembership{Member,~IEEE,}\\
        Vlad~C.~Andrei \orcidlink{0000-0001-5443-0100},~\IEEEmembership{Student Member,~IEEE,}
        Aladin~Djuhera \orcidlink{0009-0005-1641-8801},~\IEEEmembership{Student Member,~IEEE,}\\
        Ullrich~J.~M\"onich \orcidlink{0000-0002-2390-7524},~\IEEEmembership{Senior Member,~IEEE,}
        and~Holger~Boche \orcidlink{0000-0002-8375-8946},~\IEEEmembership{Fellow,~IEEE}
\thanks{The authors are with the Department
of Electrical and Computer Engineering, Technical University of Munich, Munich, 80333 Germany (e-mail: \{xinyang.li, yiqi.chen, vlad.andrei, aladin.djuhera, moenich, boche\}@tum.de).}
}


\maketitle

\begin{abstract}
The \ac{cd} trade-offs for \ac{jsmc} over single- and multi-user channels are investigated, where the transmitters have access to generalized state information and feedback while the receivers jointly decode the messages and estimate the channel state. 
A coding scheme is proposed based on backward simultaneous decoding of messages and compressed state descriptions without the need for the Wyner-Ziv random binning technique. For the point-to-point channel, the proposed scheme results in the optimal \ac{cd} function. For the \ac{sddmbc}, the successive refinement method is adopted for designing multi-stage state descriptions. With the simultaneous decoding approach, the derived achievable region is shown to be larger than the region obtained by the sequential decoding approach that is utilized in existing works. As for the \ac{sddmmac}, in addition to the proposed method, Willem's coding strategy is applied to enable partial collaboration between transmitters through the feedback links. Moreover, the state descriptions are shown to enhance both communication and state estimation performance.
Examples are provided for the derived results to verify the analysis, either numerically or analytically. With particular focus, simple but representative \ac{isac} systems are also considered, and their fundamental performance limits are studied.
\end{abstract}

\begin{IEEEkeywords}
Joint state and message communications, capacity-distortion trade-off, simultaneous decoding. 
\end{IEEEkeywords}

%
\IEEEpeerreviewmaketitle

\glsresetall


\section{Introduction}\label{sec:intro}
%
%
%
%

In \ac{jsmc}, the goal is not only to transmit messages reliably but also to estimate the channel state accurately over a \ac{sddmc}. It is motivated by the growing demand for integrated systems that unify sensing and communications\cite{liu2022survey}, particularly in emerging applications such as the Internet of Things\cite{nguyen20216g} and autonomous driving\cite{wang2018networking}. For instance, in smart factories, wireless access points equipped with additional sensors such as cameras can assist a warehouse robot in localizing itself by encoding sensor information into messages. Similarly, in cellular networks, a \ac{bs} sends downlink signals to a \ac{ue} while simultaneously estimating the state of a target via the echo signals. 

The information-theoretic limits of \ac{jsmc} can be analyzed from the perspective of the \ac{cd} trade-off, describing the relationship between the maximum achievable data rate and state estimation distortion. 
Existing works\cite{sutivong2002rate, kim2008state, zhang2011joint, choudhuri2013causal, bross2017rate, bross2020message} focus on different problem settings and most of them concentrate on single user cases. 
In\cite{sutivong2002rate} and \cite{kim2008state}, the authors consider point-to-point scenarios where the channel state information is noncausally available at the transmitter. Subsequently, the authors in\cite{zhang2011joint} explore the \ac{cd} function in the absence of the state information. The scenarios of strictly causal and causal state information at the transmitter are studied in\cite{choudhuri2013causal}, and the impact of additional channel feedback is analyzed in\cite{bross2017rate}. An extension to the two-user degraded broadcast channel is carried out in~\cite{bross2020message}, where the stronger receiver performs the joint task, and the other one only decodes the messages. 

The coding scheme achieving the \ac{cd} function/region when state information is strictly causally or causally available at the encoder\cite{choudhuri2013causal} is based on block Markov coding and Wyner-Ziv compression\cite{wyner1976rate}. In particular, the encoder compresses the state information from the previous transmission block using the Wyner-Ziv random binning method, and the bin index is encoded along with the fresh message into the transmit signal in the current block. The decoder, upon receiving the channel output, sequentially recovers the message with the bin index and then the finer Wyner-Ziv index within that bin. If perfect feedback is also present at the transmitter, \cite{bross2017rate} shows that there is no need to adopt random binning, and the state description can be decoded simultaneously with the message.

Based on the existing works, the studied channel model in this paper is equipped with more extended settings. Specifically, we consider a \ac{sddmc} where the transmitter has access to the generalized version of the channel state in a strictly causal manner,
modeled as strictly causal \ac{sit} generated memorylessly conditioned on the channel state, and feedback is assumed to be a deterministic function of the channel output.
Such a model includes but is not restricted to the settings of perfect/absent channel state and feedback. 
We propose a new coding framework that is based on backward simultaneous decoding and does not need the Wyner-Ziv random binning method used in existing literature. This results in the optimal \ac{cd} function for the point-to-point channel, an extension of the results in \cite{choudhuri2013causal, bross2017rate}. 

The proposed coding framework is then applied to the \ac{sddmbc} with the same generalized setup. Different from the model in\cite{bross2020message}, where only the stronger decoder wishes to estimate the channel state, we assume both receivers have the joint task. An achievable \ac{rd} region is derived based on the proposed scheme and successive refinement strategy\cite{steinberg2004successive}, better than the region obtained from the Wyner-Ziv encoding and sequential decoding\cite{li2024analysis}. The benefits come from the fact that the simultaneous decoding of messages and state descriptions can get rid of many redundant decoding error events, and, on the other hand, the state description for the weaker user can assist in the decoding steps of the stronger user.

Lastly, we investigate the \ac{jsmc} over a \ac{sddmmac}, with different noisy \ac{sit} and feedback available at transmitters. Willem's strategy\cite{willems1983achievable}, which enables partial collaboration between encoders through the feedback links, is combined with the proposed coding scheme to establish an achievable \ac{rd} region. It benefits from the simultaneous decoding not only between messages and state descriptions but also between shared and private messages and thus is shown to be larger than the \ac{rd} region derived in our previous work\cite{Li2411:Achievable}, which is based on the coding scheme in\cite{lapidoth2012multiple} and the sequential decoding approach. Notably, in the case of no feedback, such result reduces to the rate region in \cite{li2012multiple}. This motivates us to analyze the usages of state descriptions, 
that is, in addition to state reconstruction, they are also capable of enhancing communication performance, as studied in\cite{lapidoth2012multiple,li2012multiple}.

While the \ac{cd} function or region is known for only a limited number of scenarios, our work advances this field by proposing novel coding strategies that not only achieve the optimal \ac{cd} function in point-to-point settings but also outperform existing approaches in multi-user channels. These insights are not only theoretically significant but also serve as practical benchmarks for designing efficient transmission schemes. Along with the theoretical analysis, examples are provided for each studied scenario, including quadratic-Gaussian and binary symmetric channels. Moreover, we show that by interpreting the \ac{sit} and feedback in different ways, the studied channels are able to model different types of \ac{isac} systems. As special cases, the monostatic-downlink and monostatic-uplink \ac{isac} are considered, whose \ac{cd} functions are derived.

\section{Notation and Preliminaries}\label{sec:notations}
Random variables and their realizations are denoted by uppercase and lowercase letters like $X$ and $x$, whose sample spaces are their calligraphic version like $\calX$. $\varnothing$ stands for the empty set.
$X\sim P_X$ indicates that $X$ follows the distribution $P_X$. As special cases, $\calN(\mu, \sigma^2)$ stands for Gaussian distribution with mean $\mu$ and variance $\sigma^2$, while $\Bern(p)$ for Bernoulli distribution whose probability at $1$ is $p$. $\expcs{}{X}$ and $\var{}{X}$ are the expectation and variance of $X$, respectively. We use $X^n \triangleq (X_1, X_2, ..., X_n)$ and $x^n\triangleq (x_1, x_2, ..., x_n)$ to denote the sequences of random variables and their realizations of length $n$, and a subsequence $X_i^n$ stands for $(X_i, X_{i+1}, ..., X_n)$ if $i\le n$. Given $P_X$, $x^n$ being generated from $P_X^n$ implies that $x_i$ is generated \ac{iid} from $P_X$ for all $i\in [n]$ with $P_X^n(x^n) \triangleq \prod_{i=1}^n P_X(x_i)$.
For a positive integer $M$, we use the notation $[M] \triangleq \{1,2,...,M\}$. 
The function $\mathbbm{1}\{x = y\}$ is indicator function, taking value $1$ if $x=y$ and $0$ else. $x \oplus y$ is the modulo-2 sum of $x$ and $y$.

Let $X \sim P_X$ and $x^n$ be a sequence generated \ac{iid} from $P^n_X$. Let $\epsilon >0$, a sequence $x^n$ is said to be $\epsilon$-typical if
\begin{equation}
    \left| \frac{N(a|x^n)}{n} - P_X(a) \right| \le \epsilon\cdot P_X(a), \quad \forall a \in \calX,
\end{equation}
where $N(a | x^n)$ counts the number of occurrences of $a$ in $x^n$. The set of all $\epsilon$-typical $x^n$ is called $\epsilon$-typical set with respect to $P_X$, denoted as $\typset{}{P_X}$. Similarly, the jointly typical set $\typset{}{P_{XY}}$ contains all jointly typical sequences $(x^n, y^n)$ with respect to $P_{XY}$. Given $x^n \in \typset{}{P_X}$, the conditionally typical set $\typset{}{P_{XY} | x^n}$ is defined as the set of all $y^n$ such that $(x^n, y^n) \in \typset{}{P_{XY}}$.

Given the random variables $(S, V, W) \sim P_{SVW}$, we define a function $h:\calV\times \calW \to \hat{\calS}$ to estimate $S$ based on $(V,W)$ with $\hat{\calS}$ the reconstruction set. Let the estimation error be measured by a distortion function $d: \calS \times \hat{\calS} \rightarrow [0, \infty)$, the following lemmas provide important properties for the choice of $h$.

\begin{lemma}\cite{zhang2011joint,ahmadipour2022information}\label{lemma:optest}
    Given a joint distribution $P_{SVW}$, the minimum expected distortion $\expcs{}{d(S,\hS)}$ is achieved by the optimal estimator
    \begin{equation}
    \begin{split}
        h^*(v,w) &= \argmin_{\hat{s}\in \hat{\calS}}\expc{}{d(S,\hat{s})|V=v,W=w}\\
        &= \argmin_{\hat{s}\in \hat{\calS}} \sum_{s\in \calS} P_{S|VW}(s|v,w)d(s,\hat{s}).\label{eq:optest}
    \end{split}
    \end{equation}
\end{lemma}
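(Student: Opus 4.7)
The plan is to invoke the tower property and then minimize pointwise, which is the standard argument for Bayes-optimal estimators under an arbitrary distortion measure. First I would rewrite the objective as
\begin{equation*}
\expcs{}{d(S,\hS)} = \expcs{VW}{\expcs{S|VW}{d(S, h(V,W))}},
\end{equation*}
using the fact that $\hS = h(V,W)$ is a deterministic function of $(V,W)$ and the tower property of expectation applied to the joint law $P_{SVW}$.

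Next I would observe that for each fixed realization $(v,w)$, the inner conditional expectation $\expcs{S|V=v,W=w}{d(S, h(v,w))}$ depends on $h$ only through its value $h(v,w) \in \hat{\calS}$ at that specific point. Because the outer expectation is a nonnegative weighted average and the choice of $h(v,w)$ for different $(v,w)$ are independent degrees of freedom, the global minimum is attained by minimizing the integrand separately for every $(v,w)$. This yields exactly the estimator $h^*$ in~\eqref{eq:optest}, where the second equality follows by expanding the conditional expectation as $\sum_{s\in\calS} P_{S|VW}(s|v,w)\, d(s,\hat{s})$.

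Finally I would close the argument by noting that any other estimator $h$ satisfies $\expcs{S|V=v,W=w}{d(S, h(v,w))} \ge \expcs{S|V=v,W=w}{d(S, h^*(v,w))}$ for every $(v,w)$ by the defining property of the pointwise minimizer, so taking expectation over $(V,W)$ preserves the inequality and establishes optimality of $h^*$. There is no real obstacle here; the only mild subtlety is ensuring that the pointwise minimizer exists and can be chosen measurably, which is automatic because $\hat{\calS}$ is assumed to be a (finite or countable) reconstruction alphabet in the discrete setting considered in the paper, so a minimizing $\hat{s}$ exists for each $(v,w)$ and ties may be broken arbitrarily without affecting the value of $\expcs{}{d(S,\hS)}$.
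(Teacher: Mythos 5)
Your proposal is correct and follows essentially the same route as the paper's own proof, which likewise applies the law of total expectation conditioned on $(V,W)$ and then minimizes pointwise for each realization $(v,w)$. Your write-up simply makes explicit the final step (that the pointwise inequality is preserved under the outer expectation) and the measurability/tie-breaking remark, both of which are consistent with the paper's one-line argument.
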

\begin{proof}
    The proof follows by applying the law of total expectations on $\expcs{}{d(S,\hS)}$ conditioned on $V=v$ and $W=w$, and taking the minimum for each individual $(v,w)$.
\end{proof}

\begin{lemma}\cite{choudhuri2013causal}\label{lemma:markovest}
    Given a Markov chain $S-V-W$ and a distortion function $d(s, \hat{s})$, for every estimation function $h(v,w)$, there exists a $h'(v)$ such that
    \begin{equation}
        \expcs{}{d(S, h'(V))} \le \expcs{}{d(S, h(V,W))}.
    \end{equation}
\end{lemma}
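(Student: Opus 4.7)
The plan is to reduce Lemma~\ref{lemma:markovest} directly to Lemma~\ref{lemma:optest} by exploiting the Markov property $S - V - W$ to show that the Bayes-optimal estimator can always be chosen independent of $W$.

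First, I would invoke Lemma~\ref{lemma:optest} applied to the joint distribution $P_{SVW}$: there exists an estimator
\begin{equation}
h^*(v,w) = \argmin_{\hat{s}\in \hat{\calS}} \sum_{s\in \calS} P_{S|VW}(s|v,w)\, d(s,\hat{s})
\end{equation}
such that for every $h:\calV\times\calW \to \hat{\calS}$, $\expcs{}{d(S, h^*(V,W))} \le \expcs{}{d(S, h(V,W))}$. So it suffices to exhibit a function $h'$ of $v$ alone that matches $h^*$ in expected distortion.

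Next, I would apply the Markov chain $S - V - W$, which is equivalent to $P_{S|VW}(s|v,w) = P_{S|V}(s|v)$ for every $(v,w)$ in the support of $P_{VW}$. Substituting into the objective of the $\argmin$ shows that the quantity $\sum_s P_{S|VW}(s|v,w)\, d(s,\hat{s})$ depends on $(v,w)$ only through $v$. I would then define
\begin{equation}
h'(v) \;\triangleq\; \argmin_{\hat{s}\in \hat{\calS}} \sum_{s\in \calS} P_{S|V}(s|v)\, d(s,\hat{s}),
\end{equation}
fixing a consistent tie-breaking rule so that $h'$ is a well-defined deterministic function on $\calV$. By construction, $h'(v) = h^*(v,w)$ for all $(v,w)$ in the support of $P_{VW}$, hence $h'(V) = h^*(V,W)$ almost surely.

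Chaining the two inequalities gives $\expcs{}{d(S, h'(V))} = \expcs{}{d(S, h^*(V,W))} \le \expcs{}{d(S, h(V,W))}$, proving the claim. The only subtle point, rather than a real obstacle, is ensuring that $h'$ is a bona fide function of $v$ when the $\argmin$ is not unique; this is resolved by fixing any measurable selection rule that depends only on $v$, which is possible because the objective itself already depends only on $v$.
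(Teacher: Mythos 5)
Your proposal is correct and is essentially the same argument as the one the paper defers to in \cite{choudhuri2013causal}: condition on $(V,W)$, use the Markov property to replace $P_{S|VW}$ by $P_{S|V}$ so the Bayes objective depends only on $v$, and take the pointwise minimizer $h'(v)$ with a tie-breaking rule measurable in $v$ alone. No gaps; the handling of non-unique minimizers is the only delicate point and you address it adequately.
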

\begin{proof}
    See \cite[Appendix A]{choudhuri2013causal}.
\end{proof}
Combining Lemma~\ref{lemma:optest}~and~\ref{lemma:markovest}, one can conclude that the optimal estimator of $S$ can only depend on $V$ if $S-V-W$ forms a Markov chain. In the following, we use $h^*$ to indicate all optimal estimators defined in the form of~\eqref{eq:optest}, irrespective of its domain and codomain, if there is no confusion.

Common distortion functions include the squared error $d(s,\hat{s}) = (s-\hat{s})^2$ and the Hamming distance $d(s,\hat{s}) = d_H(s,\hat{s}) = \mathbbm{1}\{s = \hat{s}\}$. Given a joint distribution $P_{SVW}$, the optimal estimator for the squared error based on observation $V=v$ and $W=w$ is the conditional mean $\expcs{}{S|V=v,W=w}$, and the resulting \ac{mmse} is the conditional variance $\var{}{S|V,W}$. In subsequent examples, we will also encounter the case $V = S\oplus W$ with binary independent sources $S$ and $W$, and the Hamming distance as distortion measure, the next lemma gives the corresponding minimum distortion based on the observation $V$.
\begin{lemma}\label{lemma:bernoptest}
    Let $V = S\oplus W$ with $S\sim \Bern(p_1)$, $W\sim \Bern(p_2)$ independent of each other and $0\le p_1, p_2\le \frac{1}{2}$. Given the distortion function $d(s,\hat{s}) = d_H(s,\hat{s}) = \mathbbm{1}\{s = \hat{s}\}$ and arbitrary estimator $h(v)$ based on the observation $V=v$, the minimum expected distortion $\expcs{}{d(S, h(V))}$ is  
    \begin{equation}
        \min_{h: \calV \to \calS}  \expcs{}{d(S, h(V))} = \min \{p_1, p_2\}.
    \end{equation}
\end{lemma}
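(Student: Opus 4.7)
The plan is to reduce the problem to a \ac{map} decoding computation via Lemma~\ref{lemma:optest}. Since $V = S\oplus W$ is a deterministic function of $(S,W)$ and no further observation $W$ is available, the optimal estimator from~\eqref{eq:optest} specialized to Hamming distortion is simply $h^*(v) = \arg\max_{s\in\{0,1\}} P_{S|V}(s|v)$, and the resulting minimum expected distortion equals $\sum_v P_V(v)\cdot(1 - \max_s P_{S|V}(s|v))$. So the whole lemma boils down to a four-entry computation of the posterior $P_{S|V}$ and then an elementary case split.

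For the posterior, I would write $P_{S,V}(s,v) = P_S(s)P_W(v\oplus s)$, giving the four joint probabilities $(1-p_1)(1-p_2)$, $p_1p_2$, $(1-p_1)p_2$ and $p_1(1-p_2)$ for $(s,v)\in\{0,1\}^2$. The assumption $p_1,p_2\le \tfrac{1}{2}$ immediately implies $(1-p_1)(1-p_2)\ge p_1p_2$, so $h^*(0)=0$ unconditionally; for $v=1$ the comparison $p_1(1-p_2)$ versus $(1-p_1)p_2$ reduces to the sign of $p_1-p_2$, so $h^*(1)=1$ when $p_1\ge p_2$ and $h^*(1)=0$ otherwise.

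Finally, I would plug the two cases back into the distortion formula. In the case $p_1\ge p_2$, the estimator is $h^*(v)=v$, so the error probability equals $\Pr\{S\neq V\}=\Pr\{W=1\}=p_2$; in the case $p_1<p_2$, the estimator is the constant $h^*(v)=0$, giving error probability $\Pr\{S\neq 0\}=p_1$. Both cases coincide with $\min\{p_1,p_2\}$, which establishes the claim. There is no real obstacle here: the only subtle point is the monotonicity argument using $p_1,p_2\le \tfrac{1}{2}$ that collapses the four-way comparison to the single threshold $p_1\gtrless p_2$, and ties at $p_1=p_2$ are harmless since both candidate estimators yield the same expected distortion.
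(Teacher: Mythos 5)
Your proposal is correct and follows essentially the same route as the paper: both apply Lemma~\ref{lemma:optest} to reduce the problem to a MAP comparison on the four joint probabilities $P_{SV}$, conclude $h^*(0)=0$ from $p_1,p_2\le\tfrac12$ and resolve $h^*(1)$ by the threshold $p_1\gtrless p_2$. The only cosmetic difference is the final evaluation — you read off the error probability per case ($\Pr\{W=1\}=p_2$ or $\Pr\{S=1\}=p_1$) while the paper sums the four terms to $p_1p_2+\min\{p_1\bar p_2,\bar p_1p_2\}$ — which is the same computation.
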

\begin{proof}
    Let $\bar{p} = 1-p$, the joint probability distribution of $P_{SV}$ is obtained as
    \begin{equation}
    \begin{split}
        &P_{SV}(0,0) = \bar{p}_1 \bar{p}_2, \ P_{SV}(0,1) = \bar{p}_1 p_2,\\
        &P_{SV}(1,0) = p_1 p_2, \ P_{SV}(1,1) = p_1 \bar{p}_2.
    \end{split}
    \end{equation}
    From Lemma~\ref{lemma:optest}, we have the optimal estimator 
    \begin{equation*}
    \begin{split}
        h^*(0) &= \argmin_{\hat{s}\in\{0,1\}} P_{S|V}(0|0) d_H(0,\hat{s}) + P_{S|V}(1|0) d_H(1,\hat{s})\\
        &= \argmin_{\hat{s}\in\{0,1\}} P_{SV}(0,0) d_H(0,\hat{s}) + P_{SV}(1,0) d_H(1,\hat{s})\\
        &= \argmin_{\hat{s}\in\{0,1\}} \bar{p}_1 \bar{p}_2 d_H(0,\hat{s}) + p_1 p_2  d_H(1,\hat{s})\\
        &= 0,
    \end{split}
    \end{equation*}
    due to $0\le p_1, p_2\le \frac{1}{2}$, and
    \begin{equation*}
    \begin{split}
        h^*(1) &= \argmin_{\hat{s}\in\{0,1\}} P_{SV}(0,1) d_H(0,\hat{s}) + P_{SV}(1,1) d_H(1,\hat{s})\\
        &= \argmin_{\hat{s}\in\{0,1\}} \bar{p}_1 p_2 d_H(0,\hat{s}) + p_1\bar{p}_2  d_H(1,\hat{s})\\
        &= \begin{cases}
            0, &  p_1\bar{p}_2 \le \bar{p}_1 p_2 \\
            1, & p_1\bar{p}_2 > \bar{p}_1 p_2
        \end{cases}.
    \end{split}
    \end{equation*}
    The resulting distortion is thus given by
    \begin{equation*}
    \begin{split}
        &\hspace{-6mm}\expcs{}{d(S, h^*(V))}\\
        =& P_{SV}(0,0) d(0, h^*(0)) + P_{SV}(0,1) d(0, h^*(1))\\
        &+ P_{SV}(1,0) d(1, h^*(0)) + P_{SV}(1,1) d(1, h^*(1))\\
        =& p_1 p_2 + \min \{p_1\bar{p}_2, \bar{p}_1 p_2\}\\
        =& \min \{p_1, p_2 \}.
    \end{split}
    \end{equation*}
\end{proof}

\section{Point-to-Point Channel}\label{sec:p2p}
\subsection{Channel Model}

The point-to-point \ac{jsmc} model is illustrated in Fig.~\ref{fig:p2pmodel}, where an encoder tries to communicate reliably with a decoder over a \ac{sddmc} $P_{Y|XS}$ and simultaneously assists the decoder in estimating the channel state $S$. More specifically, the encoder encodes a message
$M\in \calM=[2^{nR}]$ to a $n$-sequence $X^n$, whose elements are from the finite input alphabet $\calX$, with the help of \ac{sit} $S_T$ and the channel feedback if present. At each time step $i$, the channel state $S_i$ follows the distribution $P_S$, and the \ac{sit} is assumed to be available strictly causally. Unlike previous works, where it is assumed that $S_T=\varnothing$\cite{zhang2011joint} or $S_T=S$\cite{choudhuri2013causal}, we consider a generalized \ac{sit} generated from $P_{S_T|S}$. Upon receiving $Y^n$, the decoder jointly decodes the message $\hM$ and estimates the channel state $\hS^n$. 
It is assumed that $Y$ is statistically independent of $S_T$ conditioned on $(X,S)$. 
To analyze the impact of present and absent channel feedback under the same framework, we denote the feedback as
a deterministic function of the channel output $Z=\phi(Y) \in \calZ$, which includes but is not restricted to the cases $Z = Y$ and $Z=\varnothing$.

The model in Fig.~\ref{fig:p2pmodel} consists of the following components:
\begin{enumerate}
    \item An encoder $f_e^n = (f_{e,1}, f_{e,2}, ..., f_{e,n})$ with $f_{e,i}: \calM \times \calS_T^{i-1} \times \calZ^{i-1} \rightarrow \calX, \forall i\in[n]$;
    \item A message decoder $f_d: \calY^n \rightarrow \calM$;
    \item A state estimator $h^n$ with $h_i : \calY^n \rightarrow \hat{\calS}, \forall i\in[n]$.
\end{enumerate}

\begin{figure}[h]
    \centering
    
    \includegraphics[scale=0.9]{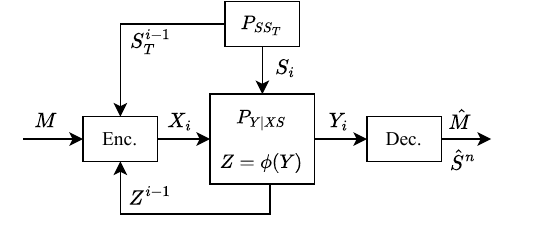}
    \caption{Channel model for point-to-point \ac{jsmc} system.}
    \label{fig:p2pmodel}
\end{figure}

The message decoding error probability is given by
\begin{equation}
    P_e^{(n)} \triangleq \frac{1}{|\calM|}\sum_{m\in \calM}\mathrm{Pr}\{f_d(Y^n) \neq M | M=m\}.
\end{equation}
We further define a distortion function to measure the estimation error as $d: \calS \times \hat{\calS} \rightarrow [0, \infty)$, such that the expected distortion is
\begin{equation}
    D^{(n)}\triangleq \expcs{}{d^n(S^n, \hS^n)} = \frac{1}{n}\sum_{i=1}^n\expcs{}{d(S_i, h_i(Y^n))}.
\end{equation}

\begin{remark}
    If \ac{sir} $S_R$ is available following $P_{S_R|S}$, $S_R$ and $Y$ can be viewed as joint channel outputs according to $P_{YS_R|XS}=P_{Y|XS}P_{S_R|S}$. 
    Therefore, \ac{sir} and normal channel outputs will not be distinguished in the following.
\end{remark}

\subsection{Main Results}

\begin{definition}\cite{zhang2011joint,ahmadipour2022information}\label{def:cdf}
    A rate-distortion pair $(R, D)$ is achievable if there exists a sequence of $(2^{nR}, n)$ codes such that 
    \begin{equation}
    \begin{split}
        \lim_{n\rightarrow \infty} P_e^{(n)} = 0,\ \limsup_{n\rightarrow \infty} D^{(n)} \le D.
    \end{split}
    \end{equation}
    The \ac{cd} function of the point-to-point model in Fig.~\ref{fig:p2pmodel}, denoted as $C(D)$, is the supremum of $R$ such that $(R, D)$ is achievable.
\end{definition}

\begin{remark}
Another criterion of interest in existing works\cite{ahmadipour2022information, zhang2011joint} is the input cost, which imposes additional constraints, such as the average input power, for the transmit signals $X$. Under this setting, Definition~\ref{def:cdf} is extended to the capacity-distortion-cost function. Because the additional input cost does not change the following analysis by adding the constraint on $P_X$ into the random variable set $\calP_D$ defined in \eqref{eq:pd}, it is omitted in this manuscript for the sake of simplicity.
\end{remark}

In \cite{choudhuri2013causal,bross2017rate}, the \ac{cd} functions for the cases $S_T = S$ and $Z = Y$ as well as $Z=\varnothing$ are derived. \cite{choudhuri2013causal} proves the achievability for the case $Z=\varnothing$, whose coding scheme can be extended to our setup but requires the Wyner-Ziv random binning technique and the decoding of message and state description is performed sequentially, as discussed later. 
In\cite{bross2017rate}, the coding scheme with perfect feedback does not need the random binning method, and the state description is simultaneously decoded with the message. However, it relies on the perfect knowledge of feedback at the encoder and becomes suboptimal for the general case $Z=\phi(Y)$. Here, we propose a new coding scheme in which the message and state description are decoded simultaneously in the backward direction, requiring neither the random binning technique nor perfect feedback at the encoder. 

Let $\calP_D$ be the set of all random variables $(X, V)\in \calX\times \calV$ and functions $h: \calX \times \calV \times \calY \to \hat{\calS}$
such that $V-(X,S_T,Z)-Y$ forms a Markov chain, i.e.,
\begin{equation}\label{eq:pd}
\begin{split}
    \calP_D &\triangleq \{(X,V,h) | P_{SS_TXYZV}(s, s_T, x, y,z, v)=P_{S_TS}(s_T,s)\\
    &\cdot P_{X}(x) P_{Y|XS}(y|x,s)\mathbbm{1}\{z=\phi(y)\}P_{V|XS_TZ}(v|x,s_T,z);\\
    &\expcs{}{d(S, h(X,V,Y))}\le D\}.
\end{split}
\end{equation}

\begin{theorem}\label{thm:cdp2p}
    The \ac{cd} function $C(D)$ for the channel model in Fig.~\ref{fig:p2pmodel} is given by
    \begin{equation}\label{eq:p2prdfunc}
        C(D) =  \max_{(X,V,h)\in \calP_D} I(X;Y) - I(V; S_T|X,Y).
    \end{equation}
\end{theorem}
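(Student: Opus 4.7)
The plan is to prove Theorem~\ref{thm:cdp2p} in two parts: achievability through a block-Markov scheme with backward simultaneous decoding, and the converse through Fano's inequality combined with an auxiliary identification that exploits both the strict causality of the \ac{sit} and the deterministic structure of the feedback.

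For achievability, I would divide the transmission horizon into $B$ blocks of length $n$ and convey fresh messages $m_b \in [2^{nR}]$ and state-description indices $l_{b-1} \in [2^{n\tilde R}]$ jointly, via superposition coding. The outer codebook $\{x^n(m,l)\}$ is drawn \ac{iid} from $P_X$, and for every $x^n(m,l)$ a superimposed family $\{v^n(\ell\mid m,l)\}_{\ell \in [2^{n\tilde R}]}$ is drawn conditionally \ac{iid} according to $P_{V|X}$. In block $b$ the encoder transmits $x^n(m_b, l_{b-1})$; having observed $(S_T^n(b), Z^n(b))$ strictly causally at the end of the block, it invokes the covering lemma to pick $l_b$ so that $\big(x^n, S_T^n, Z^n, v^n(l_b\mid m_b, l_{b-1})\big)$ are jointly typical, which succeeds with vanishing probability of failure whenever $\tilde R \ge I(V; S_T, Z\mid X) + \delta$. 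Decoding proceeds in the backward direction: once $l_b$ has been recovered from the decoding of block $b+1$, the receiver looks for the unique pair $(\hat m_b, \hat l_{b-1})$ that makes $\big(x^n(\hat m_b, \hat l_{b-1}), v^n(l_b\mid \hat m_b, \hat l_{b-1}), y^n(b)\big)$ jointly typical. A standard joint-typicality packing bound gives that the error event vanishes whenever $R + \tilde R < I(X, V; Y) - \delta$. Combining the packing and covering constraints and using the Markov chain $V-(X,S_T,Z)-Y$ together with $Z=\phi(Y)$,
\begin{equation*}
I(X,V;Y) - I(V; S_T, Z\mid X) = I(X;Y) - I(V; S_T \mid X, Y),
\end{equation*}
which reproduces the rate in~\eqref{eq:p2prdfunc}. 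State estimation is carried out symbol-wise by $\hat S_i = h^*(X_i, V_i, Y_i)$ from Lemma~\ref{lemma:optest}, and the ensemble distortion meets $\expcs{}{d(S, h(X,V,Y))}\le D$ by construction of $\calP_D$.

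For the converse, I would apply Fano's inequality to obtain $nR \le I(M; Y^n) + n\epsilon_n$, expand via the backward chain rule $I(M; Y^n) = \sum_{i=1}^n I(M; Y_i \mid Y_{i+1}^n)$, and identify the auxiliary
\begin{equation*}
V_i \triangleq (M, Y_{i+1}^n, S_T^{i-1}, Z^{i-1}),
\end{equation*}
so that $X_i = f_{e,i}(M, S_T^{i-1}, Z^{i-1})$ is a deterministic function of $V_i$. Memorylessness of $S$, the strictly-causal encoding rule, and $Z_i = \phi(Y_i)$ together deliver the factorization required by $\calP_D$: both $X_i \perp (S_i, S_{T,i})$ and the Markov chain $V_i-(X_i, S_{T,i}, Z_i)-Y_i$ hold. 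Standard Csisz\'ar-sum manipulations, adding and subtracting the mutual-information terms involving $S_T^{i-1}$ and exploiting the memoryless channel, rearrange $I(M;Y^n)$ into the single-letter sum
\begin{equation*}
\sum_{i=1}^n \bigl[I(X_i; Y_i) - I(V_i; S_{T,i} \mid X_i, Y_i)\bigr] + n\epsilon_n.
\end{equation*}
A time-sharing argument over $i$ then yields the single-letter form. The distortion constraint transfers via Lemma~\ref{lemma:optest}: by (if necessary) further enlarging $V_i$ to include $Y^{i-1}$ — which preserves the Markov chain by memorylessness of $S$ — the optimal estimator given $(X_i, V_i, Y_i)$ dominates $h_i(Y^n)$ pointwise in distortion, yielding $\expcs{}{d(S, h^*(X,V,Y))}\le D$.

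The hard part is twofold. First, verifying the Markov chain $V_i-(X_i, S_{T,i}, Z_i)-Y_i$ for the chosen auxiliary: future outputs $Y_{i+1}^n$ depend on $Y_i$ only through $Z_i = \phi(Y_i)$ via the causal encoder, while the remaining components of $V_i$ are past or independent quantities that become conditionally independent of $S_i$ once $(X_i, S_{T,i})$ are fixed; this step is where both the memoryless structure and the deterministic-feedback assumption are used simultaneously. Second, arranging the Csisz\'ar-sum identity so that the subtracted term matches $I(V_i; S_{T,i}\mid X_i, Y_i)$ exactly and not a weaker quantity such as $I(V_i; S_{T,i}, Z_i\mid X_i, Y_i)$ — here the identity $Z_i=\phi(Y_i)$ is crucial, since it absorbs the $Z_i$-conditioning into the $Y_i$-conditioning for free. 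In achievability, the decisive nontrivial feature is that backward simultaneous decoding makes the Wyner--Ziv binning step unnecessary; this is precisely what lets $I(X,V;Y) - I(V; S_T, Z\mid X)$ collapse to the claimed rate, bypassing the looser sequential-decoding bound used in the prior literature.
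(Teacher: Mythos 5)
Your achievability argument is essentially identical to the paper's: the same block-Markov superposition codebook $\{x^n(m,l)\}$, $\{v^n(\ell\mid m,l)\}$, the same covering condition $\tilde R > I(V;S_T,Z\mid X)$, the same backward simultaneous packing condition $R+\tilde R < I(X,V;Y)$, and the same reduction to $I(X;Y)-I(V;S_T\mid X,Y)$ via the Markov chain $V-(X,S_T,Z)-Y$ and $Z=\phi(Y)$. The paper does not spell out its converse (it defers to the Bross--Lapidoth argument), and your Fano-plus-auxiliary sketch with $V_i=(M,Y_{i+1}^n,S_T^{i-1},Z^{i-1})$ is consistent with that standard route.

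One concrete omission: as written, your backward decoder has no starting point. You use $B$ blocks and encode $l_{b-1}$ into block $b$, so the final description index $l_B$ is computed but never transmitted, yet the decoder's recursion ``recover $(\hat m_b,\hat l_{b-1})$ once $l_b$ is known'' needs $l_B$ to begin at $b=B$. The paper fixes this with an extra terminal block of length $n'=nR_s/\mu$ carrying only $x^{n'}(1,l_B)$, decodable because $nR_s/n'=\mu<I(X;Y)$ (which is also why the hypothesis $I(X;Y)>\mu>0$ is needed); you should add this, and note that the rate penalty vanishes as $B\to\infty$. Separately, your converse compresses the key single-letterization into ``standard Csisz\'ar-sum manipulations''; the actual step is the add-and-subtract decomposition $I(M;Y^n)=I(M,S_T^n;Y^n)-I(S_T^n;Y^n\mid M)$ followed by verification that the subtracted telescoping terms assemble into exactly $I(V_i;S_{T,i}\mid X_i,Y_i)$, which is where the work lies and would need to be written out.
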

\begin{proof}
    The proof of converse is conducted similarly as in \cite{bross2017rate} with slight modifications on \ac{sit} and feedback. The coding scheme in \cite{choudhuri2013causal} can also be extended to this case. We provide a sketch of the proposed coding scheme. The details can be found in 
    \if \doublecol1 
        \cite[Appendix~A]{li2025simultaneousdecodingapproachjoint}.
    \else 
        Appendix~\ref{app:p2p-thm}. 
    \fi

    In each transmission block $b$, the encoder generates $2^{n(R+R_s)}$ sequences $x^n(b) = x^n(m_b, l_{b-1})$ from $P^n_X$ with $m_b\in[2^{nR}]$ the message in block $b$ and $l_{b-1}\in [2^{nR_s}]$ the state description index in block $b-1$. For each $x^n(b)$, it also generates $2^{nR_s}$ sequences $v^n(b) = v^n(l_b | m_b, l_{b-1})$ from $P^n_{V|X}$ to compress the \ac{sit} and feedback $(s_T^n(b), z^n(b))$ in block $b$ by finding a jointly typical sequence from $\typset{}{P_{S_TZXV}}$. This requires $R_s > I(V; S_T, Z |X)$. In the last block $B+1$, only $x^n(1, l_B)$ is sent without any message.

    The decoder starts from the last block to first recover $l_B$. Subsequently, it decodes $(m_b, l_{b-1})$ jointly from $b=B$ to $b=1$. This requires $R+R_s < I(X,V;Y)$. The channel state is recovered by applying $h(x_i(b), v_i(b), y_i(b))$ for all $i\in [n]$ and $b\in [B]$. Consequently, the rate 
    \begin{equation}
    \begin{split}
        R &< I(X, V; Y) - I(V; S_T, Z | X)\\
       &= I(X; Y) + I(V; Y |X) -  I(V; S_T, Z | X) \\
       &= I(X; Y) - I(V; S_T | X,Y)
    \end{split}
    \end{equation}
    and the distortion $D$ are achievable as $n, B \to \infty$.
\end{proof}

In the proposed coding scheme, only one index is used to compress the \ac{sit} and feedback from the previous block. As a result, it can be decoded simultaneously with the message. In contrast, \cite{choudhuri2013causal} employs the random binning technique, designing two indices for the state compression, and only the bin index is sent to the decoder. Therefore, in each block, the decoder first recovers the message with the bin index and then the other fine index. Although both coding schemes lead to the same result for the point-to-point channel, it will be seen that the simultaneous decoding approach can provide better regions for degraded broadcast and multiple access channels.

\begin{prop}\label{prop:cdcausalconcave}
$C(D)$ has the following properties:
    \begin{enumerate}
        \item $C(D)$ is a non-decreasing and concave function in $D$;
        \item The maximization in \eqref{eq:p2prdfunc} is achieved by the optimal estimator
        \begin{equation}\label{eq:p2poptest}
        h^*(x,v,y) = \argmin_{\hat{s}\in \hat{\calS}} \sum_{s\in\calS}P_{S|XVY}(s|x,v,y)d(s,\hat{s});
        \end{equation}
        \item The minimum achievable distortion is given by
        \begin{equation}
        \begin{split}
            D_{\min} \triangleq& \min_{P_X, P_{V|XS_TZ}}\expcs{}{d(S,h^*(X,V,Y))},\\
            &\mathrm{s.\ t.}\  I(X;Y) - I(V; S_T|X,Y)\ge 0.
        \end{split}
        \end{equation}
        \item If $S_T$ is statistically independent of $S$, it suffices to set $V=\varnothing$ and 
        \begin{equation}
            C(D) =  \max_{(X,\varnothing,h)\in \calP_D} I(X;Y).
        \end{equation}
    \end{enumerate}
\end{prop}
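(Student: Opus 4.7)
The plan is to handle the four parts in sequence, relying largely on the earlier lemmas and on standard properties of achievable regions. For Part 1, I would first note that $D \mapsto \calP_D$ is monotone under set inclusion, since enlarging $D$ only relaxes the distortion constraint in \eqref{eq:pd} while leaving the Markov and product constraints intact; this immediately gives monotonicity of $C(D)$. For concavity I would give an operational time-sharing argument: any convex combination of two achievable pairs $(R_1, D_1)$ and $(R_2, D_2)$ is achievable by splitting the $n$ channel uses into two sub-blocks of lengths $\lfloor \lambda n \rfloor$ and $n-\lfloor\lambda n\rfloor$ and applying the two codes separately, yielding $(\lambda R_1 + (1-\lambda)R_2, \lambda D_1 + (1-\lambda)D_2)$ in the limit. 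Hence $C(\lambda D_1 + (1-\lambda) D_2) \ge \lambda C(D_1) + (1-\lambda) C(D_2)$.

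For Part 2, I would invoke Lemma~\ref{lemma:optest} with the conditioning tuple taken to be $(X, V, Y)$. Since the rate term $I(X;Y) - I(V; S_T | X, Y)$ depends only on the joint distribution of $(X, V, S, S_T, Y)$ and not on the estimator $h$, replacing any feasible $h$ by the pointwise-optimal $h^*$ only decreases the expected distortion and hence preserves feasibility, so the maximum is attained at $h^*$ as in \eqref{eq:p2poptest}. Part 3 is then an immediate corollary: since every achievable rate is non-negative, the set of $D$ with $C(D) \ge 0$ is characterized by the existence of some $(X, V)$ satisfying $I(X; Y) - I(V; S_T | X, Y) \ge 0$, and $D_{\min}$ is the infimum of $\expcs{}{d(S, h^*(X, V, Y))}$ over this feasibility set, using the estimator from Part~2.

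The main technical content is Part 4. The goal is to show that when $S_T$ is independent of $S$, the Markov chain $S - (X, Y) - V$ holds. The key intermediate step is $S \perp S_T \mid (X, Y)$, which I would derive by direct computation from the joint factorization in \eqref{eq:pd}: the independence $P_{S_T S} = P_{S_T} P_S$ together with the product structure $P_X P_{Y|XS}$ implies $P_{S_T \mid S, X, Y} = P_{S_T \mid X, Y} = P_{S_T}$. Combining this with the assumed Markov chain $V - (X, S_T, Z) - Y$ and the deterministic dependence $Z = \phi(Y)$, and marginalizing over $S_T$, yields $P(S \mid X, Y, V) = P(S \mid X, Y)$. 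Lemma~\ref{lemma:markovest} then permits replacing any $h(X, V, Y)$ by an $h'(X, Y)$ without increasing distortion. On the rate side, $V = \varnothing$ trivially gives $I(V; S_T | X, Y) = 0$, which upper-bounds the rate expression for any $V$, so the maximum is attained at $V = \varnothing$ with rate $I(X; Y)$.

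I expect the Markov verification in Part~4 to be the main obstacle, since one must carefully exploit both the factorization in \eqref{eq:pd} and the deterministic dependence $Z = \phi(Y)$ to justify the exchange of conditioning; the other three parts are essentially bookkeeping once Lemmas~\ref{lemma:optest} and \ref{lemma:markovest} and the operational definition in Definition~\ref{def:cdf} are in hand.
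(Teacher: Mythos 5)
Your proposal is correct and, for parts 2--4, follows essentially the same route as the paper: monotonicity from the inclusion $\calP_{D_1}\subseteq\calP_{D_2}$, parts 2 and 3 from Lemma~\ref{lemma:optest} together with monotonicity, and part 4 from the Markov chain $S-(X,Y)-V$ under $S\perp S_T$ combined with Lemma~\ref{lemma:markovest} and $I(V;S_T|X,Y)\ge 0$ (your explicit verification of that Markov chain by marginalizing the factorization in \eqref{eq:pd} over $S_T$ is more detailed than the paper, which simply asserts it). The one genuine divergence is the concavity argument: you time-share operationally, concatenating two codes over sub-blocks of lengths $\lfloor\lambda n\rfloor$ and $n-\lfloor\lambda n\rfloor$ and passing to the limit, whereas the paper time-shares at the single-letter level, introducing $Q\sim P_Q$ with $P_Q(1)=\lambda$, setting $\tilde V=(V_Q,Q)$, checking $(X_Q,\tilde V,h_Q)\in\calP_D$ for $D=\lambda D_1+(1-\lambda)D_2$, and using the Markov chain $Q-X-Y$ to lower-bound $I(X_Q;Y)-I(\tilde V;S_T|X_Q,Y)$ by the convex combination. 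Your operational version proves concavity of the operational quantity directly from Definition~\ref{def:cdf} without touching the single-letter formula (so it does not even require Theorem~\ref{thm:cdp2p}), while the paper's version stays entirely inside the single-letter characterization; both are standard and valid, and the paper's choice has the advantage of producing an explicit feasible element of $\calP_D$ that witnesses the bound.
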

\begin{proof}
    See
    \if \doublecol1
        \cite[Appendix~B]{li2025simultaneousdecodingapproachjoint}.
    \else
        Appendix~\ref{app:p2p-prop}.
    \fi
\end{proof}

\begin{remark}
    If the \ac{sit} is causally available, i.e., $S_T^i$ is present at the encoder at time step $i$, the corresponding \ac{cd} function $C_{\C}(D)$ can be derived by applying the Shannon strategy\cite{bross2017rate, choudhuri2013causal}. This introduces an additional random variable $U$ and a Shannon strategy function $f_e$ with $X=f_e(U, S_T)$ such that $U-(X, S_T) - Y$ and $V-(U,S_T,Z)-Y$ form two Markov chains. In this case, $C_{\C}(D)$ is given by
    \begin{equation}
        C_{\C}(D) \triangleq \max_{\substack{P_U, f_e, P_{V|US_TZ}:\\ \expcs{}{d(S, h(U,V,Y))} \le D}} I(U; Y) - I(V; S_T| U,Y).
    \end{equation}
    The Shannon strategy can also be applied to the multi-user systems in subsequent sections and extend the results to the case of causal \ac{sit}.
\end{remark}

\subsection{Examples}

\subsubsection{Radar Systems}

In radar systems, a transmitter sends pre-designed signals toward a target, and a receiver, knowing the transmit signals, estimates the target state via the reflected signals. In the monostatic setting, the transmitter and receiver are co-located, and the received signals are referred to as echo signals, also present at the transmitter. In bistatic systems, transmitters and receivers are separated.
    Existing works analyze both radar types with different models\cite{ahmadipour2022information, chang2023rate}, but we recognize their inherent equivalence by modeling them as point-to-point channels and treating the echo signals in monostatic radar as both channel output and feedback. The known transmitted signals at the receiver can be treated as an additional channel output. Thus, by interpreting $S_T=\varnothing$, $Y = (Y', X)$ with $Y'$ the reflected signal, $Z= Y'$ for the monostatic system and $Z=\varnothing$ for the bistatic system, one can set $V=\varnothing$ according to Proposition~\ref{prop:cdcausalconcave}, and the minimum estimation distortion is obtained as
    \begin{equation}\label{eq:radardistortion}
    \begin{split}
        \min_{P_X}\expcs{}{d(S,h^*(X,Y))}&=\min_{P_X}\expc{}{\expc{}{d(S,h^*(X,Y'))} | X}\\
        &= \min_{x\in \calX} \expc{}{d(S,h^*(x,Y'))}.
    \end{split}
    \end{equation}
    This implies that the minimum distortion can be attained by a deterministic transmit signal, and the echo signal, which is used to design the radar signals in real time, cannot improve the system's performance due to the assumption of a memoryless system. 
    Notably, the minimum achieved distortion may not benefit from the receiver's knowledge of $X$ if the receiver has decoding capability. 

Consider a bistatic radar system discussed above but without the knowledge of transmitted signals at the receiver. The channel is modeled as a Rayleigh fading channel $Y=SX+W$ 
with $X \in \{0, -1, +1\}$, $S$ and $W$ following $\calN(0,1)$ and mutually independent. The distortion $d$ is chosen as the squared error. From \eqref{eq:radardistortion}, it can be shown that $X\in \{-1, +1\}$ both attain the minimum distortion due to the symmetry, and thus, one can still convey data in this case. The maximum data rate is achieved by a uniform distribution $P_X(-1) = P_X(1) = \frac{1}{2}$.

\subsubsection{Quadratic-Gaussian Case}\label{sec:p2pcqg}

We consider the additive Gaussian channel
\begin{equation}\label{eq:qgchannel}
    Y= X+S+W
\end{equation}
with $S\sim \calN(0, Q)$, $W\sim \calN(0, N)$ and the squared error function $d(s,\hat{s})=(s-\hat{s})^2$. The input signal power is constrained by $P$, i.e., $\expcs{}{|X|^2} \le P$. The encoder is assumed to observe the noisy version of $S$
\begin{equation}\label{eq:qgsit}
    S_T = S + W_T
\end{equation}
with $W_T \sim \calN(0, N_T)$. 
\begin{prop}\label{prop:cqg}
    The \ac{cd} function of the additive Gaussian channel \eqref{eq:qgchannel} with generalized \ac{sit} \eqref{eq:qgsit} and feedback $Z=\phi(Y)$ for any deterministic function $\phi$ is given by
    \begin{equation}\label{eq:cqg}
    \begin{split}
        &C_{\mathrm{QG}}(D, P, Q, N, N_T) \triangleq \frac{1}{2}\left(\log\br{1 + \frac{P}{Q+N}} - \right. \\
        &\left.\log^+\frac{Q^2N^2}{(Q+N)(DQN + DQN_T + DNN_T - QNN_T)}\right)
    \end{split}
    \end{equation}
    for $D\ge \frac{Q^2N^2 + QNN_T(P+Q+N)}{(P+Q+N)(QN+QN_T+NN_T)}$, where $\log^+(x) = \max(0, \log (x))$.
\end{prop}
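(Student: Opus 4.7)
The plan is to evaluate the general capacity--distortion formula from Theorem~\ref{thm:cdp2p} on a carefully chosen jointly Gaussian test channel and then match the resulting expression with the claimed closed form. For achievability, I would take $X \sim \calN(0, P)$ independent of $(S, W_T, W)$, and choose the auxiliary variable
\begin{equation*}
    V = S_T + \tilde{W}, \quad \tilde{W} \sim \calN(0, \tilde{N}),
\end{equation*}
with $\tilde{W}$ independent of everything else and $\tilde{N} \ge 0$ a free parameter. The estimator $h^{*}$ is taken as the conditional mean, which is optimal for squared-error distortion by Proposition~\ref{prop:cdcausalconcave}. Observe that this choice does not involve $Z$, consistent with the fact that the claimed formula is invariant under the choice of $\phi$.

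With $X$ independent of $(S,W)$, the first term evaluates to the standard Gaussian channel capacity with ``colored interference'' $S+W$, giving $I(X;Y) = \frac{1}{2}\log\bigl(1 + P/(Q+N)\bigr)$. For the distortion, conditioned on $X$ the decoder effectively sees two independent noisy views of $S$, namely $V = S + (W_T + \tilde{W})$ and $Y-X = S + W$, so the MMSE is
\begin{equation*}
    D = \Bigl(\tfrac{1}{Q} + \tfrac{1}{N} + \tfrac{1}{N_T + \tilde{N}}\Bigr)^{-1} = \frac{QN(N_T+\tilde{N})}{QN + (Q+N)(N_T+\tilde{N})},
\end{equation*}
which inverts to $N_T + \tilde{N} = QND/\bigl(QN - D(Q+N)\bigr)$ provided this quantity is at least $N_T$. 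For the penalty $I(V;S_T\mid X,Y)$ I would use that $h(V\mid S_T,X,Y) = h(\tilde{W}) = \frac{1}{2}\log(2\pi e \tilde{N})$, while $h(V\mid X,Y)$ is determined by the Gaussian conditional variance $\mathrm{Var}(V\mid Y-X) = Q + N_T + \tilde{N} - Q^2/(Q+N)$. Substituting the expression for $\tilde{N}$ and simplifying algebraically yields exactly the second log-term in \eqref{eq:cqg}. The outer $\log^{+}$ reflects the nonnegativity constraint $I(X;Y) \ge I(V;S_T\mid X,Y)$ noted in Proposition~\ref{prop:cdcausalconcave}(3); turning this constraint into a threshold on $D$ recovers precisely the stated bound $D \ge \frac{Q^{2}N^{2} + QNN_T(P+Q+N)}{(P+Q+N)(QN+QN_T+NN_T)}$.

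The converse would establish that the jointly Gaussian test channel is optimal. I would combine two Gaussian-extremal arguments: first, since $X$ is independent of $S$ in $\calP_D$, we have $I(X;Y) \le \frac{1}{2}\log(1+P/(Q+N))$ with equality iff $X$ is Gaussian; second, the conditional entropy power inequality applied to $h(V\mid X,Y)$, combined with the Gaussian upper bound on $h(V\mid S_T)$, would lower-bound $I(V; S_T\mid X,Y)$ in terms of the MMSE of $S$ from $(X,V,Y)$, which in turn is forced to be $\ge D$ by Lemma~\ref{lemma:optest}. The main obstacle I expect lies in this converse step: since $V$ may be an arbitrary (possibly non-Gaussian) function of $(X, S_T, Z)$, linking the EPI-style lower bound on the rate penalty to the squared-error distortion constraint in a sharp way, and verifying that both Gaussian-extremal inequalities are jointly tight only for the chosen test channel, is the technically delicate part. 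The remaining algebraic rearrangements to match the explicit formula are lengthy but essentially mechanical.
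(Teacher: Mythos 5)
Your achievability argument matches the paper's: the same Gaussian test channel $X \sim \calN(0,P)$, $V = S_T + \tilde{W}$ is used there, and the algebra you outline does recover \eqref{eq:cqg}. The divergence, and the gap, is in the converse. You propose to lower-bound $I(V;S_T\mid X,Y)$ directly via a conditional entropy power inequality, but you leave precisely the hard part unresolved: $V$ is an arbitrary (possibly non-Gaussian) random variable that may depend on $X$, the distortion constraint couples $h(X,V,Y)$ to the same $X$ whose distribution you are separately optimizing for the first term, and you acknowledge that making the EPI bound mesh sharply with the squared-error constraint is ``technically delicate.'' As written, the converse is a plan, not a proof, and the joint tightness claim (that both Gaussian-extremal inequalities are simultaneously achieved only by the proposed test channel) is asserted rather than established.

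The paper sidesteps this entirely with a structural reduction you do not have. Since $S_W \triangleq Y - X = S + W$ is in one-to-one correspondence with $Y$ given $X$, the dependence of $V$ on $(S_T, X, Y)$ can be replaced by dependence on $(S_T, X, S_W)$; averaging out $X$ then produces a $V'$ depending only on $(S_T, S_W)$ with $I(V';S_T\mid X,S_W) \le I(V;S_T\mid X,S_W)$ by the log-sum inequality, and Lemma~\ref{lemma:markovest} removes the $X$-dependence of the estimator via the Markov chain $S - (V',S_W) - X$. This decouples the optimization into $\max_{p_X} I(X;Y)$ (the standard Gaussian capacity) and $\min I(V;S_T\mid S_W)$ subject to the distortion constraint, and the latter is exactly the known noisy Wyner--Ziv rate--distortion function for the quadratic-Gaussian source, quoted from the literature as $\frac{1}{2}\log^+\frac{\var{}{S|S_W}-\var{}{S|S_TS_W}}{D-\var{}{S|S_TS_W}}$. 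If you want to complete your route, you would essentially have to reprove that Wyner--Ziv result with the extra $X$-conditioning in place; the decoupling step is what lets the paper avoid doing so. I would recommend either adopting that reduction or explicitly importing the Gaussian noisy Wyner--Ziv converse as a cited black box after performing the decoupling.
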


Note that this result reduces to the one in\cite{choudhuri2013causal, bross2017rate} for $N_T = 0$, i.e., $S_T = S$. As $N_T \to \infty$, $C_{\mathrm{QG}}(D, P, Q, N, N_T)$ tends to $\frac{1}{2}\log\br{1 + \frac{P}{Q+N}}$ for $D \ge \frac{QN}{Q+N}$, where $\frac{QN}{Q+N}$ corresponds to the \ac{mmse} of $S$ given observation $S+W$, which is the result in \cite{zhang2011joint}. 
The optimal random variables achieving $C_{\mathrm{QG}}(D, P, Q, N, N_T)$ are $X \sim \calN(0, P)$, $V = S_T + E$ where $E\sim \calN(0, d^2)$ is independent of others and 
\begin{equation}\label{eq:cqgd2}
    d^2 = \frac{DQN + DQN_T + DNN_T - QNN_T}{QN - DQ - DN}.
\end{equation}
The optimal estimator is the \ac{mmse} estimator 
\begin{equation}
\begin{split}
    h_{\mathrm{QG}}^*(v,x,y) &\triangleq \expcs{}{S|V=v, X=x, Y=y}\\
    &= \frac{QNv + (QN_T+d^2Q)(y-x)}{QN + QN_T + NN_T + d^2(Q+N)}.
\end{split}
\end{equation}
An example with $P = 5$, $Q=N=1$ and $N_T \in \{0, 0.3, 1\}$ is illustrated in Fig.~\ref{fig:p2pgaussian}.

\begin{figure}
    \centering
    \includegraphics[scale=0.8]{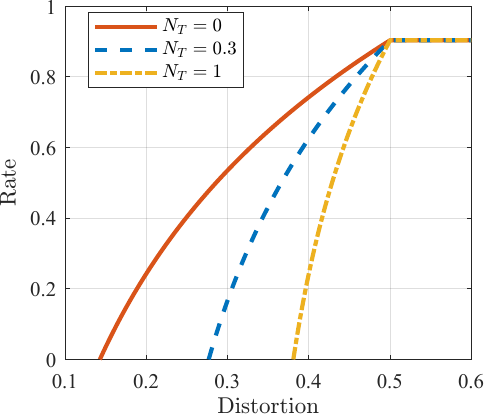}
    \caption{\ac{cd} function for the quadratic-Gaussian case with different correlation levels between $S$ and $S_T$.}
    \label{fig:p2pgaussian}
\end{figure}

\begin{proof}
    Although Theorem~\ref{thm:cdp2p} is only proved for finite alphabet, the extension to Gaussian channels can be conducted using techniques in\cite{el2011network,wyner1978rate}. The proof of achievability is straightforward, and we shall only show the converse direction for the perfect feedback case.

    First, we will show the dependency of $V$ and $h$ on $X$ can be removed. Let $S_W = S+W$ and note that the dependency of $V$ on $(S_T, X, Y)$ and that of $h$ on $(V, X, Y)$ can be replaced by the dependency on $(S_T, X, S_W)$ and $(V, X, S_W)$, respectively, due to the one-to-one correspondence $S_W = Y-X$, and $I(V;S_T | X,Y) = I(V;S_T | X, S_W)$. In turn, for any fixed probability density function\footnote{The proof can also be extended to non-absolute continuous distributions, which do not possess probability density functions. But we make this assumption here for simplicity.} $p_{V|S_TXS_W}$, define a new random variable $V'$ conditioned only on $(S_T, S_W)$ with
    \begin{equation*}
        p_{V'|S_TS_W}(v|s_T, s_W) = \int_{\calX} p_{V|S_TXS_W}(v|s_T,x,s_w) p_{X}(x)\mathrm{d}x
    \end{equation*}
    for any $p_{X}$. Thus, $(V', S_T, S_W)$ is independent of $X$. It can be shown that $I(V';S_T | X,S_W) \le I(V;S_T | X,S_W)$ due to the 
    log-sum inequality. The estimator may also not depend on $X$ according to Lemma~\ref{lemma:markovest} and the Markov chain $S - (V', S_W) - X$. In other words, it suffices to design $p_{V|S_TS_W}$ and the estimator $h(V, S_W)$ to optimize the \ac{cd} function. This leads to $I(V;S_T | X, S_W) = I(V;S_T |S_W)$ and thus we have
    \begin{equation}
    \begin{split}
        &\hspace{-7mm}C_{\mathrm{QG}}(D, P, Q, N, N_T) =\\
        &\max_{p_X} I(X;Y)- \min_{p_{V|S_TS_W}, h} I(V; S_T | S_W), \\
        &\mathrm{s.\ t.} \ \expcs{}{|X|^2}\le P,\ \expcs{}{(S-h(V,S_W))^2}\le D.
    \end{split}
    \end{equation}
    The first term is the capacity of additive Gaussian channel $\frac{1}{2}\log (1+\frac{P}{Q+N})$ and the second term, taking the form of noisy Wyner-Ziv \ac{rd} function of Quadratic-Gaussian case\cite{draper2002successive}, has the result
    \begin{equation*}
    \begin{split}
        &\frac{1}{2} \log^+ \frac{\var{}{S|S_W} - \var{}{S|S_TS_W}}{D - \var{}{S|S_TS_W}}\\
        = & \frac{1}{2} \log^+\frac{Q^2N^2}{(Q+N)(DQN + DQN_T + DNN_T - QNN_T)}.
    \end{split}
    \end{equation*}
    This completes the proof.

\end{proof}

\section{Degraded Broadcast Channel}\label{sec:braodcast}
\subsection{Channel Model}

A two-user \ac{sddmbc} model for \ac{jsmc} is shown in Fig.~\ref{fig:bcmodel}, where an encoder maps two private messages $M_1\in  \calM_1= [2^{nR_1}]$ and $M_2\in  \calM_2= [2^{nR_2}]$ dedicated to user 1 and 2 into $X^n$. The state-dependent channel is characterized by $P_{Y_1Y_2|XS}$. The state $S$ and \ac{sit} $S_T$ are defined in the same way as the point-to-point case, and the channel outputs $(Y_1, Y_2)$ are assumed to be independent of $S_T$ conditioned on $(X,S)$. Decoder $k\in\{1,2\}$ receives the channel output $Y_k^n$, simultaneously decodes the message $\hM_k$ and estimates the channel state $\hS_k^n$. The \ac{sddmbc} feedback $Z=\psi(Y_1, Y_2)$ is expressed as a function of $(Y_1, Y_2)$.
The system consists of the following components:
\begin{enumerate}
    \item An encoder $g_e^n$ with $g_{e,i}: \calM_1 \times \calM_2 \times \calS_T^{i-1}\times\calZ^{i-1}\to \calX, \forall i\in [n]$;
    \item Message decoders $g_{d,k}: \calY_k^n \to \calM_k$ at Decoder $k \in \{1,2\}$;
    \item State estimators $h_k^n$ with $h_{k,i}: \calY_k^n \to \hat{\calS}_k$ at Decoder $k \in \{1,2\}, \forall i\in [n]$.
\end{enumerate}

\begin{figure}[h]
    \centering
    
    \includegraphics[scale=0.9]{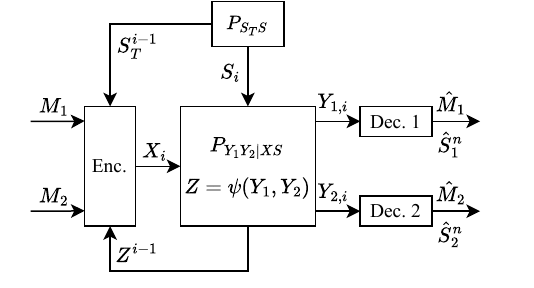}
    \caption{The degraded broadcast \ac{jsmc} model.}
    \label{fig:bcmodel}
\end{figure}

\begin{definition}\label{def:degbc}
    A state-dependent broadcast channel is said to be physically degraded if $(X, S_T) - Y_1 - Y_2$ forms a Markov chain. A state-dependent broadcast channel is said to be statistically degraded if there exists a random variable $\tilde{Y}_1$ such that $P_{\tilde{Y}_1|XS}(\tilde{y}_1 | x,s) = P_{Y_1|XS}(\tilde{y}_1 | x,s)$ and
    \begin{equation}
    \begin{split}
        &\sum_{s\in \calS} P_{SS_T}(s, s_T) P_{\tilde{Y_1}, Y_2 | XS}(y_1, y_2 | x,s) =\\
        &\hspace{2mm}P_{Y_2 | \tilde{Y_1}} (y_2 | y_1) \sum_{s\in \calS} P_{SS_T}(s, s_T) P_{\tilde{Y_1}| XS}(y_1 | x,s).
    \end{split}
    \end{equation}
    In other words, $(X, S_T) - \tilde{Y}_1 - Y_2$ forms a Markov chain for arbitrary $P_X$.
\end{definition}
It is noted that $(X,S)-Y_1-Y_2$ implies the Markov chain $(X,S_T)-Y_1-Y_2$ but not vice versa. Thus, Definition~\ref{def:degbc} is less strict than the assumptions made in \cite{gel1980coding, bross2020message}.

Let $d_k: \calS \times \hat{\calS}_k \to [0, \infty)$ be the distortion function at Decoder $k$, which is not necessarily identical for different $k$. For instance, let $S=(S_1, S_2)$, Decoder 1 only concerns about $S_1$ and Decoder 2 only about $S_2$. This case can be handled by properly designing $d_k$ for both decoders, which only evaluates the part of interest $S_k$. In a similar way as the point-to-point channel, one can define the corresponding message decoding error probability $P_e^{(n)}$ and averaged distortion $D^{(n)}_k$ as well as the \ac{cd} region.
\begin{definition}
    A tuple of rates and distortions $(R_1, R_2, D_1, D_2)$ is said to be achievable if there exists a sequence of $(2^{nR_1}, 2^{nR_2}, n)$ codes such that for all $k\in\{1,2\}$
    \begin{align}
            \lim_{n\rightarrow \infty} P_e^{(n)} = 0,\ \limsup_{n\rightarrow \infty} D_k^{(n)} \le D_k.
    \end{align}
    The \ac{cd} region of the \ac{sddmbc} in Fig.~\ref{fig:bcmodel}, denoted as $\calC^{\BC}(D_1, D_2)$, is the closure of the set of $(R_1, R_2)$, such that $(R_1, R_2, D_1, D_2)$ is achievable.
\end{definition}

\subsection{Main Results}

An achievable region is first derived based on the proposed coding scheme, denoted as $\calR^{\BC}(D_1, D_2)$. In the coding scheme, the encoder adopts the superposition coding strategy and successive refinement method\cite{steinberg2004successive} for compressing the side information due to the fact that the stronger user can always decode the codewords dedicated to the weak user.
We will show that, although the Wyner-Ziv random binning and sequential decoding method can also be applied, the proposed scheme results in a larger region.

Let $\pdd^{\BC}$ be the set of all random variables $(U, X, V_1, V_2)$ and functions $(h_1, h_2)$ with $h_1: \calU\times \calX\times \calV_1 \times \calV_2 \times \calY_1 \to \hat{\calS}_1$ and $h_2: \calU\times \calV_2 \times \calY_2 \to \hat{\calS}_2$ such that:
\begin{equation}\label{eq:pddbc}
\begin{split}
    &\hspace{-3mm}\pdd^{\BC} \triangleq \left\{ (U, X, V_1, V_2, h_1, h_2)\right|\\
    &P_{SS_TUXY_1Y_2ZV_1V_2}(s,s_T,u,x,y_1,y_2,z,v_1,v_2)=\\
    &P_{SS_T}(s,s_T) P_{UX}(u,x)P_{Y_1 Y_2 | X S} (y_1, y_2 | x , s)\\
    &\cdot\mathbbm{1}\{z=\psi(y_1, y_2))\} P_{V_1V_2|UXS_TZ}(v_1,v_2|u,x,s_T,z); \\
    &\expcs{}{d_1(S, h_1(U, X, V_1, V_2, Y_1))} \le D_1,\\
    &\left.\expcs{}{d_2(S, h_2(U, V_2, Y_2))} \le D_2 \right \},
\end{split}
\end{equation}
in which we have the Markov chains $U - (X, S_T) - Y_1 - Y_2$ and $(V_1, V_2) - (U, X, S_T, Z) - (Y_1, Y_2)$.
Note that this does not imply $(V_1, V_2)- Y_1 - Y_2$ in general. 
The derived \ac{rd} region $\calR^{\BC}(D_1, D_2)$ contains all nonnegative rate pairs $(R_1, R_2)$ such that
\begin{equation}\label{eq:rdregion}
\begin{split}
    & R_2 <  I(U; Y_2) - I(V_2; S_T, Z, X| U, Y_2),\\
    & R_1 < I(X; V_2, Y_1 | U) - I(V_1 ; S_T, Z|U, X, V_2,  Y_1)
\end{split}
\end{equation}
for $(U, X, V_1, V_2, h_1, h_2) \in \pdd^{\BC}$.

$\calR^{\BC}(D_1, D_2)$ can be understood as follows. If there is no state estimation task at both decoders, the region reduces to the well-known capacity region for the degraded broadcast channel $\{R_2<I(U;Y_2), R_1 <I(X;Y_1|U)\}$ by setting $V_1= V_2=\varnothing$, where $U$ represents the shared information.
Having the state estimation task, the encoder compresses the \ac{sit} $S_T$ with the knowledge of $(U, X, Z)$ into two state descriptions $(V_1, V_2)$ to the respective decoder by splitting rates from the total amount of the achievable rates. Since Decoder 1 is capable of recovering all codewords at Decoder 2, including $V_2$, it can estimate the channel state based on both state descriptions. Therefore, we characterize the state compression as a successive refinement problem, where $V_2$ is recovered by both decoders at the first stage while $V_1$ is only by Decoder 1 at the second refinement stage. The two terms $I(V_2; S_T, Z, X| U, Y_2)$ and $I(V_1 ; S_T, Z|U, X, V_2,  Y_1)$ in $\calR^{\BC}(D_1, D_2)$ thus take a similar form to the \ac{rd} region derived in\cite{steinberg2004successive} with Decoder 2 and 1 cognizant of $(U, Y_2)$ and $(U,X,Y_1)$, respectively. In addition, we note that the total rate for $R_1$ becomes $I(X; V_2,Y_1|U)$ instead of $I(X;Y_1|U)$. The reason is that, besides being used as the first stage description, $V_2$ can also be treated as another channel output to Decoder 1 due to its dependency on $(S_T,X)$, thus assisting in decoding $X$.

\begin{theorem}\label{theorem:deg-bc-scc}
    The \ac{cd} region for the degraded \ac{sddmbc} in Fig.~\ref{fig:bcmodel} satisfies
    \begin{equation}
        \calR^{\BC}(D_1, D_2) \subseteq \calC^{\BC}(D_1, D_2).
    \end{equation}
\end{theorem}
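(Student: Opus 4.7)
The plan is to prove the inclusion by constructing a block Markov scheme that combines the backward simultaneous-decoding argument of Theorem~\ref{thm:cdp2p} with superposition coding for the degraded BC and a two-stage successive-refinement codebook for the compression of the strictly causal side information $(S_T, Z)$. The Wyner-Ziv random binning used in prior art is replaced by a single compression index per layer; the receivers recover these indices jointly with the messages during backward decoding, which is what eliminates the binning rate inequalities.

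\textbf{Codebook and encoding.} I would divide the transmission horizon into $B+1$ blocks of length $n$, with a pilot in block $B+1$. In each block $b$, generate independently: a cloud codebook $u^n(m_{2,b}, l_{2,b-1}) \sim P_U^n$; a satellite codebook $x^n(m_{1,b}, l_{1,b-1} \mid m_{2,b}, l_{2,b-1})$ drawn from $P_{X \mid U}^n$; a coarse description codebook $v_2^n(l_{2,b} \mid m_{2,b}, l_{2,b-1})$ so that $(u^n, x^n, v_2^n)$ matches the joint $P_{UXV_2}$; and a refinement codebook $v_1^n(l_{1,b} \mid m_{1,b}, l_{1,b-1}, m_{2,b}, l_{2,b-1}, l_{2,b})$ drawn from $P_{V_1 \mid U X V_2}^n$. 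After block $b$ is transmitted, the encoder uses the side information $(s_T^n(b), z^n(b))$ and the successive-refinement covering lemma to pick a jointly typical pair $(l_{1,b}, l_{2,b})$, which is then transmitted in block $b{+}1$ via the arguments of $u^n$ and $x^n$.

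\textbf{Backward simultaneous decoding and rate reduction.} Both decoders begin at block $B+1$ to recover the last pair of state indices, then proceed backward. At block $b$, Decoder~2 jointly decodes $(m_{2,b}, l_{2,b-1})$ by looking for the unique pair making $(u^n, v_2^n, y_2^n(b))$ jointly typical; the packing lemma together with the covering rate $R_{s2} > I(V_2; X, S_T, Z \mid U)$, combined with the Markov chain $V_2 - (U, X, S_T, Z) - Y_2$ yielding $I(V_2; X, S_T, Z \mid U) - I(V_2; Y_2 \mid U) = I(V_2; X, S_T, Z \mid U, Y_2)$, reduces to the first bound in \eqref{eq:rdregion}. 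At block $b$, Decoder~1 simultaneously decodes the four unknowns $(m_{2,b}, l_{2,b-1}, m_{1,b}, l_{1,b-1})$ from the joint typicality of $(u^n, v_2^n, x^n, v_1^n, y_1^n(b))$, using $v_2^n$ as a second ``virtual'' observation on $x^n$. Partitioning the error events by which of the cloud/satellite layers are incorrect, invoking the degradedness $I(U, V_2; Y_2) \le I(U, V_2; Y_1)$ to absorb the all-wrong case into the Decoder~2 bound, and reducing via the Markov identity $I(V_1; S_T, Z \mid U, X, V_2) = I(V_1; Y_1 \mid U, X, V_2) + I(V_1; S_T, Z \mid U, X, V_2, Y_1)$ yields the second bound in \eqref{eq:rdregion}. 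Distortion closure is standard: apply $h_1$ and $h_2$ componentwise to the decoded codewords, invoke the typical average lemma, and let $B \to \infty$ to neutralize the pilot-block rate loss, exactly as in Theorem~\ref{thm:cdp2p}.

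\textbf{Main obstacle.} The delicate step is the error-event bookkeeping at Decoder~1, where four unknown index groups interact through the superposition and successive-refinement layers. The non-obvious point is that the refinement description $v_2^n$ plays two simultaneously useful roles: it refines Decoder~2's state estimate, and it feeds into Decoder~1's decoding of $x^n$, producing the $I(X; V_2, Y_1 \mid U)$ term rather than the weaker $I(X; Y_1 \mid U, V_2)$. Verifying that this double role does not create an uncovered error event, and that no auxiliary binning-rate inequality needs to be added to close the analysis, is precisely the step where the present scheme improves over the Wyner-Ziv plus sequential decoding construction referenced in the paragraph preceding the theorem, and it is where the bulk of the technical care in the full proof will go.
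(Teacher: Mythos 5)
Your proposal follows essentially the same route as the paper's proof in Appendix~C: the same block-Markov superposition plus successive-refinement codebooks, the same covering constraints $R_{s2}>I(V_2;S_T,Z,X\mid U)$ and $R_{s1}>I(V_1;S_T,Z\mid U,X,V_2)$, backward simultaneous decoding with degradedness absorbing the all-wrong event at Decoder~1 into the Decoder~2 constraint, and the same Markov-chain identities to reach \eqref{eq:rdregion}. The one step you defer --- showing that the surviving error event at Decoder~1 is controlled by $R_1+R_{s1}<I(X;V_2,Y_1\mid U)+I(V_1;Y_1\mid U,X,V_2)$ --- is exactly what the paper isolates as a standalone packing lemma (Lemma~4 in the appendix), so your outline is correct and complete in approach.
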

\begin{proof}
    See 
    \if \doublecol1
        \cite[Appendix~C]{li2025simultaneousdecodingapproachjoint}
    \else
        Appendix~\ref{app:deg-bc-scc}
    \fi
    for details. Here, we provide a sketch of the coding scheme.

    In each block $b$, the encoder generates $2^{n(R_2+R_{s2})}$ sequences of $u^n(b) = u^n(m_{2,b}, l_{2,b-1})$ from $P^n_U$, $2^{n(R_1+R_{s1})}$ sequences of $x^n(b) = x^n(m_{1,b}, l_{1,b-1} | m_{2,b}, l_{2,b-1})$ given each $u^n(b)$ from $P^n_{X|U}$, $2^{nR_{s2}}$ sequences $v_2^n(b) = v_2^n(l_{2,b}|m_{2,b}, l_{2,b-1})$ given each $u^n(b)$ from $P_{V_2|U}$, and $2^{nR_{s1}}$ sequences $v_1^n(b) = v_1^n(l_{1,b}|m_{2,b}, l_{2,b-1}, m_{1,b}, l_{1,b-1}, l_{2,b})$ given each $(u^n(b), x^n(b), v^n_2(b))$ from $P^n_{V_1|UXV_2}$, where $m_{1,b}\in [2^{nR_1}]$, $m_{2,b}\in [2^{nR_2}]$ are the message for Decoder 1 and 2 in block $b$, respectively. The state descriptions $l_{1,b}\in [2^{nR_{s1}}]$, $ l_{2,b}\in [2^{nR_{s2}}]$ are obtained by finding a jointly typical sequence from $\typset{}{P_{S_TZUXV_2V_1}}$, which requires
    \begin{equation}
    \begin{split}
        R_{s2} & > I(V_2; S_T, Z, X |U),\\
        R_{s1}& > I(V_1; S_T, Z|U,X,V_2).
    \end{split}
    \end{equation}
    In the last block $B+1$, only $x^n(1,l_{1,B} | 1, l_{2,B})$ is sent.

    Decoder 2 follows the same steps as the decoder of the point-to-point channel to recover $(m_{2,b}, l_{2,b-1})$ simultaneously and estimate the state, requiring 
    \begin{equation}
        R_2+R_{s2} < I(U,V_2; Y_2).
    \end{equation}
    Decoder 1 starts from the last block with decoding $(l_{1,B}, l_{2,B})$ jointly. Afterwards, it decodes $(m_{1,b}, m_{2,b}, l_{1,b-1}, l_{2,b-1})$ by looking for a jointly typical sequence from $\typset{}{P_{UV_2XV_1Y_1}}$. As proved in the appendix, the correct simultaneous decoding is guaranteed by
    \begin{equation}
        R_1 + R_{s1} < I(X; V_2, Y_1 |U) + I(V_1; Y_1 |U,X, V_2).
    \end{equation}
    The state at both decoders is estimated as $h_1(u_i(b), x_i(b), v_{1,i}(b), v_{2,i}(b), y_{1,i}(b))$ and $h_2(u_i(b), v_{2,i}(b), y_{2,i}(b))$ for $i\in [n]$ and $b\in [B]$.
    Combining the above inequalities, we obtain the results.
\end{proof}

\begin{prop}\label{prop:deg-bc-scc}
$\calR^{\BC}(D_1, D_2)$ has the following properties:
    \begin{enumerate}
        \item If $D_1'\ge D_1$ and $D_2'\ge D_2$, then $\calR^{\BC}(D_1, D_2)  \subseteq \calR^{\BC}(D_1', D_2') $;
        \item Given two regions $(R_1', R_2') \in \calR^{\BC}(D_1', D_2') $ and $(R_1'', R_2'') \in \calR^{\BC}(D_1'', D_2'') $, let $0\le\lambda \le 1$, $R_1 = \lambda R_1'+(1-\lambda) R_1''$, $R_2=\lambda R_2'+ (1-\lambda) R_2''$, and $D_1=\lambda D_1' + (1-\lambda) D_1''$, $D_2 = \lambda D_2' + (1-\lambda) D_2''$, then $(R_1, R_2) \in \calR^{\BC}(D_1, D_2) $;
        \item $\calR^{\BC}(D_1, D_2) $ is a convex set;
        \item In order to exhaust the region $\calR^{\BC}(D_1, D_2) $, $(h_1, h_2)$ are given by the respective optimal estimators
        \begin{align*}
            &h_1^*(u,x,v_1,v_2,y_1) =\\
            &\hspace{3mm}\argmin_{\hat{s}\in \hat{\calS}} \sum_{s\in \calS} P_{S|UXV_1V_2Y_1}(s|u,x,v_1,v_2,y_1)d_1(s,\hat{s}),\\
            &h_2^*(u,v_2,y_2) =\\
            &\hspace{3mm}\argmin_{\hat{s}\in \hat{\calS}} \sum_{s\in \calS} P_{S|UV_2Y_2}(s|u,v_2,y_2)d_2(s,\hat{s}).
        \end{align*}
    \end{enumerate}
\end{prop}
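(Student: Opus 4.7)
My plan is to handle the four claims in order, relying on the structure of $\calR^{\BC}(D_1, D_2)$ given by \eqref{eq:rdregion} and $\pdd^{\BC}$ given by \eqref{eq:pddbc}. Parts~1) and 4) fall out of direct inspection of the definition, part~2) reduces to a standard time-sharing construction, and part~3) is a corollary of part~2).

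For part~1), I would note that relaxing the distortion targets from $(D_1, D_2)$ to any $(D_1', D_2')$ with $D_1' \ge D_1$ and $D_2' \ge D_2$ can only enlarge the set $\pdd^{\BC}$: the two distortion inequalities in \eqref{eq:pddbc} become weaker while the factorisation and Markov constraints stay the same. Since the rate bounds in \eqref{eq:rdregion} are evaluated on the same joint distributions, every rate pair feasible for $(D_1, D_2)$ is also feasible for $(D_1', D_2')$, and hence $\calR^{\BC}(D_1, D_2) \subseteq \calR^{\BC}(D_1', D_2')$.

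For part~2), I would invoke time-sharing. Take admissible tuples $(U', X', V_1', V_2', h_1', h_2') \in \pdd^{\BC}$ realising $(R_1', R_2')$ at distortions $(D_1', D_2')$ and $(U'', X'', V_1'', V_2'', h_1'', h_2'')$ realising $(R_1'', R_2'')$ at $(D_1'', D_2'')$. Let $Q \in \{1,2\}$ be an auxiliary random variable with $\Pr\{Q=1\}=\lambda$, independent of $(S, S_T)$ and of the two codebook families. Define a composite tuple by $U = (Q, U^{(Q)})$, $X = X^{(Q)}$, $V_k = V_k^{(Q)}$ for $k \in \{1,2\}$, and estimators that branch on $Q$ (which is recoverable from $U$ at both decoders). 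Because the channel law $P_{Y_1Y_2|XS}$ and the feedback function $\psi$ are common to both branches, the chains $U - (X, S_T) - Y_1 - Y_2$ and $(V_1, V_2) - (U, X, S_T, Z) - (Y_1, Y_2)$ hold conditionally on $Q$ and therefore globally, so the composite tuple lies in $\pdd^{\BC}$ with the convex-combined distortion targets by linearity of expectation. Conditioning on $Q$ inside the mutual-information terms gives $I(U; Y_2) \ge I(U^{(Q)}; Y_2 \mid Q) = \lambda I(U'; Y_2') + (1-\lambda) I(U''; Y_2'')$, and likewise decomposes $I(V_2; S_T, Z, X \mid U, Y_2)$ and the two mutual-information terms controlling $R_1$ as weighted sums, so $(\lambda R_1' + (1-\lambda) R_1'', \lambda R_2' + (1-\lambda) R_2'') \in \calR^{\BC}(\lambda D_1' + (1-\lambda) D_1'', \lambda D_2' + (1-\lambda) D_2'')$.

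Part~3) then follows at once from part~2) by specialising $D_1' = D_1'' = D_1$ and $D_2' = D_2'' = D_2$. For part~4), I would fix any admissible $(U, X, V_1, V_2)$ together with estimators $(h_1, h_2)$ satisfying the distortion constraints; replacing them with $(h_1^*, h_2^*)$ defined in the statement can only decrease the expected distortion at each decoder by Lemma~\ref{lemma:optest}, while leaving the Markov structure and the rate bounds in \eqref{eq:rdregion} intact; hence restricting to the MMSE-type estimators does not shrink $\calR^{\BC}(D_1, D_2)$. The main obstacle I expect is in part~2), namely verifying that folding $Q$ into $U$ preserves the full Markov structure of $\pdd^{\BC}$, in particular the joint chain $(V_1, V_2) - (U, X, S_T, Z) - (Y_1, Y_2)$ that mixes all three state variables. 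I would discharge this by writing the joint distribution as $\Pr\{Q=q\}$ times the conditional law of the branch-$q$ variables, which by hypothesis satisfies the branch-wise chains, and checking that the marginal chains then hold unconditionally; the rest of the argument is a routine application of the chain rule for mutual information.
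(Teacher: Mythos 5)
Your proposal is correct and follows essentially the same route as the paper's proof: part 1 via monotonicity of the constraint set $\pdd^{\BC}$, part 2 via time-sharing with $Q$ absorbed into $U$ so that $I(U_Q;Y_2\,|\,Q)\le I(\tilde U;Y_2)$ while the conditional mutual-information terms split as weighted sums, part 3 as the special case $D_1'=D_1''$, $D_2'=D_2''$, and part 4 from Lemma~\ref{lemma:optest} together with the monotonicity in part 1.
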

\begin{proof}
    See
    \if \doublecol1
        \cite[Appendix~D]{li2025simultaneousdecodingapproachjoint}.
    \else
        Appendix~\ref{app:bc-prop}.
    \fi
\end{proof}

Another achievable \ac{rd} region based on
the random binning and sequential decoding approach can also be derived, as detailed in\cite{li2024analysis}. With $R_{s1}$, $R_{s2}$ denoting the compression rates for $V_1$, $V_2$, respectively, Decoder 2 performs the same steps as the point-to-point case in\cite{choudhuri2013causal} to first decode $U$ and then $V_2$, requiring 
\begin{equation}
\begin{split}
    R_2 + R_{s2} &< I(U; Y_2),\\
    R_{s2} &> I(V_2; S_T, Z, X| U, Y_2).
\end{split}
\end{equation}
Decoder 1 should first recover $(U,X)$ to obtain the bin indices of $(V_1, V_2)$ with
\begin{equation}
    R_1 + R_{s1} < I(X; Y_1 | U).
\end{equation}
It is then able to look for both finer indices. For $V_2$, it requires
\begin{equation}
    R_{s2} > I(V_2; S_T, Z| U, X, Y_1),
\end{equation}
while for $V_1$
\begin{equation}
    R_{s1} > I(V_1; S_T, Z| U, X, V_2, Y_1).
\end{equation}
Combined together,
the achievable \ac{rd} region with the sequential decoding approach is derived as $\calR_{\SEQ}^{\BC}(D_1, D_2)$, containing all nonnegative $(R_1, R_2)$ with
\begin{equation}\label{eq:bc-seq-rd}
\begin{split}
    & R_2 <  I(U; Y_2) - \\
    &\hspace{5mm}\max\br{I(V_2; S_T, Z, X| U, Y_2), I(V_2; S_T, Z | U, X, Y_1)},\\
    & R_1 < I(X; Y_1 | U) - I(V_1 ; S_T, Z|U, X, V_2,  Y_1),\\
\end{split}
\end{equation}
for $(U, X, V_1, V_2, h_1, h_2) \in \pdd^{\BC}$.
It follows obviously $\calR^{\BC}(D_1, D_2) \supseteq \calR_{\SEQ}^{\BC}(D_1, D_2)$. With sequential decoding, we note that $V_2$ is not guaranteed to be correctly decoded at Decoder 1 even if Decoder 2 can recover it since the Markov chain $V_2 - Y_1 - Y_2$ doesn't hold in general. The rate $R_{s2}$ is thus constrained twice. Additionally, because $V_2$ is decoded after $X$, it can not help to decode $X$ and the total rate at Decoder 1 in $\calR_{\SEQ}^{\BC}(D_1, D_2)$ is still $I(X;Y_1|U)$. The comparison between both derived regions highlights the benefits brought by the proposed coding scheme.

An outer bound of $\calC(D_1, D_2)$ is then obtained. Define $\calO^{\BC}(D_1, D_2)$ as a set of nonnegative $(R_1, R_2)$ with
\begin{equation}
\begin{split}
    & R_2 <  I(U; Y_2) - I(V_2; Y_1| U, Y_2),\\
    & R_1 < I(X; Y_1 | U) - I(V_1 ; S_T|U, X, V_2,  Y_1),\\
    & R_1 + R_2 < I(X; Y_1) - I(S_T; V_1,V_2|U, X, Y_1, Y_2)
\end{split}
\end{equation}
for $(U, X, V_1, V_2, X, h_1, h_2) \in \pdd^{\BC}$.

\begin{theorem}\label{theorem:deg-bc-outer}
    The \ac{cd} region for the degraded \ac{sddmbc} in Fig.~\ref{fig:bcmodel} satisfies
    \begin{equation}
        \calO^{\BC}(D_1, D_2) \supseteq \calC^{\BC}(D_1, D_2).
    \end{equation}
\end{theorem}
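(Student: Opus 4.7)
The plan is a standard weak-converse argument; the creative part is choosing auxiliary random variables that simultaneously satisfy the Markov structure of $\pdd^{\BC}$, reproduce the three mutual information differences appearing in $\calO^{\BC}(D_1, D_2)$, and admit single-letter estimators realizing the distortion constraints. For any sequence of codes with vanishing error probability and $\limsup_n D_k^{(n)} \le D_k$, Fano's inequality yields $nR_k \le I(M_k; Y_k^n) + n\epsilon_n$ for $k\in\{1,2\}$. Degradedness $(X, S_T) - Y_1 - Y_2$ allows the sum rate to be routed through the stronger receiver, giving $n(R_1+R_2) \le I(M_1, M_2; Y_1^n) + 2n\epsilon_n$.

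Following the degraded-BC converse template, I would define at time $i$
\begin{align*}
U_i &\triangleq (M_2, Y_{2,i+1}^n), \\
V_{2,i} &\triangleq (U_i, S_T^{i-1}, Z^{i-1}), \\
V_{1,i} &\triangleq (V_{2,i}, M_1, Y_{1,i+1}^n),
\end{align*}
keeping $X_i, S_{T,i}, Z_i, Y_{1,i}, Y_{2,i}$ as in the code. Memorylessness of the channel and of $(S, S_T)$ makes the current outputs $(Y_{1,i}, Y_{2,i})$ depend, given $(U_i, X_i, S_{T,i}, Z_i)$, only on $(X_i, S_i)$, so the Markov chain $(V_1, V_2) - (U, X, S_T, Z) - (Y_1, Y_2)$ required by $\pdd^{\BC}$ holds. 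The single-letter estimators $h_1, h_2$ are taken as optimal conditional estimators in the sense of Lemma~\ref{lemma:optest}, based on $(U, X, V_1, V_2, Y_1)$ and $(U, V_2, Y_2)$ respectively; by optimality they achieve distortion no worse than the code's actual per-symbol distortion, so $\expcs{}{d_k(S, h_k(\cdot))} \le D_k$ transfers via time averaging.

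The three rate bounds then follow from chain-rule expansions combined with the Csisz\'ar sum identity. For $R_2$, expanding $I(M_2; Y_2^n) = \sum_i I(M_2; Y_{2,i} \mid Y_2^{i-1})$ and swapping past $Y_2^{i-1}$ with future $Y_{1,i+1}^n$ produces the $I(U_i; Y_{2,i})$ piece, while the $(S_T^{i-1}, Z^{i-1})$ appended to $V_{2,i}$ generates the subtracted $I(V_{2,i}; Y_{1,i} \mid U_i, Y_{2,i})$. For $R_1$, conditioning on $M_2$ and expanding $I(M_1; Y_1^n \mid M_2)$ yields $I(X_i; Y_{1,i} \mid U_i)$, and Wyner-Ziv-style accounting on $H(S_T^n \mid M_1, M_2, Y_1^n, Y_2^n)$ produces $-I(V_{1,i}; S_{T,i} \mid U_i, X_i, V_{2,i}, Y_{1,i})$. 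The sum-rate inequality combines $I(M_1, M_2; Y_1^n)$ with $H(S_T^n) - H(S_T^n \mid M_1, M_2, Y_1^n, Y_2^n)$ and isolates $-I(S_T; V_1, V_2 \mid U, X, Y_1, Y_2)$ after rearrangement.

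The main obstacle will be carrying out the Csisz\'ar manipulations so that the subtractive terms come out \emph{exactly} as $I(V_2; Y_1 \mid U, Y_2)$, $I(V_1; S_T \mid U, X, V_2, Y_1)$, and $I(S_T; V_1, V_2 \mid U, X, Y_1, Y_2)$, rather than close variants. The bookkeeping is delicate because $V_{2,i}$ plays a dual role---state description for Decoder 2 and the quantity whose extra information about $Y_1$ beyond $Y_2$ is subtracted from $R_2$---so balancing which past/future quantities sit in $V_2$ versus $V_1$ is crucial. Once the accounting closes, a time-sharing variable $Q$ uniform on $[n]$ converts the averaged single-letter bounds into the claimed region, and $n\to\infty$, $\epsilon_n \to 0$ finishes the proof.
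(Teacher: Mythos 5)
Your high-level template (Fano, auxiliary-variable identification, Csisz\'ar-sum manipulations, time sharing) is the same as the paper's, but your specific choice of auxiliary random variables does not work, and the proof breaks at three concrete points. First, membership in $\pdd^{\BC}$ requires the Markov chain $U - (X, S_T) - Y_1 - Y_2$; with your $U_i = (M_2, Y_{2,i+1}^n)$ this fails in the presence of feedback, because $X_{i+1}^n$ is a function of $Z^i$ and $Z_i = \psi(Y_{1,i}, Y_{2,i})$, so the future outputs $Y_{2,i+1}^n$ are \emph{not} conditionally independent of $Y_{1,i}$ given $(X_i, S_{T,i})$. The same feedback path makes the second required chain $(V_1, V_2) - (U, X, S_T, Z) - (Y_1, Y_2)$ delicate for your $V_{1,i} \ni Y_{1,i+1}^n$. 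Second, the distortion transfer does not close: the region demands an estimator $h_2(U, V_2, Y_2)$, and your $(U_i, V_{2,i}, Y_{2,i}) = (M_2, Y_{2,i+1}^n, S_T^{i-1}, Z^{i-1}, Y_{2,i})$ does not determine $Y_2^{i-1}$ (since $Z^{i-1}$ need not reveal it, e.g.\ $Z = \varnothing$), so the code's estimator $h_{2,i}(Y_2^n)$ is not a function of your single-letter variables and ``optimality'' cannot be invoked; the analogous problem occurs at Decoder 1, which is missing $Y_1^{i-1}$. Third, the subtracted term $I(V_2; Y_1 \mid U, Y_2)$ in the $R_2$ bound is a correlation between the auxiliary $V_2$ and the \emph{stronger} output $Y_1$; it arises in the paper from the decomposition $I(M_2; Y_2^n) = I(M_2, Y_1^n; Y_2^n) - I(Y_1^n; Y_2^n \mid M_2)$ together with an identification in which $V_{2,i}$ is literally built from channel outputs. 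Your $V_{2,i} = (U_i, S_T^{i-1}, Z^{i-1})$ has no mechanism for producing exactly that term, and your sketch of how it ``generates'' the subtraction is not a derivation.

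For contrast, the paper takes $U_i = (M_2, Y_1^{i-1}, Y_2^{i-1})$ (past outputs only, so the memorylessness of the channel and of $(S, S_T)$ gives the first Markov chain immediately), $V_{2,i} = (Y_1^{i-1}, Y_2^{n\backslash i})$ and $V_{1,i} = (S_T^{i-1}, Y_1^{n\backslash i}, Y_2^{n\backslash i})$, so that $(U_i, V_{2,i}, Y_{2,i})$ contains all of $Y_2^n$ and $(U_i, X_i, V_{1,i}, V_{2,i}, Y_{1,i})$ contains all of $Y_1^n$, making the distortion constraints transfer trivially. The $R_1$ and sum-rate bounds are then obtained by adding and subtracting $I(S_T^n; Y_1^n \mid M_1, M_2)$ and regrouping, which is where the genuinely Wyner--Ziv-like terms $I(V_1; S_T \mid U, X, V_2, Y_1)$ and $I(S_T; V_1, V_2 \mid U, X, Y_1, Y_2)$ come from. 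If you redo your argument with past-output-based auxiliaries of this form, the rest of your outline goes through.
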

\begin{proof}
    See
    \if \doublecol1
        \cite[Appendix~E]{li2025simultaneousdecodingapproachjoint}.
    \else
        Appendix~\ref{app:deg-bc-outer}.
    \fi
\end{proof}

\begin{remark}
    If Decoder 2 has no state estimation task, i.e., $D_2 = \infty$, and $Z$ is independent of $Y_2$, the corresponding inner bounds $\calR^{\BC}(D_1, \infty)$, $\calR_{\SEQ}^{\BC}(D_1, \infty)$, and the outer bound $\calO^{\BC}(D_1, \infty)$ coincide with each other for all feasible $D_1$ by setting $V_2 = \varnothing$. This is an extension of the setting in \cite{bross2020message}.
\end{remark}

\subsection{Examples}

\subsubsection{Monostatic-downlink ISAC}

One important \ac{isac} scenario is the monostatic-downlink system\cite{liu2022integrated, ahmadipour2022information}, where a \ac{bs} sends a signal to a \ac{ue} via the downlink channel and simultaneously estimates the channel state from the echo signal in a monostatic manner. As discussed in point-to-point channel, the monostatic sensing link can be viewed as a separated transceiver system, so the overall system is modeled as a \ac{sddmbc}, in which the encoder is the \ac{bs} transmitter, and the two decoders are \ac{bs} receiver and the \ac{ue}, respectively. Following the same setup in \cite{ahmadipour2022information}, let $S_T = \varnothing$, $Y_1 = (Y'_1, X)$ with $Y'_1$ the target echo signal, $Y_2 = (Y_2', S)$ with $Y_2'$ the normal received downlink signal, $Z = Y_1'$, it's obvious $X - Y_1 - Y_2$ forms a Markov chain for arbitrary channel such that the constructed \ac{sddmbc} is degraded. By setting $U = X$, $V_1 = V_2 = \varnothing$, we recover the results in \cite{ahmadipour2022information}. Namely, the relationship between downlink capacity and sensing distortion is given by
    \begin{equation}
    \begin{split}
        C_{\mathrm{MD}}(D) &\triangleq \max_{P_X: \expcs{}{d_1(S, h_1^*(X, Y_1'))} \le D} I(X; Y_2', S)\\
        &= \max_{P_X: \expcs{}{d_1(S, h_1^*(X, Y_1'))} \le D} I(X; Y_2'| S).
    \end{split}
    \end{equation}

\subsubsection{Quadratic-Gaussian Case}

We consider the degraded additive Gaussian broadcast channel
\begin{equation}
\begin{split}
    Y_1 &= X + S + W_1,\\
    Y_2 &= Y_1 + W_2 = X+S+W_1+W_2
\end{split}
\end{equation}
with $S\sim \calN(0,Q)$, $W_1 \sim \calN(0,N_1)$, $W_2 \sim \calN(0,N_2)$ and the distortion functions are both $d(s,\hat{s}) = (s-\hat{s})^2$. The input power is constrained by $\expcs{}{|X|^2} \le P$, and the encoder observes $S_T = S+W_T$ with $W_T \sim \calN(0,N_T)$. 
We first investigate the behavior of the derived regions with the following choice of $(U, X, V_1, V_2)$:
\begin{equation}
\begin{split}
    &X = U+V, \\
    &U\sim \calN(0, \alpha P), \ V\sim \calN(0, (1-\alpha)P),\ 0\le \alpha \le 1,\\
    &V_2 = S_T + E_2, \ E_2\sim \calN(0, d_2^2),\\
    &V_1 = S_T + E_1, \ E_1\sim \calN(0,d_1^2),
\end{split}
\end{equation}
and $U, V, E_1, E_2$ are independent of each other as well as of $S, W_1, W_2, W_T$. Note that the effective additive noise at Decoder~1 is $V+N_1+N_2$ and thus the rate of $R_2 < C_{\mathrm{QG}}(D_2, \alpha P, Q, (1-\alpha)P+ N_1+N_2, N_T)$ is achieved using either simultaneous or sequential decoding approach. Fixing $D_2$ and varying $\alpha$ and $d_1^2$, with $d_2^2$ calculated for each $\alpha$, $D_2$ based on \eqref{eq:cqgd2}, we can obtain the respective achieved \ac{rd} regions from $\calR^{\BC}(D_1, D_2)$, $\calR_{\SEQ}^{\BC}(D_1, D_2)$, $\calO^{\BC}(D_1, D_2)$.

Let $P=5$, $Q=N_1=N_2=1$, $N_T=0.3$, and $D_2=0.5$, the achieved regions of $(R_1, R_2, D_1)$ of the two inner bounds are illustrated in Fig.~\ref{fig:bcg_simseq}. As mentioned above, the same constraint results in the same range of $R_2$. On the other hand, the maximum values of $D_1$ are also the same in both regions because their respective \ac{mmse} at Decoder 1 are equal for the same $d_1^2$ and $d_2^2$. Due to the benefits of simultaneous decoding, an additional rate of $I(X; V_2|U, Y_1)$ is gained at Decoder 1, which is reflected by the maximum achieved $R_1$, annotated on both figures. In turn, a lower distortion $D_1$ can thus be reached since more rates can be split for state compression.

\begin{figure}
    \begin{subfigure}{\columnwidth}
    \centering
        \includegraphics[scale=0.8]{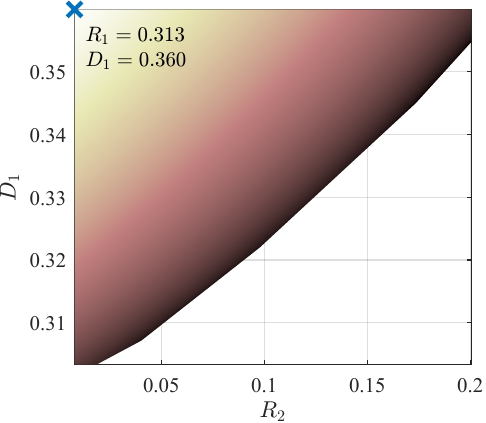}
        \caption{Simultaneous Decoding Method}
    \end{subfigure}
    \vfill
    \vspace{2mm}
    \begin{subfigure}{\columnwidth}
    \centering
        \includegraphics[scale=0.8]{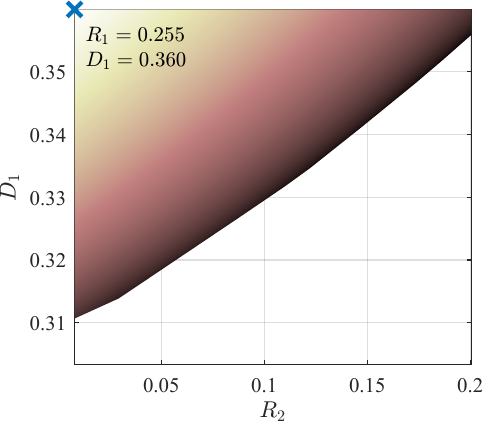}
        \caption{Sequential Decoding Method}
    \end{subfigure}
    \caption{Comparison of $(R_1, R_2, D_1)$ achieved by simultaneous and sequential decoding methods while fixing $D_2$. The black region corresponds to $R_1=0$, and lighter areas indicate higher $R_1$.}
    \label{fig:bcg_simseq}
\end{figure}

When $D_2 = \infty$, corresponding to $V_2 = \varnothing$, the resulting three regions are shown to coincide with each other and given as
\begin{equation}
\begin{split}
    R_2 &< \frac{1}{2} \log \br{1 + \frac{\alpha P}{Q+(1-\alpha)P+N_1+N_2}},\\
    R_1 &< C_{\mathrm{QG}}(D_1, (1-\alpha) P, Q, N_1, N_T).
\end{split}
\end{equation}
It turns out that such region is optimal if $Z$ is independent of $Y_2$, which can be proved similarly to Proposition~\ref{prop:cqg} by noting that $V_1$ and $h_1$ need not depend on $(U,X)$, thus the problems of finding the optimal $(U,X)$ and $(V_1, h_1)$ are decoupled, where the former follows the conventional steps in degraded Gaussian broadcast channel\cite{el2011network} and the latter results from\cite{draper2002successive}. For the case of perfect \ac{sit}, i.e., $N_T = 0$, such region reduces to the one found in\cite[Proposition~3]{bross2020message}.

\subsubsection{Improving Distortion by Compressing Codewords} 

We next show that, unlike the point-to-point channel, the achieved distortion may be improved if one adds the dependency on codewords $(U,X)$ to the state descriptions $(V_1, V_2)$ even when no \ac{sit} and feedback are present. Consider a degraded binary broadcast channel
\begin{equation}
\begin{split}
    Y_1 &= X \oplus S_1,\\
    Y_2 &= Y_1 \oplus S_2 = X\oplus S_1 \oplus S_2,
\end{split}
\end{equation}
where $(X, S_1, S_2, Y_1, Y_2)$ all take binary values $\{0, 1\}$, and $S_1 \sim \Bern(p_1)$, $S_2 \sim \Bern(p_2)$ are independent with $0<p_1 < p_2 < \frac{1}{2}$. Without the state estimation tasks, the capacity region of such a channel is 
\begin{equation}
\begin{split}
    R_2 &< 1 - H_2(\alpha * \tilde{p}_2),\\
    R_1 &< H_2(\alpha * p_1) - H_2(p_1),
\end{split}
\end{equation}
with $0\le\alpha\le \frac{1}{2}$, $\tilde{p}_2 = p_1 * p_2 = p_1(1-p_2) + (1-p_1)p_2$\cite{el2011network}. The achieving random variables are given by $U\sim \Bern(\frac{1}{2})$, $X = U\oplus V$ with $V\sim \Bern(\alpha)$.

Suppose Decoder 1 and 2 are interested in $S_1$, $S_2$, respectively, and the distortion function is the Hamming distance $d_H$. Clearly, Decoder 1 can achieve $D_1 = 0$. With the above choice of $(U,X)$, the estimator at Decoder 2 depends on $(U, Y_2)$, or equivalently, on $V \oplus S_1 \oplus S_2$. Note that $V\oplus S_1 \sim \Bern(\alpha * p_1)$ is independent of $S_2$, and thus the minimum achieved distortion at Decoder 2, in this case, is given by 
\begin{equation}
    D_{2,V_2=\varnothing} = \min \{\alpha * p_1, p_2 \}
\end{equation}
due to Lemma~\ref{lemma:bernoptest}. If we choose $V_1 = \varnothing$, $V_2 = V$ and keep $(U, X)$ as before, the resulting region from \eqref{eq:rdregion} becomes
\begin{equation}
\begin{split}
    R_2 & < I(U; Y_2) - I(V_2; X | U, Y_2)\\
    & =  1 - H_2(\alpha * \tilde{p}_2) - H(V | U, Y_2)\\
    & = 1 - H_2(\alpha * \tilde{p}_2) - H(V | V\oplus S_1 \oplus S_2)\\
    &= 1 - H_2(\alpha * \tilde{p}_2) - H(V) \\
    &\hspace{10mm}- H(V\oplus S_1 \oplus S_2 | V) + H(  V\oplus S_1 \oplus S_2)\\
    &= 1 - H_2(\alpha * \tilde{p}_2) - H_2(\alpha) - H_2(\tilde{p}_2) + H(\alpha * \tilde{p}_2)\\
    & = 1 - H_2(\alpha) - H_2(\tilde{p}_2),\\
    R_1 & < I(X; V_2, Y_1 |U)\\
    &= H(V,Y_1 | U) - H(V,Y_1|U,X)\\
    &= H(V) + H(Y_1 | U,V) - H(Y_1 | X)\\
    &= H(V) = H_2(\alpha).
\end{split}
\end{equation}
As Decoder 2 knows $V_2=V$ now, the estimator depends only on $S_1 \oplus S_2$ and the achieved distortion is
\begin{equation}
    D_{2,V_2 = V} = \min \{p_1, p_2\} = p_1.
\end{equation}
To achieve such distortion, one should set $\alpha = 0$ in the first coding scheme, i.e., no message is sent to Decoder 1, because $\alpha * p_1\ge p_1$ for any $0\le\alpha\le\frac{1}{2}$. In contrast, with proper selections of $\alpha$, one can achieve nonzero rate pairs $(R_1, R_2)$ in the second scheme. As an example, we set $p_1 = \frac{1}{20}$, $p_2 = \frac{1}{10}$. The obtained regions are illustrated in Fig.~\ref{fig:bcbin}.

\begin{figure}
    \centering
    \includegraphics[scale=0.8]{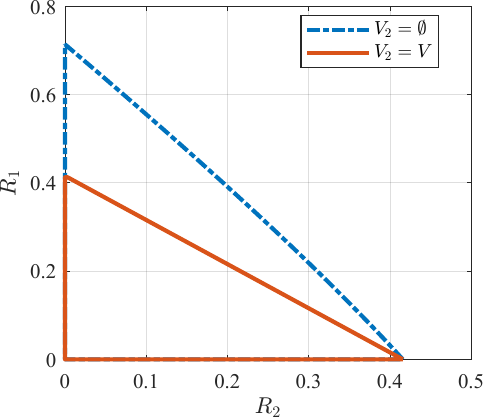}
    \caption{Achieved rate regions of the degraded binary broadcast channel with different choices of $V_2$.}
    \label{fig:bcbin}
\end{figure}

\section{Multiple Access Channel}
\subsection{Channel Model}

The studied \ac{sddmmac} for \ac{jsmc} is illustrated in Fig.~\ref{fig:macmodel}.
The individual private messages $M_1\in \calM_1 = [2^{nR_1}]$, $M_2 \in \calM_2 = [2^{nR_2}]$ are encoded into the transmit signals $X_1^n$ and $X_2^n$, respectively. The channel is characterized by the transition distribution $P_{Y|X_1X_2S}$. The feedbacks $Z_1$, $Z_2$ are assumed to be deterministic functions of $Y$, i.e., $Z_1 = \varphi_1(Y)$ and $Z_2 = \varphi_2(Y)$. Upon receiving $Y^n$, the decoder jointly recovers both messages and channel state.
The channel state $S_i$ at time step $i$ is generated \ac{iid} according to $P_S$. Both encoders additionally have access to their respective \ac{sit} $S_1^{i-1}$ and $S_2^{i-1}$ strictly causally, following $P_{S_1S_2|S}$ at each time step. The system consists of the following components:
\begin{enumerate}
    \item Encoding functions $q_{e,k}^n$ with $q_{e,k,i}: \calM_k \times \calS_k^{i-1} \times \calZ_k^{i-1} \to \calX_k$ at Encoder $k \in \{1,2\}, \forall i \in [n]$;
    \item Message decoding function $q_d: \calY^n \to \calM_1 \times \calM_2$;
    \item State estimation function $h^n$with $h_i: \calY^n \to \hat{\calS}, \forall i \in [n]$.
\end{enumerate}

\begin{figure}[h]
\centering
\includegraphics[scale=0.9]{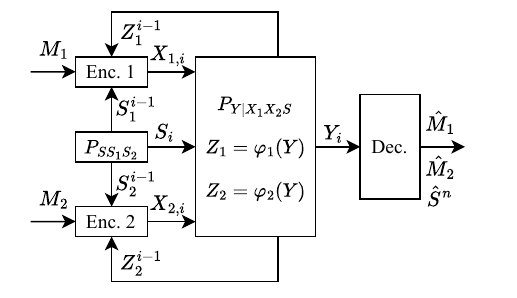}
\caption{The multiple access \ac{jsmc} model.}\label{fig:macmodel}
\end{figure}

Let $d$ be the distortion function as before.
One can define the message decoding error probability $P_e^{(n)}$, averaged distortion $D^{(n)}$, and the corresponding \ac{cd} region.
\begin{definition}
    A tuple of rates and distortion $(R_1, R_2, D)$ is said to be achievable if there exists a sequence of $(2^{nR_1}, 2^{nR_2}, n)$ codes such that
    \begin{align}
            \lim_{n\rightarrow \infty} P_e^{(n)} = 0,\ \limsup_{n\rightarrow \infty} D^{(n)} \le D.
    \end{align}
    The \ac{cd} region of the \ac{sddmmac} in Fig.~\ref{fig:macmodel}, denoted as $\calC^{\MAC}(D)$, is the closure of the set of $(R_1, R_2)$, such that $(R_1, R_2, D)$ is achievable.
\end{definition}

\subsection{Main Results}

We first present an achievable \ac{rd} region for the \ac{sddmmac} model. The coding scheme combines Willems' strategy\cite{willems1983achievable, ahmadipour2023information} to enable partial collaboration between both encoders and the proposed simultaneous decoding approach. 

Let $\calP_D^{\MAC}$ be the set of all random variables $(U,W_1,W_2,X_1, X_2, V_1, V_2)$ and functions $h: \calU\times \calW_1 \allowbreak \times \calW_2\times \calX_1\times \calX_2\times \calV_1\times \calV_2\times \calY \to \hat{\calS}$ such that
\begin{equation}
\begin{split}
    &\hspace{-2mm}\calP_D^{\MAC} \triangleq \left\{ (U, W_1,W_2,X_1,X_2, V_1, V_2, h)\right|\\
    &\hspace{-2mm}P_{SS_1S_2UW_1W_2X_1X_2YZ_1Z_2V_1V_2}(s,s_1,s_2,u,w_1,w_2,\\
    &x_1,x_2,y,z_1,z_2,v_1,v_2) =P_{SS_1S_2}(s,s_1,s_2)P_{U}(u)\\
    &P_{W_1X_1|U}(w_1,x_1|u)P_{W_2X_2|U}(w_2,x_2|u)\\
    &P_{Y | X_1X_2 S} ( y | x_1,x_2 , s)\mathbbm{1}\{z_1=\varphi_1(y)\} \mathbbm{1}\{z_2=\varphi_2(y)\}\\
    &P_{V_1|UW_1W_2X_1S_1Z_1}(v_1|u,w_1,w_2,x_1,s_1,z_1)\\
    &P_{V_2|UW_1W_2X_2S_2Z_2}(v_2|u,w_1,w_2,x_2,s_2,z_2); \\
    &\left. \expcs{}{d(S, h(U,W_1,W_2, X_1,X_2, V_1, V_2, Y))} \le D \right \}.
\end{split}
\end{equation}
The derived achievable \ac{rd} region is given by $\calR^{\MAC}(D) = co(\calR_i^{\MAC}(D))$, i.e., the convex hull of $\calR_i^{\MAC}(D)$, which contains all nonnegative rate pairs $(R_1, R_2)$ such that
\begin{equation}\label{eq:macrdregion}
\begin{split}
    &\hspace{-1cm} R_1 < I(X_1, V_1; Y | U,W_1,W_2,X_2,V_2) \\
    &+ I(W_1; Z_2 | S_2,U,W_2,X_2) \\
    &-I(V_1; S_1, Z_1 | U, W_1, W_2, X_1)\\
    &\hspace{-1cm} R_2 < I(X_2, V_2; Y | U,W_1,W_2,X_1,V_1)  \\
    &+ I(W_2; Z_1 | S_1,U,W_1,X_1)\\
    &- I(V_2; S_2, Z_2 | U, W_1, W_2, X_2)\\
    &\hspace{-1cm} R_1 + R_2 <  I(X_1, X_2, V_1, V_2 ; Y | U,W_1,W_2) \\
    &+  I(W_1; Z_2 | S_2,U,W_2,X_2)\\
    &+ I(W_2; Z_1 |S_1, U,W_1,X_1)\\
    &-  I(V_1; S_1, Z_1 | U, W_1, W_2, X_1)\\
    &- I(V_2; S_2, Z_2 | U, W_1, W_2, X_2)\\
    &\hspace{-1cm}R_1 + R_2<  I(U,W_1,W_2,X_1,X_2,V_1,V_2; Y) \\
    &- I(V_1; S_1, Z_1 | U, W_1, W_2, X_1)\\
    &- I(V_2; S_2, Z_2 | U, W_1, W_2, X_2)
\end{split}
\end{equation}
for $(U, W_1,W_2,X_1,X_2, V_1, V_2, h) \in \calP_D^{\MAC}$.

The basic idea of the proposed coding scheme is twofold. The first is to enable partial collaboration between encoders through the feedback links. We follow Willems' coding scheme\cite{willems1983achievable, ahmadipour2023information}, splitting the total data rate at Encoder $k\in \{1,2\}$ into two parts $R_k = R_k'+R_k''$, where $R_k'$ is the data rate that can be decoded by the other encoder, represented by the random variable $W_k$ and constrained by $I(W_{\bar{k}}; Z_k | U, W_k, X_k, S_k)$ for $\bar{k} \neq k$. This decoded shared information is then represented by the auxiliary random variable $U$ for collaboration. The second idea is to split $R_k''$ further into the data rate for the remaining private messages and that for the state descriptions. The compression of the \ac{sit} $S_k$ with the knowledge of $Z_k$ and all the other codewords from the previous block is exactly the same as before, represented by $V_k$. The decoder performs the backward simultaneous decoding of all the codewords and state descriptions, which is beneficial because the rate bounds on many error events are redundant. Such a simultaneous decoding approach is also applied in Willems' scheme, and thus, our \ac{rd} region $\calR^{\MAC}(D)$ reduces to  Willems' rate region when $S_1=S_2 =\varnothing$.

\begin{theorem}\label{thm:macinner}
    The \ac{cd} region for the \ac{sddmmac} in Fig.~\ref{fig:macmodel} satisfies
    \begin{equation}
        \calR^{\MAC}(D) \subseteq \calC^{\MAC}(D).
    \end{equation}
\end{theorem}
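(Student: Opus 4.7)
The plan is to prove achievability via a block Markov coding scheme with Willems' partial-cooperation strategy, where state descriptions are compressed jointly with messages and the decoder performs backward simultaneous decoding. I would transmit over $B+1$ blocks and let $B,n\to\infty$, and split each rate as $R_k = R_k' + R_k''$, with $R_k'$ the portion of Encoder~$k$'s message that will be decoded by Encoder~$\bar{k}$ from the feedback $Z_{\bar{k}}$.

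\textbf{Codebook generation.} For each block $b$, I would generate $2^{n(R_1'+R_2')}$ sequences $u^n(m_1'(b-1), m_2'(b-1))$ i.i.d.\ from $P_U^n$. Conditioned on each $u^n$, independently generate $2^{nR_1'}$ sequences $w_1^n(m_1'(b) \mid \cdot)$ from $P_{W_1|U}^n$ and $2^{nR_2'}$ sequences $w_2^n(m_2'(b) \mid \cdot)$ from $P_{W_2|U}^n$. Conditioned on $(u^n, w_k^n)$, generate $2^{nR_k''}$ sequences $x_k^n(m_k''(b) \mid \cdot)$ from $P_{X_k|UW_k}^n$ for $k\in\{1,2\}$. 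Finally, conditioned on $(u^n, w_1^n, w_2^n, x_k^n)$, generate $2^{nR_{sk}}$ sequences $v_k^n(l_k(b) \mid \cdot)$ from $P_{V_k|UW_1W_2X_k}^n$ to serve as state-description codewords.

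\textbf{Encoding.} In block $b$, Encoder~$k$ already holds $(S_k^{n}(b-1), Z_k^{n}(b-1))$ strictly causally. It finds an index $l_k(b-1)$ such that $(v_k^n(l_k(b-1)\mid\cdot), s_k^n(b-1), z_k^n(b-1))$ is jointly typical conditioned on the previously used $(u^n, w_1^n, w_2^n, x_k^n)$; by the covering lemma this succeeds with high probability if $R_{sk} > I(V_k; S_k, Z_k \mid U, W_1, W_2, X_k)$. It then also extracts $m_{\bar{k}}'(b-1)$ from $Z_k^{n}(b-1)$ by joint typicality; the packing lemma shows this is reliable provided $R_{\bar{k}}' < I(W_{\bar{k}}; Z_k \mid U, W_k, X_k, S_k)$. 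Both encoders thereby agree on $(m_1'(b-1), m_2'(b-1))$ and transmit using $u^n$ indexed by these, together with fresh $w_k^n, x_k^n$ for the new split message and the state-description index $l_k(b-1)$ absorbed into $x_k^n$ or concatenated with $m_k''(b)$ in the index assignment. In the final block $B+1$, only the state descriptions are conveyed.

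\textbf{Backward simultaneous decoding and error analysis.} Starting from block $B+1$, the decoder recovers $(l_1(B), l_2(B))$ first, then in block $b$ decodes $(m_1'(b), m_2'(b), m_1''(b), m_2''(b), l_1(b-1), l_2(b-1))$ jointly by searching for the unique tuple such that $(u^n, w_1^n, w_2^n, x_1^n, x_2^n, v_1^n, v_2^n, y^n(b))$ is jointly typical, using the already-decoded $l_k(b)$ to pin down $u^n$ of the next block and carry backward. The error analysis proceeds by enumerating all incorrect tuples according to which subset of the seven codeword indices is wrong; for each subset, the standard joint-typicality bound yields a constraint involving the conditional mutual information between the wrong codewords and $y^n$. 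After collecting the rate constraints on $R_k'', R_{sk}$, substituting the covering-lemma constraints, and using $R_k = R_k' + R_k''$ together with Fourier--Motzkin elimination of $R_k', R_k'', R_{sk}$, one recovers exactly the four inequalities defining $\calR_i^{\MAC}(D)$. The state is reconstructed symbolwise via $h$ applied to the decoded $(u_i, w_{1,i}, w_{2,i}, x_{1,i}, x_{2,i}, v_{1,i}, v_{2,i}, y_i)$; by the typical average lemma the per-symbol distortion is at most $D$ with high probability, and taking the convex hull over time-sharing closes the argument.

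\textbf{Main obstacle.} The genuinely delicate step is the Fourier--Motzkin elimination for the simultaneous-decoding error events: there are many partially-correct patterns (e.g.\ correct $W_1$ but wrong $X_1$, $V_1$, and $l_2(b-1)$), and one must carefully use the Markov structure encoded in $\calP_D^{\MAC}$, in particular the conditional independence of $(W_1, X_1, V_1)$ and $(W_2, X_2, V_2)$ given $U$, to combine the mutual-information terms so that the shared-rate bonuses $I(W_k; Z_{\bar{k}}\mid S_{\bar{k}}, U, W_{\bar{k}}, X_{\bar{k}})$ appear additively in exactly the way stated in \eqref{eq:macrdregion}, without introducing extra constraints that would shrink the region.
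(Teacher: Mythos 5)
Your overall architecture matches the paper's proof almost exactly: Willems' rate splitting $R_k=R_k'+R_k''$ with cross-decoding of $W_{\bar k}$ from the feedback, superposition codebooks $U\to W_k\to X_k\to V_k$ with $l_{k,b-1}$ absorbed into the $x_k^n$ index, covering-lemma compression of $(S_k,Z_k)$, backward simultaneous decoding, and the constraints \eqref{appeq:mac-rq'}, \eqref{appeq:mac-rsq}, \eqref{appeq:mac-rq} combined by eliminating $R_k',R_k'',R_{sk}$. The error-event bookkeeping you flag as the main obstacle is indeed where the paper shows that all partially-correct patterns are redundant given the listed constraints.

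However, there is a genuine gap in your termination of the block Markov scheme. You assert that ``in the final block $B+1$, only the state descriptions are conveyed,'' but over a MAC the decoder must recover \emph{both} $l_{1,B+1}$ and $l_{2,B+1}$ in terminal blocks where no future $V$-codewords exist to assist, and simultaneous transmission of both indices need not be decodable. The paper uses three additional blocks ($B+2$ through $B+4$): Encoder~1 alone sends $l_{1,B+1}$ at rate below $I(X_1;Y\,|\,X_2)$; Encoder~2 alone sends $l_{2,B+1}$; and then Encoder~1 losslessly re-transmits $s_1^{n_2}(B+3)$ in block $B+4$ so that the decoder can decode Encoder~2's terminal block at rate below $I(X_2;S_1,Y\,|\,X_1)$ rather than $I(X_2;Y\,|\,X_1)$, which can vanish (e.g.\ when $Y$ depends on $X_2$ only through the state). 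This construction is propped up by two preparatory lemmas showing that whenever $I(X_1;Y\,|\,X_2)=0$ or $I(X_2;S_1,Y\,|\,X_1)=0$ the claimed region degenerates so that achievability is trivial; without those lemmas and the staggered termination, your scheme has no way to deliver the last state-description indices, and the induction that drives the backward decoding never gets started. You should either reproduce this termination argument or supply an alternative one before the rest of your proof goes through.
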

\begin{proof}
    See 
    \if \doublecol1
        \cite[Appendix~F]{li2025simultaneousdecodingapproachjoint}
    \else
        Appendix~\ref{app:mac-thm}
    \fi
    for detailed proof. A sketch of the coding scheme is provided here.

    In each block $b$, Encoder $k\in \{1,2\}$ splits the private messages $m_{k,b} = (m_{k,b}', m_{k,b}'')$ with $m_{k,b}'\in [2^{nR_k'}]$, $m_{k,b}''\in [2^{nR_k''}]$ and $R_k=R_k'+R_k''$. Both encoders generate $2^{n(R_1'+R_2')}$ sequences $u^n(b) = u^n(m_{1,b-1}', m_{2,b-1}')$ from $P^n_U$, For each $u^n(b)$, Encoder $k$ generates $2^{nR_k'}$ sequences of $w_k^n(b) = w_k^n(m_{k,b}'|m_{1,b-1}', m_{2,b-1}')$ from $P^n_{W_k|U}$, $2^{n(R_k''+R_{sk})}$ sequences of $x_k^n(b)=x_k^n(m_{k,b}'', l_{k,b-1}|m_{k,b}', \allowbreak m_{1,b-1}', m_{2,b-1}')$ given each $(u^n(b), w_k^n(b))$ from $P^n_{X_k|W_kU}$, and $v_k^n(b)=v_k^n(l_{k,b}|m_{k,b}'',l_{k,b-1},\allowbreak m_{1,b}', m_{2,b}',m_{1,b-1}', m_{2,b-1}')$ given each $(u^n(b), w_1^n(b), \allowbreak w_2^n(b), x_k^n(b))$ from $P^n_{V_k|UW_1W_2X_k}$, with $l_{k,b}\in [2^{nR_{sk}}]$.

    Encoder $k$ first decodes $m_{\bar{k},b}'$ for $\bar{k}\neq k$ through the feedback link by looking for a unique jointly typical sequence from $\typset{}{P_{UW_1W_2X_kS_kZ_k}}$, requiring
    \begin{equation}\label{eq:macrk'}
        R_{\bar{k}}' < I(W_{\bar{k}}; W_k, X_k, S_k, Z_k |U) = I(W_{\bar{k}}; Z_k | U, W_k, X_k, S_k).
    \end{equation}
    Then it encodes the state description $l_{k,b}$ by finding a jointly typical sequence from $\typset{}{P_{V_kUW_1W_2X_kS_kZ_k}}$, guaranteed by
    \begin{equation}
        R_{sk} > I(V_k; S_k , Z_k | U, W_1, W_2, X_k).
    \end{equation}
    It sends $x^n_k(b)$ in each block, and in the last blocks, it doesn't encode any message and only sends the state description. The other encoder applies the same steps.

    The decoder starts from the last blocks by decoding the state descriptions. The correct decoding in the last blocks is proved in the appendix. Afterward, it decodes all the codewords simultaneously in the backward direction, requiring 
    \begin{equation*}
    \begin{split}
        R_k'' + R_{sk} &< I(X_k,V_k; Y | U,W_1,W_2,X_{\bar{k}},V_{\bar{k}})\\
        R_1'' + R_2'' + R_{s1} + R_{s2} & < I(X_1,X_2,V_1,V_2; Y | U,W_1,W_2)\\
        R_1+R_2+R_{s1} + R_{s2} &< I(U,W_1,W_2,X_1,X_2,V_1,V_2; Y).
    \end{split}
    \end{equation*}
    The channel state is estimated as $h(u_i(b), w_{1,i}(b), w_{2,i}(b),  \allowbreak x_{1,i}(b), x_{2,i}(b), v_{1,i}(b), v_{2,i}(b), y_i(b))$ for $i\in [n]$, $b\in [B]$.

    Combining the requirements above, we obtain the \ac{rd} region $\calR_i^{\MAC}(D)$. By applying the time-sharing strategy, the final region should be a convex hull of $\calR_i^{\MAC}(D)$.
\end{proof}

\begin{prop}\label{prop:macrd}
    $\calR^{\MAC}(D)$ has the following properties:
    \begin{enumerate}
        \item If $D_1 \ge D_2$, then $\calR^{\MAC}(D_1) \supseteq \calR^{\MAC}(D_2)$;
        \item $\calR^{\MAC}(D)$ is a convex set;
        \item To exhaust $\calR^{\MAC}(D)$, the estimation function $h$ is given by the optimal estimator
        \begin{equation*}
        \begin{split}
            &h^*(u, w_1,w_2,x_1,x_2,v_1, v_2, y) = \\
            &\argmin_{\hat{s}\in \hat{\calS}} \sum_{s\in\calS} P_{S|UW_1W_2X_1X_2V_1V_2Y}(s|u, w_1,w_2,x_1,x_2,\\
            &\hspace{5.5cm}v_1, v_2, y) d(s,\hat{s}).
        \end{split}
        \end{equation*}
    \end{enumerate}
\end{prop}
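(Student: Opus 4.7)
The plan is to verify each of the three properties by unpacking the definition of $\calR^{\MAC}(D) = co(\calR_i^{\MAC}(D))$ and the structure of the feasible set $\calP_D^{\MAC}$. The arguments parallel those used for the analogous Proposition~\ref{prop:cdcausalconcave} in the point-to-point case and Proposition~\ref{prop:deg-bc-scc} in the broadcast case, so I would mirror that blueprint and emphasize the few steps where the \ac{mac} structure changes something.

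For property~1, I would first establish monotonicity of the \emph{unhulled} region $\calR_i^{\MAC}(D)$ in $D$. The only way $D$ enters the definition is through the constraint $\expcs{}{d(S,h(\cdot))}\le D$ in $\calP_D^{\MAC}$; every mutual-information expression in \eqref{eq:macrdregion} depends on the joint distribution but not on $D$. Hence $D_1\ge D_2$ immediately gives $\calP_{D_2}^{\MAC}\subseteq \calP_{D_1}^{\MAC}$, so any tuple of rates feasible under $D_2$ remains feasible under $D_1$, proving $\calR_i^{\MAC}(D_2)\subseteq \calR_i^{\MAC}(D_1)$. Taking convex hulls preserves inclusion, yielding $\calR^{\MAC}(D_2)\subseteq \calR^{\MAC}(D_1)$.

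Property~2 is essentially a definitional observation: $\calR^{\MAC}(D)$ is defined as the convex hull of $\calR_i^{\MAC}(D)$, and convex hulls are convex by construction. To make this operationally meaningful, I would note that time-sharing between any two feasible configurations $(U',W_1',W_2',X_1',X_2',V_1',V_2',h')$ and $(U'',W_1'',W_2'',X_1'',X_2'',V_1'',V_2'',h'')$ can be realized within a single configuration by introducing a time-sharing random variable $Q\sim \Bern(\lambda)$ absorbed into $U$, which is consistent with the Markov structure imposed in $\calP_D^{\MAC}$. Convexity of the distortion constraint in the mixture distribution then guarantees the convex combination still lies in $\calP_{D}^{\MAC}$.

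Property~3 follows from Lemma~\ref{lemma:optest} applied pointwise. Fix any $(U,W_1,W_2,X_1,X_2,V_1,V_2)$ in the feasible joint distribution and any estimator $h$ satisfying $\expcs{}{d(S,h(U,W_1,W_2,X_1,X_2,V_1,V_2,Y))}\le D$. The optimal conditional-mean-type estimator $h^*$ defined in the statement minimizes the conditional expected distortion for each realization, so
\begin{equation*}
\expcs{}{d(S,h^*(U,W_1,W_2,X_1,X_2,V_1,V_2,Y))}\le \expcs{}{d(S,h(U,W_1,W_2,X_1,X_2,V_1,V_2,Y))}\le D.
\end{equation*}
Since the mutual-information bounds in \eqref{eq:macrdregion} depend only on the joint distribution of the non-estimator variables, replacing $h$ by $h^*$ leaves the achievable rate bounds unchanged while tightening the distortion. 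Hence restricting attention to $h^*$ loses nothing.

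The only step that requires any care is property~2: one must ensure that the convex combination produced by time-sharing still respects the independence structure $W_k,X_k\perp (W_{\bar k},X_{\bar k})\mid U$ encoded in $\calP_D^{\MAC}$. This is handled by absorbing the time-sharing variable $Q$ into $U$, which is legitimate because $U$ in the definition plays the role of a shared coordinate already common to both encoders. The remaining two properties are almost immediate consequences of Lemma~\ref{lemma:optest} and set-theoretic monotonicity, so I do not expect substantive obstacles beyond keeping bookkeeping of the auxiliary variables consistent.
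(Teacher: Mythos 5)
Your proposal is correct and follows essentially the same route as the paper, which omits the proof of this proposition and simply notes it is conducted as in Propositions~\ref{prop:cdcausalconcave} and~\ref{prop:deg-bc-scc}: monotonicity via $\calP_{D_2}^{\MAC}\subseteq\calP_{D_1}^{\MAC}$, convexity via the convex-hull definition realized operationally by a time-sharing variable absorbed into the common variable $U$, and the optimal estimator via Lemma~\ref{lemma:optest}. Your added care about preserving the conditional independence $P_{W_1X_1|U}P_{W_2X_2|U}$ when absorbing $Q$ into $U$ is exactly the right bookkeeping point.
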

\begin{proof}
    The proof is conducted similarly to before and thus is omitted here.
\end{proof}

    It is worth mentioning that in the absence of feedback, $\calR^{\MAC}(D)$ coincides with the rate region in\cite{li2012multiple} for $D=\infty$ by setting $U= W_1 = W_2 = \varnothing$, resulting in
    \begin{equation}\label{eq:macrd-nofb}
    \begin{split}
        R_1&< I(X_1, V_1; Y | X_2, V_2) - I(V_1; S_1 | X_1),\\
        R_2&< I(X_2, V_2; Y | X_1, V_1) - I(V_2; S_2 | X_2),\\
        R_1 + R_2&< I(X_1, X_2, V_1, V_2; Y) \\
        &\hspace{1cm}- I(V_1; S_1 | X_1)- I(V_2; S_2 | X_2).
    \end{split}
    \end{equation}
    Unlike the point-to-point and \ac{sddmbc} models, where state descriptions can be set to $\varnothing$ for $D=\infty$, $(V_1, V_2)$ in $\calP_D^{\MAC}$ can still be used to convey the state information to assist message decoding, as mentioned in \cite{li2012multiple, lapidoth2012multiple}, and is shown to enlarge the achievable rate region. 

    In\cite{Li2411:Achievable}, we present an achievable region combining the sequential decoding\cite{choudhuri2013causal} and the strategy in\cite{lapidoth2012multiple}. Two-stage state descriptions are designed there, where the first stage description at each encoder compresses only the \ac{sit} and feedback from the previous block and is decoded before messages to enhance the communication performance, following the same idea in\cite{lapidoth2012multiple}. The second stage description, designed in the same way as\cite{choudhuri2013causal} using random binning at each encoder, can only be decoded after all the other codewords and thus used purely for the state estimation task. Providing the new achievable region $\calR^{\MAC}(D)$, which can be shown larger than the previous one in \cite{Li2411:Achievable}, the proposed coding scheme unifies the usage of state description for both communication performance enhancement and state estimation due to the simultaneous decoding process.

A simple outer bound can be obtained by enabling the full collaboration between encoders. In this case, the channel is equivalent to the point-to-point case, and the sum rate is constrained similarly to \eqref{eq:p2prdfunc}. Let the corresponding region be
\begin{equation}\label{eq:macouter}
\begin{split}
    &\calO^{\MAC}(D) \triangleq \{(R_1, R_2): R_1\ge 0, R_2\ge 0,\\
    &R_1 + R_2 < I(X_1, X_2 ; Y) - I(V_1, V_2; S_1, S_2 | X_1, X_2, Y)\}
\end{split}
\end{equation}
for all possible distributions $P_{X_1X_2}$, $P_{V_1V_2 | S_1S_2X_1X_2Z_1Z_2}$, and estimators $h$ such that
\begin{equation}
    \expcs{}{d(S, h( X_1,X_2, V_1, V_2, Y))} \le D.
\end{equation}
\begin{theorem}\label{thm:macsumouter}
    The \ac{cd} region for the \ac{sddmmac} in Fig.~\ref{fig:macmodel} satisfies
    \begin{equation}
        \calC^{\MAC}(D) \subseteq \calO^{\MAC}(D).
    \end{equation}
\end{theorem}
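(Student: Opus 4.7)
The plan is to obtain the outer bound by granting full cooperation between the two encoders, thereby reducing the \ac{sddmmac} to the point-to-point channel of Section~\ref{sec:p2p}, and then invoking the converse direction of Theorem~\ref{thm:cdp2p}. Since enhancing the encoders with additional information can only enlarge the achievable region, any sum rate achievable for the MAC must also be achievable in the cooperative model, and therefore an upper bound on the cooperative model yields an outer bound on $\calC^{\MAC}(D)$.

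Concretely, I would fix any sequence of $(2^{nR_1}, 2^{nR_2}, n)$ codes achieving $(R_1, R_2, D)$, and enhance the system by providing each encoder with the other encoder's message, strictly causal \ac{sit}, and feedback. The resulting fully-cooperative channel is a point-to-point \ac{sddmc} with single input $X = (X_1, X_2) \in \calX_1 \times \calX_2$, a single message $M = (M_1, M_2)$ of total rate $R_1 + R_2$, strictly causal \ac{sit} $S_T = (S_1, S_2) \sim P_{S_1 S_2 | S}$, and feedback $Z = (Z_1, Z_2) = (\varphi_1(Y), \varphi_2(Y))$ which remains a deterministic function of $Y$. Since both original MAC encoding maps are deterministic functions of information also available to the cooperative encoder, any MAC code can be emulated in the enhanced setting with identical error probability and identical distortion; the decoder and state estimator carry over unchanged.

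Applying the converse argument of Theorem~\ref{thm:cdp2p} to this cooperative channel then yields
\begin{equation*}
    R_1 + R_2 \le I(X_1, X_2; Y) - I(V; S_1, S_2 \mid X_1, X_2, Y)
\end{equation*}
for some auxiliary $V$ satisfying the Markov chain $V - (X_1, X_2, S_1, S_2, Z_1, Z_2) - Y$ together with an estimator $h$ achieving expected distortion at most $D$. Since the definition of $\calO^{\MAC}(D)$ in \eqref{eq:macouter} ranges over all joint distributions $P_{V_1 V_2 \mid S_1 S_2 X_1 X_2 Z_1 Z_2}$, identifying $V = (V_1, V_2)$ (for example by setting $V_1 = V$ and $V_2 = \varnothing$) recovers exactly the form of $\calO^{\MAC}(D)$.

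The main subtlety is justifying the cooperative emulation rigorously, in particular verifying that the per-symbol causality structure of the MAC is preserved when both encoders' information is pooled, so that the cooperative encoder is a bona fide point-to-point encoder in the sense of Theorem~\ref{thm:cdp2p} (with the same channel law $P_{Y \mid X_1 X_2 S}$ and the same strictly causal nature of \ac{sit} and feedback). Once this bookkeeping is in place, the bound follows immediately from the point-to-point converse, and no new technical machinery is required beyond what Theorem~\ref{thm:cdp2p} already provides.
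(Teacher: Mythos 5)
Your proposal is correct and matches the paper's approach exactly: the paper also obtains $\calO^{\MAC}(D)$ by granting full cooperation between the encoders, observing that the resulting channel is a point-to-point \ac{sddmc} with input $(X_1,X_2)$, \ac{sit} $(S_1,S_2)$ and feedback $(Z_1,Z_2)$, and then invoking the point-to-point converse (the paper cites the argument of \cite{bross2017rate} and omits the details). Your additional remarks on emulating the MAC code in the cooperative setting and on identifying $V=(V_1,V_2)$ are the same bookkeeping the paper leaves implicit.
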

\begin{proof}
    The proof of converse is conducted in the same way as the point-to-point case in\cite{bross2017rate} and is omitted here.
\end{proof}

\subsection{Examples}

\subsubsection{Monostatic-uplink ISAC}

Another important scenario of \ac{isac} is the monostatic-uplink system\cite{Li2411:Achievable, liu2022survey}, where the \ac{bs} sends signals purely for monostatic sensing and simultaneously decodes the uplink signal from a \ac{ue}. This system can be modeled as a \ac{sddmmac} with two encoders as the \ac{bs} transmitter and the \ac{ue}, the decoder as the \ac{bs} receiver. By recognizing the sensing signal as $X_2$, uplink signal as $X_1$, $Y=(Y',X_2)$ with the normal channel output $Y'$, $S_1 = S_2 = Z_1 = \varnothing$, and $Z_2=Y'$, it shows from Theorem~\ref{thm:macinner} that the relationship between uplink data rate and sensing distortion $C_{\mathrm{MU}}(D)$ is 
\begin{equation*}
\begin{split}
    C_{\mathrm{MU}}(D) &\triangleq \max_{\substack{P_U,P_{X_1W_1|U}, P_{X_2|U}:\\ \expcs{}{d(S, h^*(U, W_1, X_1, X_2, Y'))} \le D}}I(X_1, W_1  ; Y' |U, X_2)\\
    &= \max_{\substack{P_U,P_{X_1|U}, P_{X_2|U}:\\ \expcs{}{d(S, h^*(U, X_1, X_2, Y'))} \le D}}I(X_1 ; Y' |U, X_2)
\end{split}
\end{equation*}
by setting $W_1 = X_1$. In fact, $C_{\mathrm{MU}}(D)$ is the optimal \ac{cd} function in this case, and the proof of converse is as follows.

\begin{proof}
    First, we show that $C_{\mathrm{MU}}(D)$ is a non-decreasing and concave function in $D$. The non-decreasing property is obvious from Proposition~\ref{prop:macrd}. Let $(U_1, X_{1,1}, X_{2,1})$ and $(U_2, X_{1,2}, X_{2,2})$ be two sets of random variables achieving $C_{\mathrm{MU}}(D_1)$ and $C_{\mathrm{MU}}(D_2)$ respectively. Introducing an independent time-sharing random variable $Q\in \{1,2\}\sim P_Q$ with $P_Q(1) = \lambda$ and defining $\tilde{U} = (Q, U_Q)$, the achieved distortion $D$ by $(\tilde{U}, X_{1,Q}, X_{2,Q})$ is $D = \lambda D_1 + (1-\lambda)D_2$ due to the linearity of the distortion constraint, and $X_{1,Q} - \tilde{U} - X_{2,Q}$ forms a Markov chain. Then, we have
    \begin{align*}
        C_{\mathrm{MU}}(D) &= \max_{\substack{P_U,P_{X_1|U}, P_{X_2|U}:\\ \expcs{}{d(S, h^*(U, X_1, X_2, Y'))} \le D}}I(X_1 ; Y' |U, X_2)\\
        &\ge I(X_{1,Q} ; Y' |\tilde{U}, X_{2,Q})\\
        &= I(X_{1,Q} ; Y' | U_Q, X_{2,Q}, Q)\\
        &= \lambda I(X_{1,1} ; Y' | U_1, X_{2,1})\\
        &\hspace{1cm}+ (1-\lambda) I(X_{1,2} ; Y' | U_2, X_{2,2})\\
        &= \lambda C_{\mathrm{MU}}(D_1) + (1-\lambda) C_{\mathrm{MU}}(D_2). 
    \end{align*}
    The converse is proved by applying the Fano's inequality
    \begin{subequations}
    \begin{align}
        R_1 - \epsilon_n &\le \frac{1}{n}I(M_1; Y^n)\\
        &=\frac{1}{n} \sum_{i=1}^n I(M_1; Y_i | Y^{i-1})\\
        &= \frac{1}{n}\sum_{i=1}^n I(M_1; Y_i | Y'^{i-1}, X_2^{i-1}, X_{2,i}) \label{subeq:macexamp1}\\
        &=\frac{1}{n} \sum_{i=1}^n I(M_1, X_{1,i}; Y'_i | Y'^{i-1}, X_2^{i-1}, X_{2,i})\\
        &\le\frac{1}{n} \sum_{i=1}^n I(X_{1,i}; Y'_i | U_i, X_{2,i})\label{subeq:macexamp2}\\
        &\le \frac{1}{n} \sum_{i=1}^n C_{\mathrm{MU}}\br{\expcs{}{d(S_i, h^*(U_i, X_{1,i}, X_{2,i}, Y'_i)}}\label{subeq:macexamp3}\\
        &\le  C_{\mathrm{MU}}\br{\frac{1}{n} \sum_{i=1}^n\expcs{}{d(S_i, h^*(U_i, X_{1,i}, X_{2,i}, Y'_i)}}\label{subeq:macexamp4}\\
        &\le C_{\mathrm{MU}}(D).
    \end{align}
    \end{subequations}
    where we define $U_i = Y'^{i-1}$ such that $X_{1,i} - U_i - X_{2,i}$ forms a Markov chain, \eqref{subeq:macexamp1} follows since $X_{2,i}$ is a function of $Y'^{i-1}$, \eqref{subeq:macexamp2} holds due to the Markov chain $(M_1, X_2^{i-1}, Y'^{i-1}) - (X_{1,i}, X_{2,i}) - Y_i'$, \eqref{subeq:macexamp3} follows the definition of $C_{\mathrm{MU}}(D)$, \eqref{subeq:macexamp4} is due to the concavity of $C_{\mathrm{MU}}(D)$. The last inequality holds by the non-decreasing property of $C_{\mathrm{MU}}(D)$, and by noting the Markov chain $S_i - (U_i, X_{1,i}, X_{2,i}, Y'_i) - (X_2^{n\backslash i}, Y'^n_{i+1})$ and applying Lemma~\ref{lemma:markovest} such that
    \begin{equation}
    \begin{split}
        D\ge& \frac{1}{n}\sum_{i=1}^n\expcs{}{d(S_i, h_i(X_2^n, Y'^n))}\\
        \ge& \frac{1}{n} \sum_{i=1}^n\expcs{}{d(S_i, h^*(U_i, X_{1,i}, X_{2,i}, Y'_i))}.
    \end{split}
    \end{equation}
    This completes the proof.
\end{proof}

\subsubsection{Quadratic-Gaussian Case}

Let the \ac{sddmmac} be characterized by,
\begin{equation}
    Y=X_1+X_2+S+W,
\end{equation}
and $S_1 = S_2 = S_T = S + W_T$ where $S\sim \calN(0,Q)$, $W\sim \calN(0, N)$, $W_T\sim \calN(0,N_T)$. The feedback is assumed to be perfectly available at both encoders, i.e., $Z_1=Z_2=Y$. The distortion function is again the squared error, and the transmit signal power is constrained by $\expcs{}{|X_1|^2}\le P_1$ and $\expcs{}{|X_2|^2}\le P_2$, respectively. We shall choose random variables as follows
\begin{equation}\label{eq:mac_rv}
\begin{split}
    &W_1=X_1,\ W_2=X_2,\  U\sim \calN(0, 1),\\
    &X_1 = \sqrt{\alpha_1P_1} U + U_1, U_1\sim \calN(0, (1-\alpha_1) P_1),  0\le \alpha_1 \le 1,\\
    &X_2 = \sqrt{\alpha_2P_2} U + U_2, U_2\sim \calN(0, (1-\alpha_2) P_2),  0\le \alpha_2 \le 1,\\
    &V_1 = S_T + E_1,\ E_1 \sim \calN(0, d_1^2),\\
    &V_2 = S_T + E_2,\ E_2 \sim \calN(0, d_2^2),
\end{split}
\end{equation}
and $U, U_1, U_2, E_1, E_2$ are all independent. The desired distortion is evaluated as
\begin{equation}
\begin{split}
    D &= \var{}{S|U,X_1,X_2,V_1,V_2,Y}\\
    &= \var{}{S|V_1,V_2,S+W},
\end{split}
\end{equation}
which only depends on $d_1^2$ and $d_2^2$. As a comparison, the time-sharing strategy is also considered, where at each time instance, only one encoder is active, communicating at a rate given by \eqref{eq:cqg}, and the other one sends deterministic signals. Moreover, an outer bound is also calculated from \eqref{eq:macouter} by enabling full cooperation between both encoders. Note that the corresponding power constraint changes to
\begin{equation}
\begin{split}
    \expcs{}{|X_1 + X_2|^2} &= \expcs{}{|X_1|^2} + \expcs{}{|X_2|^2} + 2\expcs{}{X_1X_2} \\
    & \le P_1 + P_2 + 2\sqrt{P_1P_2}.
\end{split}
\end{equation}
The computed regions are plotted in Fig.~\ref{fig:macgaussian} by setting $P_1=P_2=5$, $Q=N=1$, $N_T=0.3$ and $d_1^2=d_2^2=1.7333$ such that $D=0.35$. It shows that the region obtained by $\calR^{\MAC}(D)$ based on the chosen random variables \eqref{eq:mac_rv}, denoted as `Proposed', is strictly better than the region with time-sharing strategy, denoted as `TS', even at the corner point where $R_1=0$ or $R_2=0$, highlighting the benefits of encoder cooperation. On the other hand, the outer bound of the sum rate achieved by full cooperation, labeled as `OUTER', improves the region further.

\begin{figure}
    \centering
    \includegraphics[scale=0.8]{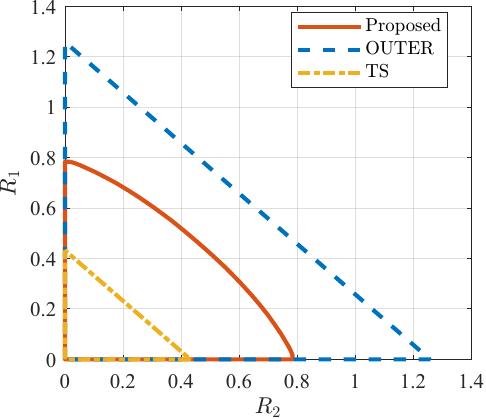}
    \caption{Comparison of different regions for the quadratic-Gaussian MAC.}
    \label{fig:macgaussian}
\end{figure}

\subsubsection{Double Usages of Single State Description}

We revisit Example 3 in \cite{lapidoth2012multiple} and show that the single state description can improve not only the communication data rate but also the state estimation distortion. The channel is characterized by
\begin{equation}
\begin{split}
    &Y = (Y_1, Y_2),\\
    &S = (T_1, T_2), \ S_1 = \varnothing, \ S_2 = S,\\
    &Y_1 = X_1 \oplus T_{X_2},\\
    &Y_2 = X_2,
\end{split}
\end{equation}
and no feedback is present for either encoder. All the random variables are binary, taking value from $\{0,1\}$, and $H(T_0) = H(T_1) = \frac{1}{2}$. It shows that by setting $V_1 = \varnothing$, $V_2 = T_{X_2}$ and $X_1, X_2$ independent and uniformly distributed in \eqref{eq:macrd-nofb}, one can achieve $(R_1, R_2) = (1, \frac{1}{2})$. If the \ac{sit} is ignored and $V_1=V_2=\varnothing$, we have
\begin{align*}
    R_1 &< I(X_1; Y | X_2) =H(Y_1|X_2) - H(Y_1|X_1,X_2)\\
    &=(1-p_2) H(X_1 \oplus T_0) + p_2H(X_1\oplus T_1)\\
    &\hspace{1cm}- (1-p_2)H(T_0) - p_2H(T_1)\\
    &\le 1 - \frac{1}{2} = \frac{1}{2},
\end{align*}
where $X_2 \sim \mathrm{Bern}(p_2)$, and the last inequality holds because the maximum entropy of a binary random variable is $1$. Hence, $(R_1, R_2) = (1, \frac{1}{2})$ is not achievable in this case, showing that compressing the state can enhance communication performance even if no state estimation task is required, as already studied in\cite{lapidoth2012multiple, li2012multiple}.

Next, suppose that the decoder wishes to estimate the state $S$ as $\hS = (\hat{T}_0, \hat{T}_1)$, and the distortion function is given by
\begin{equation}
    d(S, \hS) = d_H(T_0, \hat{T}_0) + d_H(T_1, \hat{T}_1),
\end{equation}
with $d_H$ the Hamming distance. Note that the estimator from \eqref{eq:macrdregion} now depends on $(X_1, X_2, V_1,V_2,Y)$. It's easy to show that with the two schemes above, the decoder can only recover $T_{X_2}$ at each time while the other state is unknown, whose optimal estimator is thus random guessing, which yields a strictly positive distortion $D>0$. If further $V_2 = S = (T_0, T_1)$, we have from Theorem~\ref{thm:macinner}
\begin{equation*}
\begin{split}
    R_1 & < I(X_1; Y | X_2, T_0, T_1)\\
    & = H(Y | X_2, T_0, T_1)\\
    & = H(Y_1 | X_2, T_0, T_1)\\
    & = H(X_1),\\
    R_2 &< I(X_2, T_0, T_1; Y | X_1) - I(T_0,T_1; T_0,T_1|X_2)\\
    &= H(Y_1,Y_2|X_1) - H(T_0,T_1)\\
    &= H(X_2) + H(Y_1|X_1,X_2) - 1,\\
    R_1 + R_2 &< I(X_1,X_2,T_0,T_1; Y) - I(T_0,T_1; T_0,T_1|X_2)\\
    &= H(Y) - 1\\
    &= H(X_2) + H(Y_1 | X_2) - 1.
\end{split}
\end{equation*}
Let $X_1, X_2$ both follow $\mathrm{Bern}(\frac{1}{2})$ and be independent, one can achieve the region $\{R_1 < 1, R_2 < \frac{1}{2}, R_1+R_2 < 1\}$ since $H(Y_1 | X_2) = \frac{1}{2}(H(X_1\oplus T_0) + H(X_1\oplus T_1)) = 1$. Some rate pairs in this region are still not achievable, e.g., $(R_1, R_2) = (\frac{2}{3}, \frac{1}{3})$, by the case of ignoring \ac{sit}, and the distortion now becomes $D=0$. This demonstrates the two-fold benefits brought by the single-state description, providing a degree of freedom to reach different goals.

\section{Conclusion}\label{sec:conclusion}
This work proposes a coding framework for the \ac{jsmc} problem based on the backward simultaneous decoding approach, achieving optimality for point-to-point channels and improving existing regions for multi-user channels. The advantages brought by the proposed scheme are validated through both theoretical analysis and numerical examples.
However, several limitations warrant further investigation. First, the current analysis assumes memoryless channels. Extending to channels with memory is crucial for more practical scenarios. Second, the optimality of successive refinement coding in \ac{sddmbc} remains unproven, and the necessity of leaking private messages to the weak user to improve state estimation performance should be discussed. Third, although even the capacity regions are unknown for \ac{sddmmac} with feedback or \ac{sit}, special cases of quadratic-Gaussian channels have been well studied\cite{ozarow1984capacity,li2012multiple}. Future works may investigate certain scenarios under which the optimal \ac{cd} region can be derived.


%

\appendices

\if \doublecol1

\else
    \newpage
    \section{}\label{app:p2p-thm}
    Fixing the set $\calP_D$ achieving $C(D)$:
\begin{equation}
\begin{split}
    \calP_D = & \{(X,V,h) | P_{SS_TXYZV}(s, s_T, x, y,z, v) = P_{S_TS}(s_T,s)P_{X}(x) \\
    &\quad \cdot  P_{Y|XS}(y|x,s)\mathbbm{1}\{z=\phi(y)\}P_{V|XS_TZ}(v|x,s_T,z); \expcs{}{d(S, h(X,V,Y))}\le D\}.
\end{split}
\end{equation}
It is assumed $I(X;Y)> \mu >0$ with a positive number $\mu$. Otherwise, the encoder can transmit nothing. In each block, the \ac{sit}, feedback, and the codeword from the previous block are compressed, described by the variable $V$, which is then encoded into the codeword at the current block along with the fresh message. In the last block, the encoder only sends the state description without any fresh message. The decoder performs the backward simultaneous message and state description decoding. Starting from the last block, it first decodes the compressed state information of the second last block and then decodes the message and state description of the previous block jointly until the first block. The state description, together with the channel input and output, is used to estimate the channel state at each block. 
The detailed proof is elaborated as follows. We assume the transmission happens in $B + 1$ blocks.

\subsection{Codebook generation}
For block $b\in [B]$, randomly and independently generate $2^{n(R+R_s)}$ sequences $x^n(m_b, l_{b-1})$ according to $P_X^n$, each of which is assigned by the indices $m_b\in [2^{nR}]$ and $l_{b-1}\in [2^{nR_s}]$. For each $x^n(m_b, l_{b-1})$, randomly generate $2^{nR_s}$ sequences $v^n(l_b | m_b, l_{b-1})$ \ac{iid} according to $P_{V|X}^n$, each of which is assigned by the indices $l_b \in [2^{nR_s}]$.

For the last block $b=B+1$, we only generate $2^{nR_s}$ length-$n'$ sequences $x^{n'}(1, l_{B})$ \ac{iid} according to $P_X^n$ with $n' = \frac{nR_s}{\mu}$.
    
\subsection{Encoding}
Let $l_0=1$. In each block $b\in[B]$, having the message $m_{b}$ to be encoded, $x_i(m_b, l_{b-1})$ for $i\in[n]$ are transmitted.
At the end of block $b$, upon knowing $(s_T^n(b), z^n(b))$, find an index $\hat{l}_b$ such that
\begin{equation}\label{appeq:compression}
    \br{s_T^n(b), z^n(b), x^n(m_b, l_{b-1}), v^n(\hat{l}_b | m_b, l_{b-1})}\in \typset{}{P_{S_TZXV}}.
\end{equation}
If there are multiple such indices, select one of them randomly. If there is no such index, set $\hat{l}_b=1$.
In the last block $B+1$, the encoder transmits $x_i(1, l_{B})$ for $i\in [n']$.

\subsection{Decoding}
In the last block $B+1$, upon receiving $y^{n'}(B+1)$, the decoder finds the unique index $\hat{l}_B$ such that
\begin{equation}
    \br{x^{n'}(1, \hat{l}_B), y^{n'}(B+1)}\in \calT^{(n')}_{\epsilon}(P_{XY}).
\end{equation}
If there are multiple such indices, select one of them randomly. If there is no such index, set $\hat{l}_B=1$.
For the subsequent blocks from $b=B$ down to $b=1$, by assuming $l_b$ is decoded correctly in advance, the decoder finds the unique indices $(\hat{m}_b, \hat{l}_{b-1})$ jointly such that
\begin{equation}
    \br{x^n(\hat{m}_b, \hat{l}_{b-1}), v^n(l_b | \hat{m}_b, \hat{l}_{b-1}),  y^n(b)}\in \typset{}{P_{XVY}}.
\end{equation}
If there are multiple such index tuples, select randomly from them. If there is no such tuple, set $(\hat{m}_b, \hat{l}_{b-1}) = (1,1)$.
The state is estimated as
\begin{equation}
\hat{s}_i(b)=h(x_i(\hat{m}_b, \hat{l}_{b-1}), v_i(l_b | \hat{m}_b, \hat{l}_{b-1}), y_i(b))
\end{equation}
for $i\in[n]$.

\subsection{Analysis}

For the compression step \eqref{appeq:compression} at the encoder, we have with probability close to $1$,
\begin{equation}
    \br{s_T^n(b), z^n(b), x^n(m_b, l_{b-1})}\in \typset{}{P_{S_TZX}}.
\end{equation}
By covering lemma \cite{el2011network}, when
\begin{equation}
    R_s > I(V; S_T, Z | X)
\end{equation}
the probability of the error event that no index $l_b$ satisfies \eqref{appeq:compression} can be bounded to $0$, i.e.,
\begin{equation}
   \lim_{n\to \infty} \pr{\forall \hat{l}_b \in[2^{nR_s}]: \br{s_T^n(b), z^n(b), x^n(m_b, l_{b-1}), v^n(\hat{l}_b | m_b, l_{b-1})}\notin \typset{}{P_{S_TZXV}}} = 0.
\end{equation}

At the decoder, the definition of $n'$ guarantees the correct decoding of $\hat{l}_B$, i.e., the rate for transmitting $l_B$ satisfies
\begin{equation}
    \frac{nR_s}{n'} = \mu < I(X;Y).
\end{equation}
The joint decoding of $(\hat{m}_b, \hat{l}_{b-1})$ requires 
\begin{equation}
    R + R_s < I(X, V; Y),
\end{equation}
such that
\begin{equation}
    \lim_{n\to \infty}\pr{\exists (\hat{m}_b, \hat{l}_{b-1}) \neq (m_b, l_{b-1}): \br{x^n(\hat{m}_b, \hat{l}_{b-1}), v^n(l_b | \hat{m}_b, \hat{l}_{b-1}),  y^n(b)}\in \typset{}{P_{XVY}}} = 0.
\end{equation}

Consequently, the probability of encoding and decoding error tends to zero as $n \to \infty$ if
\begin{equation}
\begin{split}
    R &< I(X, V; Y) - I(V; S_T, Z | X)\\
   &= I(X; Y) + I(V; Y |X) -  I(V; S_T, Z | X) \\
   &= I(X; Y) - I(V; S_T | X,Y),
\end{split}
\end{equation}
where the last step follows the Markov chain $V-(S_T, X, Z) - Y$ and $Z$ is a deterministic function of $Y$. The achievable data rate becomes 
\begin{equation}
    \frac{nBR}{nB + n'} = \frac{R}{ 1 + \frac{n'}{nB}} \overset{B\to \infty}{=}  R,
\end{equation}
and the state estimation distortion is also bounded by $D$ as $n\to \infty$ and $B \to \infty$.

    \section{}\label{app:p2p-prop}
    The non-decreasing property is obvious because for $D_1<D_2$, we have $\calP_{D_1}\subseteq \calP_{D_2}$. The concavity can be verified by the time-sharing strategy. Suppose $(X_1, V_1, h_1)$ and $(X_2,V_2,h_2)$ achieve $C(D_1)$ and $C(D_2)$ respectively. Let $Q\in\{1,2\}\sim P_Q$ be an independent time-sharing random variable with $P_Q(1)=\lambda$, and define $\tilde{V} = (V_Q, Q)$.
The distortion achieved by $(X_Q, \tilde{V}, h_Q)$ is then $D = \lambda D_1 + (1-\lambda) D_2$ due to the linearity of the distortion constraint and thus $(X_Q, \tilde{V}, h_Q)\in\calP_{D}$.
We then have
\begin{equation}
\begin{split}
    C(D) &= \max_{(X,V,h)\in\calP_{D}} I(X;Y) - I(V;S_T|X,Y)\\
    &\ge I(X_Q; Y) - I(\tilde{V};S_T|X_Q,Y)\\
    &\ge I(X_Q; Y |Q) - I(V_Q;S_T|X_Q,Y,Q)\\
    &= \lambda(I(X_1; Y) - I(V_1;S_T|X_1,Y)) + (1-\lambda)(I(X_2; Y) - I(V_2;S_T|X_2,Y))\\
    &=\lambda C(D_1) + (1-\lambda)C(D_2),
\end{split}
\end{equation}
where we adopt the Markov chain $Q-X-Y$ for the third line. Hence, it is obtained that $C(D)\ge \lambda C(D_1) + (1-\lambda)C(D_2)$ for $D = \lambda D_1 + (1-\lambda) D_2$. From Lemma~\ref{lemma:optest} and the non-decreasing property of $C(D)$, the expressions for the optimal estimator and the minimum achievable distortion are simply inferred. If $S_T$ is statistically independent of $S$, we have the Markov chain $V- (X, Y) - S$ such that removing the dependency on $V$ in the optimal estimator doesn't change the achieved distortion due to Lemma~\ref{lemma:markovest}. It thus suffices to set $V=\varnothing$ as $I(V; S_T | X, Y) \ge 0$.

    \section{}\label{app:deg-bc-scc}
    The following lemma states the condition of correct simultaneous decoding, which will be used later for the analysis.
\begin{lemma}\label{applemma:bc}
    Let $(U,X,V_1,V_2, Y_1) \sim P_{UXV_1V_2Y_1}$.
    Suppose $(u^n, v_2^n, y_1^n) \in \typset{}{P_{UV_2Y_1}}$, $X^n$ are generated \ac{iid} from $P_{X|U}(\cdot | u_i)$ and $V_1^n$ are generated \ac{iid} from $P_{V_1|UXV_2}(\cdot | u_i, x_i, v_{2,i})$. There exists $\delta(\epsilon)$ that tends to zero as $\epsilon \to 0$ such that 
    \begin{equation}
        \Pr\brcur{(X^n, V_1^n, u^n, v_2^n, y_1^n) \in \typset{}{P_{XV_1UV_2Y_1}}} \le 2^{-n[I(X;V_2, Y_1 |U) + I(V_1; Y_1 | U,X,V_2)-3\delta(\epsilon)]}.
    \end{equation}
\end{lemma}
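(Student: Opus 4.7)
The plan is to prove this via the standard joint typicality lemma argument, which proceeds by counting the jointly typical sequences and bounding the probability of each.

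First, I would rewrite the probability in question as an explicit sum:
\begin{equation*}
\Pr\{(X^n, V_1^n, u^n, v_2^n, y_1^n) \in \typset{}{P_{XV_1UV_2Y_1}}\}
= \sum_{(x^n,v_1^n)\in A} \prod_{i=1}^n P_{X|U}(x_i|u_i)\, P_{V_1|UXV_2}(v_{1,i}|u_i,x_i,v_{2,i}),
\end{equation*}
where $A$ is the set of $(x^n,v_1^n)$ such that the full tuple with $(u^n,v_2^n,y_1^n)$ is jointly $\epsilon$-typical. The product form follows from the independence structure in the hypothesis: $X^n$ is generated conditionally on $u^n$ only, and $V_1^n$ is then generated conditionally on $(u^n,X^n,v_2^n)$.

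Next, I would invoke three standard properties of typical sequences (each absorbing a $\delta(\epsilon)\to 0$ term): (i) the cardinality $|A|\le 2^{n(H(X,V_1|U,V_2,Y_1)+\delta(\epsilon))}$, (ii) for any $(x^n,v_1^n)$ giving a jointly typical tuple, $\prod_i P_{X|U}(x_i|u_i)\le 2^{-n(H(X|U)-\delta(\epsilon))}$, and (iii) similarly $\prod_i P_{V_1|UXV_2}(v_{1,i}|u_i,x_i,v_{2,i})\le 2^{-n(H(V_1|U,X,V_2)-\delta(\epsilon))}$. Combining these gives the bound
\begin{equation*}
\Pr\{\cdot\} \le 2^{-n[H(X|U)+H(V_1|U,X,V_2)-H(X,V_1|U,V_2,Y_1)-3\delta(\epsilon)]}.
\end{equation*}

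Finally, I would simplify the exponent by the chain rule $H(X,V_1|U,V_2,Y_1)=H(X|U,V_2,Y_1)+H(V_1|U,X,V_2,Y_1)$, yielding
\begin{equation*}
[H(X|U)-H(X|U,V_2,Y_1)] + [H(V_1|U,X,V_2)-H(V_1|U,X,V_2,Y_1)] = I(X;V_2,Y_1|U) + I(V_1;Y_1|U,X,V_2),
\end{equation*}
which is precisely the claimed exponent. I do not expect a real obstacle here since the argument is routine; the only care needed is to keep track of the conditioning used in generating $X^n$ and $V_1^n$ so that the product-of-marginals bounds in (ii) and (iii) are applied with the correct conditional entropies (in particular, $V_1$ is generated from $P_{V_1|UXV_2}$, not $P_{V_1|U}$, which is what makes the $I(V_1;Y_1|U,X,V_2)$ term appear rather than $I(V_1;X,V_2,Y_1|U)$).
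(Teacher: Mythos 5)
Your proposal is correct and follows essentially the same route as the paper's own proof: summing over the conditionally typical set, bounding its cardinality by $2^{n(H(X,V_1|U,V_2,Y_1)+\delta(\epsilon))}$, bounding the generating probabilities by $2^{-n(H(X|U)-\delta(\epsilon))}$ and $2^{-n(H(V_1|U,X,V_2)-\delta(\epsilon))}$, and combining via the chain rule to get the stated exponent. No gaps.
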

\begin{proof}
    \begin{equation}
    \begin{split}
        \Pr&\brcur{(X^n, V_1^n, u^n, v_2^n, y_1^n) \in \typset{}{P_{XV_1UV_2Y_1}}}\\
        &=\sum_{(x^n, v_1^n)\in \typset{}{P_{XV_1UV_2Y_1}|u^n,v_2^n,y_1^n}} P^n_{XV_1 | UV_2Y_1}(x^n, v_1^n | u^n, v_2^n, y_1^n)\\
        &=\sum_{(x^n, v_1^n)\in \typset{}{P_{XV_1UV_2Y_1}|u^n,v_2^n,y_1^n}} P^n_{X|U}(x^n | u^n) P^n_{V_1 | UXV_2} (v_1^n | u^n, x^n, v_2^n)\\
        &\le 2^{n(H(X,V_1 | U,V_2,Y_1) + \delta(\epsilon))} 2^{-n(H(X|U)-\delta(\epsilon))} 2^{-n(H(V_1|UXV_2)-\delta(\epsilon))}\\
        &= 2^{-n[I(X; V_2,Y_1 | U) + I(V_1; Y_1 | U,X,V_2) - 3\delta(\epsilon)]}.
    \end{split}
    \end{equation}
\end{proof}

We fix the distributions of $(U, X, V_1, V_2, h_1, h_2)\in \pdd^{\BC}$ achieving $\calR^{\BC}(D_1, D_2)$:
\begin{equation}
\begin{split}
    \pdd^{\BC} &= \left\{ (U, X, V_1, V_2, h_1, h_2)\right| P_{SS_TUXY_1Y_2ZV_1V_2}(s,s_T,u,x,y_1,y_2,z,v_1,v_2)=P_{SS_T}(s,s_T)\\
    & \cdot P_{UX}(u,x)P_{Y_1 Y_2 | X S} (y_1, y_2 | x , s)\mathbbm{1}\{z=\psi(y_1, y_2))\} P_{V_1V_2|UXS_TZ}(v_1,v_2|u,x,s_T,z); \\
    &\left. \expcs{}{d_1(S, h_1(U, X, V_1, V_2, Y_1))} \le D_1, \expcs{}{d_2(S, h_2(U, V_2, Y_2))} \le D_2 \right \},
\end{split}
\end{equation}
Let $I(X;Y_1 | U) > \mu_1>0$ and $I(U; Y_2) > \mu_2 >0$. Note that if either $I(X;Y_1 | U)=0$ or $I(U; Y_2)=0$, the channel reduces to a single user point-to-point channel and the remaining proof follows Appendix~\ref{app:p2p-thm}. The transmission is assumed to happen in $B+1$ blocks.

\subsection{Codebook Generation}
    For block $b\in [B]$, with $m_{1,b}\in [2^{nR_1}]$, $l_{1,b}\in [2^{nR_{s1}}]$, $m_{2,b}\in [2^{nR_2}]$, $l_{2,b}\in [2^{nR_{s2}}]$, generate
    \begin{itemize}
        \item $2^{n(R_2+R_{s2})}$ sequences of $u^n(m_{2,b}, l_{2,b-1})$ \ac{iid} according to $P^n_U$;
        \item $2^{n(R_1+R_{s1})}$ sequences of $x^n(m_{1,b}, l_{1,b-1} | m_{2,b}, l_{2,b-1})$ \ac{iid} according to $P^n_{X|U}$ given each $u^n(m_{2,b}, l_{2,b-1})$;
        \item $2^{nR_{s2}}$ sequences of $v_2^n(l_{2,b}|m_{2,b}, l_{2,b-1})$ \ac{iid} according to $P^n_{V_2|U}$ given each $u^n(m_{2,b}, l_{2,b-1})$;
        \item $2^{nR_{s1}}$ sequences of $v_1^n(l_{1,b}| m_{1,b}, l_{1,b-1},m_{2,b}, l_{2,b-1}, l_{2,b})$ \ac{iid} according to $P^n_{V_1|UXV_2}$ given each $u^n(m_{2,b}, l_{2,b-1})$, $x^n(m_{1,b},\allowbreak l_{1,b-1} | m_{2,b}, l_{2,b-1})$ and $v_2^n(l_{2,b}|m_{2,b}, l_{2,b-1})$.
    \end{itemize}
    

    For the last block $B+1$, let $n' = \max (\frac{nR_{s1}}{\mu_1}, \frac{nR_{s2}}{\mu_2})$, generate 
    \begin{itemize}
        \item $2^{nR_{s2}}$ length-$n'$ sequences $u^{n'}(1,l_{2,B})$ \ac{iid} according to $P^{n'}_U$;
        \item $2^{nR_{s1}}$ length-$n'$ sequences $x^{n'}(1,l_{1,B} | 1,l_{2,B})$ \ac{iid} according to $P^{n'}_{X|U}$ given each $u^{n'}(1,l_{2,B})$.
    \end{itemize}
    
\subsection{Encoding}
    Let $l_{1,0} = l_{2,0} =1$. In each block $b\in[B]$, having the messages $m_{1,b}, m_{2,b}$ to be encoded, the encoder sends $x_i(m_{1,b}, l_{1,b-1}  |\allowbreak m_{2,b}, l_{2,b-1})$ for $i\in [n]$.
    At the end of block $b$, upon knowing $(s^n_T(b), z^n(b))$, find an index tuple $(\hat{l}_{1,b}, \hat{l}_{2,b})$ such that
    \begin{equation}
    \begin{split}
    &\left(s^n_T(b), z^n(b), x^n(m_{1,b}, l_{1,b-1} | m_{2,b}, l_{2,b-1}), v_2^n(\hat{l}_{2,b}|  m_{2,b}, l_{2,b-1}), u^n(m_{2,b}, l_{2,b-1}) \right) \in \typset{}{P_{S_TZXV_2U}},\\
    &\left(s^n_T(b), z^n(b), x^n(m_{1,b}, l_{1,b-1} | m_{2,b}, l_{2,b-1}), v_2^n(l_{2,b}|  m_{2,b}, l_{2,b-1}), \right.\\
    &\hspace{3.1cm} \left.v_1^n(\hat{l}_{1,b} | m_{1,b}, l_{1,b-1},m_{2,b}, l_{2,b-1}, l_{2,b}),u^n(m_{2,b}, l_{2,b-1}) \right) \in \typset{}{P_{S_TZXV_2V_1U}}.
    \end{split}
    \end{equation}
    If there are multiple such index tuples, select one randomly from them. If there is no such index tuple, set $(\hat{l}_{1,b}, \hat{l}_{2,b})=(1,1)$.
    In the last block $B+1$, the encoder transmits $x_i(1, l_{1,B} | 1, l_{2,B})$ for $i\in [n']$.

\subsection{Decoding at Decoder 2}
    The decoding steps at the weaker user are the same as the point-to-point case in Appendix~\ref{app:p2p-thm}. In the beginning, due to the choice of $n'$, the state description $l_{2,B}$ can be decoded reliably because
    \begin{equation}\label{appeq:R2cond1}
        \frac{nR_{s2}}{n'} \le \mu_2 < I(U;Y_2).
    \end{equation}

\subsection{Decoding at Decoder 1}
    The stronger user first looks for $(\hat{l}_{1,B}, \hat{l}_{2,B})$ in the last block such that
    \begin{equation}
        \br{u^{n'}(1, \hat{l}_{2,B}), x^{n'}(1, \hat{l}_{1,B} | 1, \hat{l}_{2,B}), y_1^{n'}(B+1)} \in \calT^{(n')}_{\epsilon}(P_{UXY_1}).
    \end{equation}
    If there are multiple such index tuples, select one from them randomly. If there is no such index tuple, set $(\hat{l}_{1,B}, \hat{l}_{2,B}) = (1,1)$. This is guaranteed to be decoded correctly. To see this, first note that the probability of the event that $\hat{l}_{2,B}$ is wrong tends to $0$ if 
    \begin{equation}
        \frac{nR_{s2}}{n'} \le I(U; Y_1)
    \end{equation}
    which is automatically satisfied by \eqref{appeq:R2cond1} because the channel is degraded. Furthermore, given the correct $\hat{l}_{2,B}$, the correct decoding of $\hat{l}_{1,B}$ can also be guaranteed because
    \begin{equation}
        \frac{nR_{s1}}{n'} \le \mu_1 < I(X; Y_1 |U).
    \end{equation}
    
    Suppose $(l_{1,b}, l_{2,b})$ are correctly decoded, the stronger user then finds the unique index set $(\hat{m}_{1,b}, \hat{m}_{2,b}, \hat{l}_{1,b-1}, \hat{l}_{2,b-1})$ from block $b=B$ down to $b=1$ such that
    \begin{equation}
    \begin{split}
        &\left(u^n(\hat{m}_{2,b}, \hat{l}_{2,b-1}),v^n_2(l_{2,b} |\hat{m}_{2,b}, \hat{l}_{2,b-1}), x^n(\hat{m}_{1,b}, \hat{l}_{1,b-1} | \hat{m}_{2,b}, \hat{l}_{2,b-1}), \right. \left. v_1^n(l_{1,b} | \hat{m}_{1,b}, \hat{l}_{1,b-1},\hat{m}_{2,b}, \hat{l}_{2,b-1}, l_{2,b}), y_1^n(b) \right )\\
        &\hspace{13cm}\in \typset{}{P_{UV_2XV_1Y_1}}.
    \end{split}
    \end{equation}
    If there are multiple such index tuples, select one from them randomly. If there is no such index tuple, set $(\hat{m}_{1,b}, \hat{m}_{2,b},\allowbreak \hat{l}_{1,b-1}, \hat{l}_{2,b-1})=(1,1,1,1)$. The state is then estimated by applying $h_1$ to all decoded codewords for all blocks.
    

\subsection{Analysis}
    We have that with probability close to $1$
    \begin{equation}
        (s^n_T(b), z^n(b), x^n(m_{1,b}, l_{1,b-1} | m_{2,b}, l_{2,b-1}), u^n(m_{2,b}, l_{2,b-1})) \in \typset{}{P_{S_TZXU}}.
    \end{equation}
    With the covering lemma\cite{el2011network}, if 
    \begin{equation}
    \begin{split}
        &R_{s2} > I(V_2; S_T, Z, X | U),\\
        &R_{s1} > I(V_1; S_T, Z | U, X, V_2).
    \end{split}
    \end{equation}
    we have
    \begin{equation}
    \begin{split}
    &\lim_{n \to \infty}\mathrm{Pr}\left\{\forall (\hat{l}_{1,b}, \hat{l}_{2,b})\in [2^{R_{s1}}]\times [2^{R_{s2}}]: \left(s^n_T(b), z^n(b), x^n(m_{1,b}, l_{1,b-1} | m_{2,b}, l_{2,b-1}), v_2^n(\hat{l}_{2,b}|  m_{2,b}, l_{2,b-1}), \right. \right.\\
    &\hspace{3.1cm}\left. \left.v_1^n(\hat{l}_{1,b} | m_{1,b}, l_{1,b-1},m_{2,b}, l_{2,b-1}, \hat{l}_{2,b}),u^n(m_{2,b}, l_{2,b-1}) \right) \notin \typset{}{P_{S_TZXV_2V_1U}}\right\} = 0.
    \end{split}
    \end{equation}
    The decoding at Decoder 2 requires
    \begin{equation}\label{appeq:bc-rate2}
        R_2 + R_{s2} < I(U, V_2; Y_2).
    \end{equation}
    as the point-to-point case. For Decoder 1, the error event
    \begin{equation}
    \begin{split}
        &\exists (\hat{m}_{1,b}, \hat{m}_{2,b}, \hat{l}_{1,b-1}, \hat{l}_{2,b-1}) \neq (m_{1,b}, m_{2,b}, l_{1,b-1}, l_{2,b-1}):\\
        &\left(u^n(\hat{m}_{2,b}, \hat{l}_{2,b-1}),v^n_2(l_{2,b} |\hat{m}_{2,b}, \hat{l}_{2,b-1}), x^n(\hat{m}_{1,b}, \hat{l}_{1,b-1} | \hat{m}_{2,b}, \hat{l}_{2,b-1}), v_1^n(l_{1,b} | \hat{m}_{1,b}, \hat{l}_{1,b-1},\hat{m}_{2,b}, \hat{l}_{2,b-1}, l_{2,b}), y_1^n(b) \right )\\
        &\hspace{10cm}\in \typset{}{P_{UV_2XV_1Y_1}}
    \end{split}
    \end{equation}
    can be split into multiple sub-events. It shows that all the other sub-events are redundant if
    \begin{equation}\label{appeq:error-event1}
    \begin{split}
        &\mathrm{Pr}\left\{\exists (\hat{m}_{1,b}, \hat{l}_{1,b-1}) \neq (m_{1,b}, l_{1,b-1}), (\hat{m}_{2,b}, \hat{l}_{2,b-1}) = (m_{2,b}, l_{2,b-1}):\right.\\
        &\left(u^n(\hat{m}_{2,b}, \hat{l}_{2,b-1}),v^n_2(l_{2,b} |\hat{m}_{2,b}, \hat{l}_{2,b-1}), x^n(\hat{m}_{1,b}, \hat{l}_{1,b-1} | \hat{m}_{2,b}, \hat{l}_{2,b-1}), v_1^n(l_{1,b} | \hat{m}_{1,b}, \hat{l}_{1,b-1},\hat{m}_{2,b}, \hat{l}_{2,b-1}, l_{2,b}), y_1^n(b) \right )\\
        &\left.\hspace{10cm}\in \typset{}{P_{UV_2XV_1Y_1}}\right \} = 0
    \end{split}
    \end{equation}
    and \eqref{appeq:bc-rate2} satisfies. \eqref{appeq:error-event1} holds true if
    \begin{equation}\label{appeq:R1cond1}
    R_1 + R_{s1} < I(X; V_2, Y_1 | U) + I(V_1; Y_1 | U,X,V_2)
    \end{equation}
    according to Lemma~\ref{applemma:bc}.
    Combining the above requirements on $(R_1, R_{s1}, R_2, R_{s2})$, we have
    \begin{equation}
    \begin{split}
        R_1 &< I(X; V_2, Y_1|U) + I(V_1; Y_1 | U,X,V_2) - R_{s1}\\
            &< I(X; V_2, Y_1|U) + I(V_1; Y_1 | U,X,V_2) - I(V_1; S_T, Z | U, X, V_2)\\
            &= I(X; V_2, Y_1|U) - I(V_1; S_T, Z| U, X, V_2, Y_1)
    \end{split}
    \end{equation}
    and
    \begin{equation}
    \begin{split}
        R_2 &< I(U, V_2; Y_2) - R_{s2}\\
            &< I(U, V_2; Y_2) - I(V_2; S_T, Z, X | U)\\
            &= I(U; Y_2) - I(V_2; S_T, Z, X| U, Y_2)
    \end{split}
    \end{equation}
    due to the Markov chain $(V_1, V_2) - (U, X, S_T, Z) - (Y_1, Y_2)$.
    The encoding and decoding error probability can then be bounded to zero as $n\rightarrow \infty$, such that the distortion $(D_1, D_2)$ can be achieved, which concludes the proof. 
    
    \section{}\label{app:bc-prop}
    The first property follows due to $\pdd^{\BC} \subseteq \pddf{D_1'}{D_2'}^{\BC}$. Denoting the random variables achieving $\calR^{\BC}(D_1', D_2') $ and $\calR^{\BC}(D_1'', D_2'') $ as $(U_1, X_1, V_{1,1}, V_{2,1}, h_{1,1}, h_{2,1})$ and $(U_2, X_2, V_{1,2}, V_{2,2}, h_{1,2}, h_{2,2})$, respectively, similar to Proposition~\ref{prop:cdcausalconcave}, we introduce a time-sharing random variable $Q\in \{1,2\}$ with $P_Q(1)=\lambda$ and define $\tilde{U} = (U_Q,Q)$. The achieved distortions are $D_1=\lambda D_1' + (1-\lambda) D_1''$,  $D_2=\lambda D_2' + (1-\lambda) D_2''$, so $(\tilde{U}, X_Q, V_{1,Q}, V_{2,Q}, h_{1,Q}, h_{2,Q}) \in \pdd^{\BC}$. The achieved rates are
\begin{equation}
\begin{split}
    R_2 &= \lambda \br{I(U_1; Y_2) - I(V_{2,1}; S_T, Z, X_1| U_1, Y_2)} +(1-\lambda) \br{I(U_2; Y_2) - I(V_{2,2}; S_T, Z, X_2| U_2, Y_2)}\\
    &= I(U_Q;Y_2 | Q) - I(V_{2,Q}; S_T, Z,X_Q| U_Q,Q,Y_2)\\
    &\le I(U_Q,Q;Y_2)  - I(V_{2,Q}; S_T, Z,X_Q| U_Q,Q,Y_2)\\
    &= I(\tilde{U}; Y_2) - I(V_{2,Q}; S_T, Z,X_Q| U_Q,Q,Y_2),
\end{split}
\end{equation}
and 
\begin{equation}
    R_1 \le I(X_Q; V_{2,Q}, Y_1| \tilde{U}) - I(V_{1,Q}; S_T, Z| X_Q, \tilde{U}, V_{2,Q}, Y_1),
\end{equation}
meaning $(R_1,R_2)\in \calR^{\BC}(D_1, D_2)$, from which we can infer property 3) by letting $D_1' = D_1''$ and $D_2' = D_2''$. Because of property 1) and Lemma~\ref{lemma:optest}, we obtain the expressions for both optimal estimators.

    \section{}\label{app:deg-bc-outer}
    Define the following auxiliary random variables:
\begin{equation}\label{eq:deg-bc-auxrv}
\begin{split}
    U_i &= (M_2, Y_1^{i-1}, Y_2^{i-1})\\
    V_{2,i} &= (Y_1^{i-1}, Y_2^{n\backslash i})\\
    V_{1,i} &= (S_T^{i-1}, Y_{1,i+1}^n, V_{2,i}) = (S_T^{i-1}, Y_1^{n\backslash i}, Y_2^{n\backslash i}).
\end{split}
\end{equation}
It can be shown that the Markov chains $U_i - (X_i, S_{T,i}) - Y_{1,i} - Y_{2,i}$ and $(V_{1,i}, V_{2,i}) - (U_i, X_i, S_{T,i}, Z_i) - (Y_{1,i}, Y_{2,i})$ hold. Moreover, the estimators $h_k(Y_k^n)$ for $k=1,2$ can be expressed as
\begin{align}
    h_{2,i}(Y_2^n) &= h_2(i, U_i, V_{2,i}, Y_{2,i})\\
    h_{1,i}(Y_1^n) &= h_1(i, U_i, X_i, V_{1,i}, V_{2,i}, Y_{1,i}),
\end{align}
where we add $U_i$ and $(U_i, X_i, V_{2,i})$ into the dependency of $h_2$ and $h_1$ respectively without increasing the expected distortion. Introducing a time-sharing random variable $Q$ that is uniformly distributed over $[n]$ and independent of all the other random variables, and redefining $U=(U_Q, Q)$, $V_1 = V_{1,Q}$, $V_2=V_{2,Q}$, $S_T = S_{T,Q}$, $X=X_Q$, $Y_1 = Y_{1,Q}$, $Y_2=Y_{2,Q}$, $Z=Z_Q$, it is easy to show that the choice of $(U, X, V_1, V_2, h_1, h_2)$ is a member of $\pdd^{\BC}$. For Decoder 2, we have
\begin{subequations}
\begin{align}
    nR_2 - n\epsilon_n &\le I(M_2; Y_2^n)\\
    &= I(M_2, Y_1^n; Y_2^n) - I(Y_1^n; Y_2^n | M_2)\\
    &= \sum_{i=1}^n I(M_2, Y_1^n; Y_{2,i}|Y_2^{i-1}) - I(Y_{1,i}; Y_2^n | M_2, Y_1^{i-1})\\
    &\le \sum_{i=1}^n H(Y_{2,i}) - H(Y_{2,i} | M_2, Y_1^n, Y_2^{i-1}) - I(Y_{1,i}; Y_2^n | M_2, Y_1^{i-1})\\
    &= \sum_{i=1}^n H(Y_{2,i}) - H(Y_{2,i} | M_2, Y_1^i, Y_2^{i-1}) - I(Y_{1,i}; Y_2^n | M_2, Y_1^{i-1})\label{subeq:deg-bc-converse-outer-1}\\
    &= \sum_{i=1}^n I(M_2, Y_1^i, Y_2^{i-1}; Y_{2,i}) - I(Y_{1,i}; Y_2^n | M_2,  Y_1^{i-1})\\
    &= \sum_{i=1}^n I(M_2, Y_1^{i-1}, Y_2^{i-1}; Y_{2,i}) + I(Y_{1,i}; Y_{2,i} | M_2, Y_1^{i-1}, Y_2^{i-1})- I(Y_{1,i}; Y_2^n | M_2, Y_1^{i-1})\\
    &= \sum_{i=1}^n I(U_i; Y_{2,i}) + H(Y_{1,i} | M_2, Y_1^{i-1}, Y_2^{i-1}) - H( Y_{1,i} | M_2, Y_1^{i-1}, Y_2^{i-1}, Y_{2,i})\\
    &\hspace{1cm} - H(Y_{1,i} | M_2, Y_1^{i-1}) + H(Y_{1,i} | M_2, Y_1^{i-1}, Y_2^n)\notag\\
    &= \sum_{i=1}^n I(U_i; Y_{2,i}) - H(Y_{1,i} | M_2, Y_1^{i-1}, Y_2^{i-1}, Y_{2,i}) + H(Y_{1,i} | M_2,Y_1^{i-1}, Y_2^n)\\
    &= \sum_{i=1}^n I(U_i; Y_{2,i}) - I(Y_{1,i}; Y_{2,i+1}^n | M_2, Y_1^{i-1}, Y_2^{i-1}, Y_{2,i})\\
    &= \sum_{i=1}^n I(U_i; Y_{2,i}) - I(Y_{1,i}; V_{2,i} | U_i, Y_{2,i}),\label{subeq:deg-bc-converse-r2-end}
\end{align}
\end{subequations}
where \eqref{subeq:deg-bc-converse-outer-1} holds due to the Markov chain $Y_{2,i} - (M_2,Y_1^i,Y_2^{i-1}) - Y_{1,i+1}^n$. For Decoder 1, we have
\begin{subequations}
\begin{align}
    nR_1 - n\epsilon_n &\le I(M_1; Y_1^n|M_2)\\
    &= I(M_1, S_T^n; Y_1^n | M_2) - I(S_T^n; Y_1^n | M_1, M_2)\\
    &= \sum_{i=1}^n I(M_1, S_T^n; Y_{1,i} | M_2, Y_1^{i-1}) - I(S_{T,i}; Y_1^n | M_1, M_2, S_T^{i-1})\\
    &= \sum_{i=1}^n H(Y_{1,i} | M_2, Y_1^{i-1}, Y_2^{i-1}) - H(Y_{1,i} | M_1, M_2, Y_1^{i-1}, Y_2^{i-1}, S_T^n)- I(S_{T,i}; Y_1^n | M_1, M_2, S_T^{i-1})\\
    &= \sum_{i=1}^n I(M_1, S_T^i; Y_{1,i} | M_2, Y_1^{i-1}, Y_2^{i-1})- I(S_{T,i}; Y_1^n | M_1, M_2, S_T^{i-1})\\
    &= \sum_{i=1}^n I(M_1, S_T^{i-1}; Y_{1,i} | M_2, Y_1^{i-1}, Y_2^{i-1}) + I(S_{T,i}; Y_{1,i} | M_1, M_2, Y_1^{i-1}, Y_2^{i-1}, S_T^{i-1})- I(S_{T,i}; Y_1^n | M_1, M_2, S_T^{i-1})\\
    &= \sum_{i=1}^n I(X_i; Y_{1,i} | U_i) + I(S_{T,i}; Y_{1,i} | X_i, U_i)- I(S_{T,i}; Y_1^n |  M_1, M_2, S_T^{i-1})\\
    &= \sum_{i=1}^n I(X_i; Y_{1,i} | U_i) + H(S_{T,i}| X_i, U_i)\notag\\
    &\hspace{1cm} -  H(S_{T,i}|  X_i, U_i,  Y_{1,i}) - H(S_{T,i} | M_1, M_2, S_T^{i-1}) + H(S_{T,i} | M_1, M_2, S_T^{i-1},  Y_1^n)\\
    &\le \sum_{i=1}^n I(X_i; Y_{1,i} | U_i) - H(S_{T,i}|  X_i, U_i,  Y_{1,i}, Y_2^{n \backslash i}) + H(S_{T,i} | M_1, M_2, S_T^{i-1},  Y_1^n, Y_2^{n\backslash i})\label{subeq:deg-bc-converse-outer-2}\\
    &\le \sum_{i=1}^n I(X_i; Y_{1,i} | U_i) - H(S_{T,i}|  X_i, U_i,  Y_{1,i}, Y_2^{n \backslash i}) + H(S_{T,i} | X_i, U_i, Y_1^n, Y_2^{n\backslash i})\\
    &= \sum_{i=1}^n I(X_i; Y_{1,i} | U_i) - I(S_{T,i}; Y_1^{n\backslash i} |X_i, U_i, Y_{1,i}, Y_2^{n\backslash i})\\
    &= \sum_{i=1}^n I(X_i; Y_{1,i} | U_i) - I(S_{T,i}; V_{1,i} | X_i,  U_i, V_{2,i}, Y_{1,i})\label{subeq:deg-bc-converse-r1-end}
\end{align}
\end{subequations}
where \eqref{subeq:deg-bc-converse-outer-2} follows because $S_{T,i}$ is independent of $(U_i, X_i, M_1, M_2, S_T^{i-1})$, conditions don't increase entropy, and the channel is degraded. The sum rate is constrained by
\begin{subequations}
\begin{align}
    n(R_1 + R_2 - \epsilon_n) &\le I(M_1, M_2; Y_1^n)\\
    & = I(M_1, M_2, S_T^n ; Y_1^n) - I(S_T^n; Y_1^n | M_1, M_2)\\
    &=\sum_{i=1}^n I(M_1, M_2, S_T^n ; Y_{1,i} | Y_1^{i-1}) - I(S_{T,i}; Y_1^n | M_1, M_2, S_T^{i-1})\\
    &\le \sum_{i=1}^n I(M_1, M_2, S_T^i, Y_1^{i-1} ; Y_{1,i}) - I(S_{T,i}; Y_1^n | M_1, M_2, S_T^{i-1})\\
    &= \sum_{i=1}^n I(X_i, U_i; Y_{1,i}) + I(S_{T,i}; Y_{1,i} | U_i, X_i) - I(S_{T,i}; Y_1^n | M_1, M_2, S_T^{i-1})\\
    &\le \sum_{i=1}^n I(X_i; Y_{1,i}) - H(S_{T,i} | X_i, U_i, Y_{1,i}, Y_{2,i}) + H(S_{T,i} | U_i, X_i, S_T^{i-1}, Y_1^n, Y_2^n)\\
    &= \sum_{i=1}^n I(X_i; Y_{1,i}) - I(S_{T,i} ; S_T^{i-1}, Y_{1}^{n\backslash i}, Y_{2}^{n\backslash i} | X_i, U_i, Y_{1,i}, Y_{2,i})\\
    &= \sum_{i=1}^n I(X_i; Y_{1,i}) - I(S_{T,i}; V_{1,i} | X_i, U_i, Y_{1,i}, Y_{2,i}).
\end{align}
\end{subequations}
Leveraging the time-sharing random variable $Q$ and noting that it is independent of the other random variables, it is obtained
\begin{subequations}
\begin{align}
    R_2 - \epsilon_n &\le I(U_Q; Y_{2,Q} | Q) - I(Y_{1,Q}; V_{2,Q} | Q, U_Q, Y_{2,Q})\\
    &\le I(U; Y_2) - I(Y_1; V_2 | U, Y_2),\\
    R_1 - \epsilon_n &\le I(X_Q; Y_{1,Q} |Q, U_Q) - I(S_{T,Q}; V_{1,Q} | Q, U_Q, X_Q, V_{2,Q}, Y_{1,Q})\\
    &= I(X; Y_1 | U) - I(S_T; V_1 | U, X, V_2, Y_1),\\
    R_1 + R_2 - \epsilon_n &\le  I(X_Q, Y_{1,Q} | Q) - I(S_{T,Q}; V_{1,Q}, V_{2,Q} | Q, U_Q, X_Q, Y_{1,Q}, Y_{2,Q})\\
    &\le I(X, Y_1) - I(S_T; V_1, V_2 | U, X, Y_1, Y_2),
\end{align}
\end{subequations}
which concludes the proof.
    
    \section{}\label{app:mac-thm}
    The following two lemmas ensure the proper termination of the block Markov coding method. For simplicity, we write $\calP_D^{\MAC}$ here again:
\begin{equation}
\begin{split}
    \calP_D^{\MAC} &= \left\{ (U, W_1,W_2,X_1,X_2, V_1, V_2, h)\right|P_{SS_1S_2UW_1W_2X_1X_2YZ_1Z_2V_1V_2}(s,s_1,s_2,u,w_1,w_2,x_1,x_2,y,z_1,z_2,v_1,v_2)\\
    & =P_{SS_1S_2}(s,s_1,s_2)P_{U}(u)P_{W_1X_1|U}(w_1,x_1|u)P_{W_2X_2|U}(w_2,x_2|u)P_{Y | X_1X_2 S} ( y | x_1,x_2 , s)\mathbbm{1}\{z_1=\varphi_1(y))\} \\
    & \cdot \mathbbm{1}\{z_2=\varphi_2(y))\}P_{V_1|UW_1W_2X_1S_1Z_1}(v_1|u,w_1,w_2,x_1,s_1,z_1)P_{V_2|UW_1W_2X_2S_2Z_2}(v_2|u,w_1,w_2,x_2,s_2,z_2); \\
    &\left. \expcs{}{d(S, h(U,W_1,W_2, X_1,X_2, V_1, V_2, Y))} \le D \right \}.
\end{split}
\end{equation}
\begin{lemma}
    Let $(U,W_1,W_2,X_1,X_2,h) \in \calP_D^{\MAC}$ and $(R_1, R_2)\in\calR^{\MAC}(D)$, then $R_1 + R_2 < I(X_1,X_2;Y)$. Therefore, if $I(X_1; Y|X_2) = I(X_2; Y |X_1) = 0$, we have $R_1=R_2=0$.
\end{lemma}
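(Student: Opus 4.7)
\emph{Proof proposal.} The plan is to start from the sum-rate constraint appearing in the definition of $\calR^{\MAC}(D)$,
\begin{equation*}
R_1 + R_2 <  I(U,W_1,W_2,X_1,X_2,V_1,V_2; Y) - I(V_1; S_1, Z_1 | U, W_1, W_2, X_1) - I(V_2; S_2, Z_2 | U, W_1, W_2, X_2),
\end{equation*}
and to show that the right-hand side is bounded above by $I(X_1, X_2; Y)$ by absorbing the two negative compression terms into the first mutual information.

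First, I would invoke the Markov chain $(U,W_1,W_2) - (X_1, X_2) - Y$---which holds because $Y$ is generated through $P_{Y|X_1X_2S}$ with $S$ independent of $(U, W_1, W_2, X_1, X_2)$---to decompose $I(U,W_1,W_2,X_1,X_2,V_1,V_2; Y) = I(X_1, X_2; Y) + I(V_1, V_2; Y | U, W_1, W_2, X_1, X_2)$. Next, by the chain rule together with the Markov chains $V_1 - (U,W_1,W_2,X_1,S_1,Z_1) - (X_2, Y)$ and $V_2 - (U,W_1,W_2,X_2,S_2,Z_2) - (X_1, V_1, Y)$ that follow from the factorization in $\calP_D^{\MAC}$, the second term is bounded by $I(V_1; S_1, Z_1 | U,W_1,W_2,X_1,X_2) + I(V_2; S_2, Z_2 | U,W_1,W_2,X_1,X_2, V_1)$. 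A further application of ``conditioning reduces entropy'' combined with the same Markov chains yields $I(V_1; S_1, Z_1 | U,W_1,W_2,X_1,X_2) \le I(V_1; S_1, Z_1 | U, W_1, W_2, X_1)$ and $I(V_2; S_2, Z_2 | U,W_1,W_2,X_1,X_2, V_1) \le I(V_2; S_2, Z_2 | U, W_1, W_2, X_2)$; substituting back and cancelling the two negative terms delivers $R_1 + R_2 < I(X_1, X_2; Y)$.

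For the corollary, assuming $I(X_1; Y | X_2) = I(X_2; Y | X_1) = 0$, both Markov chains $X_1 - X_2 - Y$ and $X_2 - X_1 - Y$ hold, so $P_{Y|X_1X_2}(y|x_1, x_2)$ depends on neither $x_1$ nor $x_2$; it therefore equals $P_Y(y)$, whence $I(X_1, X_2; Y) = 0$, and non-negativity of the rates forces $R_1 = R_2 = 0$. The main obstacle I anticipate is the careful bookkeeping of the several Markov chains induced by the joint factorization in $\calP_D^{\MAC}$---especially verifying $V_2 - (U,W_1,W_2,X_2,S_2,Z_2) - (X_1, V_1, Y)$, which requires combining the separate conditional densities of $V_1$ and $V_2$ rather than reading off a single conditional distribution. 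Once those conditional independences are in hand, the remaining mutual-information manipulations are routine.
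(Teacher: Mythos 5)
Your derivation of $R_1+R_2 < I(X_1,X_2;Y)$ is correct and uses essentially the same ingredients as the paper's proof: starting from the fourth (sum-rate) constraint of $\calR^{\MAC}(D)$, peeling off $I(X_1,X_2;Y)$ via the Markov chain $(U,W_1,W_2)-(X_1,X_2)-Y$, and absorbing the residual $I(V_1,V_2;Y\mid U,W_1,W_2,X_1,X_2)$ into the two compression terms using the chain rule, ``conditioning reduces entropy,'' and the Markov chains $V_k-(U,W_1,W_2,X_k,S_k,Z_k)-(X_{\bar k},S_{\bar k},Z_{\bar k},V_{\bar k},Y)$. The paper carries this out at the level of conditional entropies in one long display; your chain-rule organization is cleaner but is the same argument, and the conditional independences you flag as the main bookkeeping burden do all follow from the factorization defining $\calP_D^{\MAC}$.

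The gap is in your justification of the second sentence. From $I(X_1;Y\mid X_2)=I(X_2;Y\mid X_1)=0$ you infer that $P_{Y\mid X_1X_2}$ depends on neither input and hence equals $P_Y$. That inference is valid only when the support of $P_{X_1X_2}$ is (essentially) a product set; the two Markov conditions only pin down $P_{Y\mid X_1X_2}$ on the support, and in $\calP_D^{\MAC}$ the inputs are merely conditionally independent given $U$, so they can be perfectly correlated. Concretely, take $U\sim\Bern(\tfrac12)$, $X_1=X_2=U$, and $Y=X_1$: then $I(X_1;Y\mid X_2)=I(X_2;Y\mid X_1)=0$ while $I(X_1,X_2;Y)=1$, so $I(X_1,X_2;Y)=0$ does not follow and your route to $R_1=R_2=0$ breaks down. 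To be fair, the paper's own proof block establishes only the displayed inequality and asserts the ``therefore'' without argument, so you are not behind the authors here; but to close the corollary you would need either to restrict attention to distributions with product support, or to bound $R_1$ and $R_2$ individually from the per-user constraints of $\calR^{\MAC}(D)$ rather than passing through $I(X_1,X_2;Y)=0$.
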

\begin{proof}
\begin{subequations}
\begin{align}
    &R_1 + R_2\\
    &<  I(U,W_1,W_2,X_1,X_2,V_1,V_2; Y) - I(V_1; S_1, Z_1 | U, W_1, W_2, X_1)- I(V_2; S_2, Z_2 | U, W_1, W_2, X_2)\\
    &= I(X_1,X_2;Y) + I(U,W_1,W_2,V_1,V_2;Y | X_1,X_2) - I(V_1; S_1, Z_1 | U, W_1, W_2, X_1)- I(V_2; S_2, Z_2 | U, W_1, W_2, X_2)\\
    &= I(X_1,X_2;Y) + H(U,W_1,W_2,V_1, V_2| X_1,X_2) - H(U,W_1,W_2,V_1, V_2| X_1,X_2,Y) - H(V_1|U,W_1,W_2,X_1) \notag\\
    &\hspace{3cm} + H(V_1|U,W_1,W_2,X_1,S_1,Z_1)- H(V_2|U,W_1,W_2,X_2) + H(V_2|U,W_1,W_2,X_2,S_2,Z_2) \\
    &\le I(X_1,X_2;Y) + H(V_1, V_2|U,W_1,W_2, X_1,X_2) + H(U,W_1,W_2|X_1,X_2) - H(U,W_1,W_2|X_1,X_2,Y)\notag\\
    &\hspace{1cm} - H(V_1,V_2|U,W_1,W_2,X_1,X_2,Y) - H(V_1, V_2|U,W_1,W_2, X_1,X_2, Y) - H(V_1|U,W_1,W_2,X_1,X_2,V_2) \notag\\
    &\hspace{1cm} + H(V_1|U,W_1,W_2,X_1,S_1,Z_1)- H(V_2|U,W_1,W_2,X_1,X_2) + H(V_2|U,W_1,W_2,X_2,S_2,Z_2)\label{appsubeq:mac-lemma1-1}\\
    &= I(X_1,X_2;Y) - H(V_1, V_2|U,W_1,W_2, X_1,X_2, Y)+ H(V_1|U,W_1,W_2,X_1,S_1,Z_1) + H(V_2|U,W_1,W_2,X_2,S_2,Z_2)\\
    &= I(X_1,X_2;Y) - H(V_1, V_2|U,W_1,W_2, X_1,X_2, Y)+ H(V_1,V_2|U,W_1,W_2,X_1,X_2,S_1,S_2,Z_1,Z_2,Y)\label{appsubeq:mac-lemma1-2}\\
    &\le  I(X_1,X_2;Y),
\end{align}
\end{subequations}
where \eqref{appsubeq:mac-lemma1-1} follows because $(U,W_1,W_2)- (X_1,X_2) - Y$ forms a Markov chain and conditions do not increase entropy, and \eqref{appsubeq:mac-lemma1-2} holds due to the Markov chain $V_k - (U,W_1,W_2,X_k,S_k,Z_k) - (X_{\bar{k}}, S_{\bar{k}}, Z_{\bar{k}}, Y, V_{\bar{k}})$ for $k, \bar{k} \in \{1,2\}$ and $k\neq \bar{k}$.
\end{proof}

\begin{lemma}
    Let $(U,W_1,W_2,X_1,X_2,h) \in \calP_D^{\MAC}$ and $(R_1, R_2)\in\calR^{\MAC}(D)$. If $I(X_k; S_k, Y | X_{\bar{k}}) = 0$ for $k, \bar{k} \in \{1,2\}$ and $k\neq \bar{k}$, we have $R_k = 0$ and $R_{\bar{k}} < I(X_{\bar{k}}; Y)$.
\end{lemma}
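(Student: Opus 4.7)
The plan treats the two conclusions separately; by symmetry I focus on $k=1$, $\bar{k}=2$. The bound $R_{\bar{k}} < I(X_{\bar{k}}; Y)$ follows directly from the preceding lemma and the hypothesis: since $R_1 + R_2 < I(X_1, X_2; Y)$ and $I(X_1, X_2; Y) = I(X_{\bar{k}}; Y) + I(X_k; Y \mid X_{\bar{k}})$ with $I(X_k; Y \mid X_{\bar{k}}) \le I(X_k; S_k, Y \mid X_{\bar{k}}) = 0$, combining with $R_k \ge 0$ immediately yields the claim.

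For the substantive part $R_k = 0$, I plan to show that the right-hand side of the $R_k$-inequality in \eqref{eq:macrdregion} is nonpositive, which forces the nonnegative rate $R_k$ to vanish. The first step is to lift the hypothesis to an extended Markov chain. Writing out the joint density and using the conditional independences $(X_k, W_k) \perp (X_{\bar{k}}, W_{\bar{k}}) \mid U$ and $(S, S_1, S_2) \perp (U, W_1, W_2, X_1, X_2)$ built into $\calP_D^{\MAC}$, the hypothesis is equivalent to requiring $\sum_s P_{SS_k}(s, s_k) P_{Y \mid X_1 X_2 S}(y \mid x_1, x_2, s)$ to be independent of $x_k$. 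This in turn implies the Markov chain $X_k - (U, W_1, W_2, X_{\bar{k}}) - (S_k, Y, Z_1, Z_2)$, so in particular $I(X_k; Y \mid U, W_1, W_2, X_{\bar{k}}) = 0$ and $I(X_k; Z_{\bar{k}} \mid U, W_1, W_2, X_{\bar{k}}) = 0$.

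Denote the three summands of the $R_k$-bound by $A_k, B_k, C_k$. The core of the proof is the inequality $A_k + B_k \le C_k$. For $A_k$, relaxing the conditioning on $V_{\bar{k}}$ (which can only reduce the first conditional entropy) and applying the extended Markov chain eliminates the direct $X_k$-to-$Y$ contribution, giving $A_k \le I(V_1, V_2; Y \mid U, W_1, W_2, X_1, X_2)$. The Markov structure of $\calP_D^{\MAC}$, namely that $V_j$ depends on $(U, W_1, W_2, X_j, S_j, Z_j)$ alone, then permits a chain-rule expansion into state-compression mutual informations, where the $V_k$-piece is bounded by $C_k$ modulo an auxiliary nonnegative slack $I(V_k; X_{\bar{k}} \mid U, W_1, W_2, X_k)$. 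For $B_k$, data processing along $W_k - (U, X_k) - (Y, Z_{\bar{k}})$ yields $B_k \le I(X_k; Z_{\bar{k}} \mid S_{\bar{k}}, U, W_{\bar{k}}, X_{\bar{k}})$, which I plan to eliminate using the extended Markov chain after a suitable rearrangement of the conditioning.

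The main obstacle is that $B_k$ is conditioned on $S_{\bar{k}}$ whereas the hypothesis controls only the $S_k$-conditional behaviour of $X_k$; indeed $I(X_k; Y \mid X_{\bar{k}}, S_{\bar{k}})$ need not vanish even under the assumption, so termwise bounding $B_k$ by zero fails. The remedy is to keep $A_k + B_k$ grouped from the outset and expand them jointly through a common conditioning set containing $(S_k, Z_k)$ rather than $S_{\bar{k}}$; after this swap the extended Markov chain applies uniformly, the residual $X_k$- and $W_k$-contributions vanish, and the only surviving term is the state-compression cost $C_k$, yielding $A_k + B_k \le C_k$ and hence $R_k = 0$.
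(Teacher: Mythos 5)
Your first step ($R_{\bar{k}}<I(X_{\bar{k}};Y)$ from the sum-rate bound of the preceding lemma together with $I(X_k;Y|X_{\bar{k}})\le I(X_k;S_k,Y|X_{\bar{k}})=0$ and $R_k\ge 0$) is fine, and is essentially what the paper does. The gap is in the substantive part: the inequality $A_k+B_k\le C_k$ that your plan rests on cannot be established, and the obstacle you yourself flag — that $B_k=I(W_k;Z_{\bar{k}}|S_{\bar{k}},U,W_{\bar{k}},X_{\bar{k}})$ is conditioned on $S_{\bar{k}}$ while the hypothesis only controls the $(S_k,Y)$-behaviour of $X_k$ — is fatal rather than presentational. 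There is no information identity that lets you ``swap'' the conditioning set of $B_k$ from $S_{\bar{k}}$ to $(S_k,Z_k)$; conditional mutual informations under different conditionings are simply incomparable. Concretely, take $k=1$, $S=S_2\sim\Bern(\frac12)$, $S_1$ trivial, $Y=(X_1\oplus S_2,\;X_2)$, $Z_2=Y$, $W_1=X_1\sim\Bern(\frac12)$, $X_2=U=W_2\sim\Bern(\frac12)$, $V_1=V_2=\varnothing$. Then $I(X_1;S_1,Y|X_2)=I(X_1;X_1\oplus S_2|X_2)=0$, so your hypothesis and your ``extended Markov chain'' both hold, yet $A_1=C_1=0$ while $B_1=I(X_1;X_1\oplus S_2|S_2,U,X_2)=1$, and one checks that all four constraints of \eqref{eq:macrdregion} admit $R_1$ arbitrarily close to $1$ with $R_2=0$. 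So under the hypothesis as you (and the lemma statement) read it, the conclusion $R_k=0$ is false: Encoder $k$ can still route its message through the feedback link of Encoder $\bar{k}$, which is exactly the cooperation gain the term $B_k$ quantifies.

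What the paper's proof actually establishes is the unconditional bound $R_k\le I(X_k;S_{\bar{k}},Y|X_{\bar{k}})$ — note $S_{\bar{k}}$, not $S_k$ — obtained by first absorbing the $V_k$-contribution of $A_k$ into $C_k$ via the Markov chain $V_k-(U,W_1,W_2,X_k,S_k,Z_k)-(X_{\bar{k}},V_{\bar{k}},Y)$, then enlarging both remaining terms to mutual informations with $(S_{\bar{k}},Y)$ (using $Z_{\bar{k}}=\varphi_{\bar{k}}(Y)$), and finally collapsing the conditioning with the Markov chains for $V_{\bar{k}}$ and $(S_{\bar{k}},Y)-(X_1,X_2)-(U,W_1,W_2)$. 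The quantity $I(X_k;S_{\bar{k}},Y|X_{\bar{k}})$ is also the one whose positivity the termination blocks of the coding scheme require; the ``$S_k$'' in the lemma statement (and in the first line of the paper's proof) appears to be a typo for ``$S_{\bar{k}}$''. Under that corrected hypothesis the paper's chain closes, but your argument — which leans on the literal hypothesis to derive $X_k-(U,W_1,W_2,X_{\bar{k}})-(S_k,Y,Z_1,Z_2)$ and then hopes to kill $B_k$ — does not, and cannot be repaired without changing the hypothesis.
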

\begin{proof}
    Without loss of generality, it is assumed $I(X_2; S_2, Y | X_1) = 0$. It shows that
    \begin{subequations}
    \begin{align}
        R_2 &< I(X_2, V_2; Y | U,W_1,W_2,X_1,V_1) + I(W_2; Z_1 | S_1,U,W_1,X_1) - I(V_2; S_2, Z_2 | U, W_1, W_2, X_2)\\
        &=I(X_2; Y | U,W_1,W_2,X_1,V_1) + I(V_2; Y | U,W_1,W_2,X_1,X_2,V_1) \notag\\
        &\hspace{3cm}+ I(W_2; Z_1 |S_1, U,W_1,X_1) - I(V_2; Z_2 | S_2, U, W_1, W_2, X_2)\\
        &= I(X_2; Y | U,W_1,W_2,X_1,V_1) + H(V_2 |  U,W_1,W_2,X_1,X_2,V_1) - H(V_2 |  U,W_1,W_2,X_1,X_2,V_1,Y) \notag\\
        &\hspace{2cm} + I(W_2; Z_1 |S_1, U,W_1,X_1) - H(V_2 | U, W_1, W_2, X_2) + H(V_2 | U, W_1, W_2, X_2,S_2,Z_2)\\
        &\le I(X_2; Y | U,W_1,W_2,X_1,V_1) + I(W_2; Z_1 |S_1, U,W_1,X_1)\label{appsubeq:mac-lemma2-1}\\
        &\le I(X_2; S_1, Y | U,W_1,W_2,X_1,V_1)+ I(W_2; S_1, Y | U,W_1,X_1)\label{appsubeq:mac-lemma2-2}\\
        &= H(S_1, Y | U,W_1,W_2,X_1,V_1) - H( S_1, Y | U,W_1,W_2,X_1,X_2,V_1) + H(S_1,Y | U,W_1,X_1) - H(S_1,Y | U,W_1,W_2,X_1)\\
        &= H(V_1,S_1, Y | U,W_1,W_2,X_1) - H(V_1|U,W_1,W_2,X_1) - H(V_1,S_1, Y | U,W_1,W_2,X_1,X_2) + H(V_1 | U,W_1,W_2,X_1,X_2)\notag\\
        &\hspace{2cm}+ H(S_1,Y | U,W_1,X_1) - H(S_1,Y | U,W_1,W_2,X_1)\\
        &\le H(V_1,S_1, Y | U,W_1,W_2,X_1) - H(V_1,S_1, Y | U,W_1,W_2,X_1,X_2) + H(S_1,Y | U,W_1,X_1) - H(S_1,Y | U,W_1,W_2,X_1)\\
        &= H(S_1,Y |U,W_1,W_2,X_1) + H(V_1 | U,W_1,W_2,X_1,S_1, Y) - H(S_1, Y | U,W_1,W_2,X_1,X_2)\notag\\
        &\hspace{2cm}- H(V_1 | U,W_1,W_2,X_1,X_2,S_1, Y)+ H(S_1,Y | U,W_1,X_1) - H(S_1,Y | U,W_1,W_2,X_1)\\
        &= - H(S_1, Y | U,W_1,W_2,X_1,X_2) + H(S_1,Y | U,W_1,X_1)\label{appsubeq:mac-lemma2-3}\\
        &\le  H(S_1,Y | X_1) - H(S_1, Y | X_1,X_2)\label{appsubeq:mac-lemma2-4}\\
        &= I(X_2; S_1,Y | X_1),
    \end{align}
    \end{subequations}
    where \eqref{appsubeq:mac-lemma2-1} holds because conditions do not increase entropy and $V_2 - (U,W_1,W_2,X_2,S_2,Z_2) - (X_1,V_1,Y)$ forms a Markov chain, \eqref{appsubeq:mac-lemma2-2} holds since $Z_1$ is a deterministic function of $Y$, \eqref{appsubeq:mac-lemma2-3} follows the Markov chains $V_1-(U,W_1,W_2,X_1,X_2,S_1,Z_1) - (X_2,Y)$, \eqref{appsubeq:mac-lemma2-4} results from the Markov chain $(S_1,Y) - (X_1,X_2) - (U,W_1,W_2)$. Thus, we have
    \begin{subequations}
    \begin{align}
        R_1 & = R_1 + R_2\\
        &<  I(X_1,X_2;Y)\\
        & =  I(X_1;Y) + I(X_2; Y |X_1)\\
        &\le I(X_1;Y) + I(X_2; S_1,Y |X_1)\\
        &=I(X_1;Y).
    \end{align}
    \end{subequations}
\end{proof}

Consequently, with the above lemmas, we can assume $I(X_1;Y|X_2) > \mu_1 >0$ and $I(X_2; S_1,Y | X_1) > \mu_2 >0$ in the following. Otherwise, the channel becomes equivalent to a point-to-point channel, and the proof of achievability is trivial. Fixing $\calP_D^{\MAC}$ that achieves $\calR^{\MAC}(D)$, we assume the transmission happens in $B+4$ blocks.

\subsection{Codebook Generation}

For each block $b$, split the private messages into to two parts $m_{k,b} = (m_{k,b}', m_{k,b}'')$ with $m_{k,b}'\in [2^{nR_k'}]$, $m_{k,b}''\in [2^{nR_k''}]$ and $R_k = R_k' + R_k''$ for Encoder $k\in \{1,2\}$. Let $l_{k,b}\in [2^{nR_{sk}}]$ and $m_{c,b} = (m'_{1,b-1}, m'_{2,b-1})$.

For block $b\in [B+1]$, Encoder $k$ generates 
\begin{itemize}
    \item $2^{n(R_1'+R_2')}$ sequences of $u^n(m_{c,b})$ \ac{iid} according to $P^n_U$;
    \item $2^{nR'_k}$ sequences of $w_k^n(m'_{k,b} | m_{c,b})$ \ac{iid} according to $P^n_{W_k|U}$ given each $u^n(m_{c,b})$;
    \item $2^{n(R_k'' + R_{sk})}$ sequences of $x_k^n(m_{k,b}'', l_{k,b-1} |m_{k,b}', m_{c,b})$ \ac{iid} according to $P^n_{X_k|W_kU}$ given each $w_k^n(m'_{k,b} | m_{c,b})$ and $u^n(m_{c,b})$;
    \item $2^{nR_{sk}}$ sequences of $v_k^n(l_{k,b} | m_{k,b}'', m_{1,b}', m_{2,b}', l_{k,b-1}, m_{c,b})$ \ac{iid} according to $P^n_{V_k|UW_1W_2X_k}$ given each $u^n(m_{c,b})$, $w_1^n(m'_{1,b} | m_{c,b})$, $w_2^n(m'_{2,b} | m_{c,b})$ and $x_k^n(m_{k,b}'', l_{k,b-1} |m_{k,b}', m_{c,b})$.
\end{itemize}

For block $B+2$, generate one length-$n_1$ sequence $x_2^{n_1}$ \ac{iid} from $P^n_{X_2}$, and $2^{nR_{s1}}$ length-$n_1$ sequences $x_1^{n_1}(l_{1,B+1})$ \ac{iid} from $P^n_{X_1}$ with $n_1 = \frac{nR_{s1}}{\mu_1}$.

For block $B+3$, generate one length-$n_2$ sequence $x_1^{n_2}$ \ac{iid} from $P^n_{X_1}$, and $2^{nR_{s2}}$ length-$n_2$ sequences $x_2^{n_2}(l_{2,B+1})$ \ac{iid} from $P^n_{X_2}$ with $n_2 = \frac{nR_{s2}}{\mu_2}$.

For block $B+4$, generate one length-$n_3$ sequence $x_2^{n_3}$ \ac{iid} from $P^n_{X_2}$, and $2^{n_2(H(S_1) + \delta)}$ length-$n_3$ sequences $x_1^{n_3}(l_{1,B+3})$ \ac{iid} from $P^n_{X_1}$ with $l_{1,B+3}=[2^{n_2(H(S_1) + \delta)}]$ and $\delta>0$ with $n_3 = \frac{n_2(H(S_1) + \delta)}{\mu_1}$. 

Note that in the last three blocks, the sequences are generated independently from $P_{X_1}$ and $P_{X_2}$, respectively.

\subsection{Encoding}

Let $l_{k,0} = m_{k,0}'=1$ for both $k\in\{1,2\}$. In each block $b\in[B]$, Encoder $k$ transmits $x_k^n(m_{k,b}'', l_{k,b-1} |m_{k,b}', m_{c,b})$. At the end of each block, with the knowledge of $(s_k^n(b), z_k^n(b))$, it first decodes $\hat{m}'_{\bar{k},b}$ from the other Encoder $\bar{k}$ such that
\begin{equation}
    \br{w_{\bar{k}}^n(\hat{m}'_{\bar{k},b} | m_{c,b}), u^n(m_{c,b}), w_{k}^n(m'_{k,b} | m_{c,b}), x_k^n(m_{k,b}'', l_{k,b-1} |m_{k,b}', m_{c,b}),s_k^n(b), z_k^n(b)} \in \typset{}{P_{UW_1W_2X_kS_kZ_k}},
\end{equation}
which is shown to be correct if
\begin{equation}\label{appeq:mac-rq'}
    R_{\bar{k}}' < I(W_{\bar{k}}; W_k, X_k, S_k, Z_k | U) = I(W_{\bar{k}};  Z_k | S_k, U,W_k,X_k).
\end{equation}
The decoded $\hat{m}'_{\bar{k},b}$ is then used to construct $m_{c,b+1}$ for the next block. In addition, Encoder $k$ then looks for a unique state description $\hat{l}_{k,b}$ such that 
\begin{equation}
\begin{split}
    &\left (v_k^n(\hat{l}_{k,b} | m_{k,b}'', m_{1,b}', m_{2,b}', l_{k,b-1}, m_{c,b}), u^n(m_{c,b}), \right.\\
    &\hspace{2cm}\left. w_1^n(m'_{1,b} | m_{c,b}), w_2^n(m'_{2,b} | m_{c,b}), x_k^n(m_{k,b}'', l_{k,b-1} |m_{k,b}', m_{c,b}), s_k^n(b), z_k^n(b)\right) \in \typset{}{P_{V_kUW_1W_2X_kS_kZ_k}}.
\end{split}
\end{equation}
If there are multiple such indices, select one from them randomly. If there is no such index, set $\hat{l}_{k,b}=1$.
Similarly as before, the existence of such $\hat{l}_{k,b}$ is guaranteed if
\begin{equation}\label{appeq:mac-rsq}
    R_{sk} > I(V_k; S_k,Z_k | U, W_1, W_2, X_k).
\end{equation}

In block $B+1$, Encoder $k$ sends $x_k^n(m_{k,B+1}'', l_{k,B} |1, m_{c,B})$ without encoding the shared message for the other encoder. At the end of block $B+1$, it thus only needs to find the state description $l_{k,B+1}$ as before.

In block $B+2$, Encoder 1 only sends $l_{1,B+1}$ without any fresh message, i.e., $x_1^{n_1}(l_{1,B+1})$. Encoder 2 sends a deterministic sequence $x_2^{n_1}$.

In block $B+3$, Encoder 2 only sends $l_{2,B+1}$ without any fresh message, i.e., $x_2^{n_2}(l_{2,B+1})$. Encoder 1 sends a deterministic sequence $x_1^{n_2}$. At the end of this block, with the knowledge of $s_1^{n_2}(B+3)$, Encoder 1 compresses it losslessly into an index $l_{1,B+3}$. This raises no error due to the choice of $n_3$ and lossless source coding theorem\cite{el2011network}.

In block $B+4$, Encoder 1 only sends $l_{1,B+3}$ without any fresh message, i.e., $x_1^{n_3}(l_{1,B+3})$. Encoder 2 sends a deterministic sequence $x_2^{n_3}$.

\subsection{Decoding}

In block $B+4$, upon receiving $y^{n_3}(B+4)$, the decoder finds the unique index $\hat{l}_{1,B+3}$ such that
\begin{equation}
    \br{x_1^{n_3}(\hat{l}_{1,B+3}), x_2^{n_3},y^{n_3}(B+4)} \in \calT^{(n_3)}_{\epsilon}(P_{X_1X_2Y}),
\end{equation}
which is guaranteed to be correct due to
\begin{equation}
    \frac{n_2(H(S_1) + \delta)}{n_3} = \mu_1 < I(X_1;Y | X_2).
\end{equation}
With the correct decoding of $l_{1,B+3}$, the decoder can recover $s_1^{n_2}(B+3)$ losslessly, which is then used to decode $\hat{l}_{2,B+1}$ in block $B+3$ such that
\begin{equation}
    \br{x_2^{n_2}(\hat{l}_{2,B+1}), x_1^{n_2}, s_1^{n_2}(B+3),y^{n_2}(B+3)} \in \calT^{(n_2)}_{\epsilon}(P_{X_1X_2S_1Y}),
\end{equation}
guaranteed by
\begin{equation}
    \frac{nR_{s2}}{n_2} = \mu_2 < I(X_2; S_1,Y | X_1).
\end{equation}
In block $B+2$, the decoder finds a unique index $\hat{l}_{1,B+1}$ such that
\begin{equation}
    \br{x_1^{n_1}(\hat{l}_{1,B+1}), x_2^{n_1},y^{n_1}(B+2)} \in \calT^{(n_1)}_{\epsilon}(P_{X_1X_2Y})
\end{equation}
with 
\begin{equation}
    \frac{nR_{s1}}{n_1} = \mu_1 < I(X_1; Y |X_2).
\end{equation}
Having the correct indices $(l_{1,B+1}, l_{2,B+1})$, the decoder then looks for $(\hat{m}_{c,B+1}, \hat{m}_{1,B+1}'', \hat{m}_{2,B+1}'', \hat{l}_{1,B}, \hat{l}_{2,B})$ such that
\begin{equation}
\begin{split}
    &\left(u^n(\hat{m}_{c,B+1}), w_1^n(1 | \hat{m}_{c,B+1}), w_2^n(1|\hat{m}_{c,B+1}), x_1^n(\hat{m}_{1,B+1}'',\hat{l}_{1,B} | \hat{m}_{c,B+1},1), x_2^n(\hat{m}_{2,B+1}'',\hat{l}_{2,B} | \hat{m}_{c,B+1},1),\right. \\
    &\hspace{0.5cm} \left. v_1^n(\hat{l}_{1,B+1} | \hat{m}_{1,B+1}'',1,1,\hat{l}_{1,B}, \hat{m}_{c,B+1}), v_2^n(\hat{l}_{2,B+1} | \hat{m}_{2,B+1}'',1,1,\hat{l}_{2,B}, \hat{m}_{c,B+1}), y^n(B+1) \right ) \in \typset{}{P_{UW_1W_2X_1X_2V_1V_2Y}},
\end{split}
\end{equation}
and, if correct, decodes $(\hat{m}_{c,b}, \hat{m}_{1,b}'', \hat{m}_{2,b}'', \hat{l}_{1,b-1}, \hat{l}_{2,b-1})$ with
\begin{equation}
\begin{split}
    &\left(u^n(\hat{m}_{c,b}), w_1^n(m_{1,b}' | \hat{m}_{c,b}), w_2^n(m_{2,b}'|\hat{m}_{c,b}), x_1^n(\hat{m}_{1,b}'',\hat{l}_{1,b-1} | \hat{m}_{c,b},m_{1,b}'), x_2^n(\hat{m}_{2,b}'',\hat{l}_{2,b-1} | \hat{m}_{c,b},m_{2,b}'),\right. \\
    &\hspace{1cm} \left. v_1^n(l_{1,b} | \hat{m}_{1,b}'',m_{1,b}',m_{2,b}',\hat{l}_{1,b-1}, \hat{m}_{c,b}), v_2^n(l_{2,b} | \hat{m}_{2,b}'',m_{1,b}',m_{2,b}',\hat{l}_{2,b-1}, \hat{m}_{c,b}), y^n(b) \right ) \in \typset{}{P_{UW_1W_2X_1X_2V_1V_2Y}},
\end{split}
\end{equation}
in the backward direction from $b=B$ to $b=1$. The probabilities of error events 
\begin{equation}
\begin{split}
    &\brcur{ (\hat{m}_{c,b}, \hat{m}_{\bar{k},b}'', \hat{l}_{\bar{k},b-1}) =( m_{c,b}, m_{\bar{k},b}'', l_{\bar{k},b-1}),\  (\hat{m}_{k,b}'', \hat{l}_{k,b-1}) \neq (m_{k,b}'', l_{k,b-1})},\\
    &\brcur{ \hat{m}_{c,b} = m_{c,b},\  (\hat{m}_{1,b}'', \hat{m}_{2,b}'', \hat{l}_{1,b-1}, \hat{l}_{2,b-1}) \neq (m_{1,b}'', m_{2,b}'', l_{1,b-1}, l_{2,b-1})},\\
    &\brcur{(\hat{m}_{c,b}, \hat{m}_{1,b}'', \hat{m}_{2,b}'', \hat{l}_{1,b-1}, \hat{l}_{2,b-1}) \neq (m_{c,b}, m_{1,b}'', m_{2,b}'', l_{1,b-1}, l_{2,b-1})}
\end{split}
\end{equation}
are bounded to $0$ as $n\to \infty$ if
\begin{equation}\label{appeq:mac-rq}
\begin{split}
    R_k'' + R_{sk} &< I(X_k,V_k; Y | U,W_1,W_2,X_{\bar{k}},V_{\bar{k}})\\
    R_1'' + R_2'' + R_{s1} + R_{s2} & < I(X_1,X_2,V_1,V_2; Y | U,W_1,W_2)\\
    R_1+R_2+R_{s1} + R_{s2} &< I(U,W_1,W_2,X_1,X_2,V_1,V_2; Y),
\end{split}
\end{equation}
and all the other sub-events are shown to be redundant.
The decoder then estimates the channel state at each block by applying the estimator
\begin{equation}
    \hat{s}_i = h(u_i, w_{1,i}, w_{2,i}, x_{1,i}, x_{2,i}, v_{1,i}, v_{2,i}, y_i)
\end{equation}
in each block.

\subsection{Analysis}

Combining \eqref{appeq:mac-rq'}, \eqref{appeq:mac-rsq} and \eqref{appeq:mac-rq}, we obtain
\begin{equation}
\begin{split}
    R_1 &< I(X_1, V_1; Y | U,W_1,W_2,X_2,V_2) + I(W_1; Z_2 |S_2, U,W_2,X_2) - I(V_1; S_1, Z_1 | U, W_1, W_2, X_1)\\
    R_2 &< I(X_2, V_2; Y | U,W_1,W_2,X_1,V_1) + I(W_2; Z_1 |S_1, U,W_1,X_1) - I(V_2; S_2, Z_2 | U, W_1, W_2, X_2)\\
    R_1 + R_2 &< I(X_1, X_2, V_1, V_2 ; Y | U,W_1,W_2) +  I(W_1; Z_2 | S_2, U,W_2,X_2)+ I(W_2; Z_1 |S_1, U,W_1,X_1)\\
    &\hspace{5cm}-  I(V_1; S_1, Z_1 | U, W_1, W_2, X_1)- I(V_2; S_2, Z_2 | U, W_1, W_2, X_2)\\
    R_1 + R_2&<  I(U,W_1,W_2,X_1,X_2,V_1,V_2; Y) - I(V_1; S_1, Z_1 | U, W_1, W_2, X_1)- I(V_2; S_2, Z_2 | U, W_1, W_2, X_2).
\end{split}
\end{equation}
With all the codewords correctly decoded, the averaged estimation distortions are also under $D$ as $n \to \infty$. The data rates for the three messages converge to $(R_1, R_2)$ as $B\to \infty$. Furthermore, using the time-sharing strategy, it shows that for any two achievable rate pairs $(R_{11}, R_{21}), (R_{12}, R_{22}) \in \calR_i^{\MAC}(D)$, their convex combination $(R_1, R_2) = (\alpha R_{11} + \bar{\alpha}R_{12}, \alpha R_{21} + \bar{\alpha} R_{22})$ with $0\le \alpha \le 1$ and $\bar{\alpha}=1-\alpha$ is also achievable. At the same time, the attained distortion doesn't change. Hence, the resulting achievable region is the convex hull of $\calR_i^{\MAC}(D)$. This concludes the proof.
\fi



\ifCLASSOPTIONcaptionsoff
  \newpage
\fi



%
\bibliographystyle{IEEEtran}
\bibliography{IEEEabrv,mybib}

%






\end{document}